 \newcommand{\bq}{\begin{equation}}
 \newcommand{\eq}{\end{equation}}
 \newcommand{\bqn}{\begin{eqnarray}}
 \newcommand{\eqn}{\end{eqnarray}}
\NewDocumentCommand{\evalat}{sO{\big}mm}{%
  \IfBooleanTF{#1}
   {\mleft. #3 \mright|_{#4}}
   {#3#2|_{#4}}%
}
\newtheorem{definition}{Definition}
\newtheorem{lemma}{Lemma}
\newtheorem{corollary}{Corollary}
\def\be{\begin{eqnarray}}
\def\ee{\end{eqnarray}}
\renewcommand{\a}{\alpha}
\newcommand{\ltbf}{\Xi}
\newtheorem*{remark}{Remark}
\title{Embedding generalized LTB models in polymerized spherically symmetric spacetimes}
\author[1]{Kristina Giesel,}
\emailAdd{kristina.giesel@gravity.fau.de} 
\affiliation[1]{Department Physik, Institut f\"ur Quantengravitation, Theoretische Physik III, Friedrich-Alexander Universit\"at Erlangen-N\"urnberg, Staudtstr. 7/B2, 91058 Erlangen, Germany}
\author[1]{Hongguang Liu,} 
\emailAdd{hongguang.liu@gravity.fau.de}
\author[1]{Eric Rullit,} 
\emailAdd{eric.rullit@fau.de}
\author[2]{\\ Parampreet Singh,}
\emailAdd{psingh@lsu.edu}
\affiliation[2]{Department of Physics and Astronomy, Louisiana State University, Baton Rouge, LA 70803, USA}
\author[1]{Stefan Andreas Weigl} 
\emailAdd{stefan.weigl@gravity.fau.de}
\abstract{We generalize the existing works on the way (generalized) LTB models can be embedded into polymerized spherically symmetric models in several aspects. We re-examine such an embedding at the classical level and show that a suitable LTB condition can only be treated as a gauge fixing in the non-marginally bound case, while in the marginally bound case it must be considered as an additional first class constraint. A novel aspect of our formalism, based on the effective equations of motion, is to derive compatible dynamics LTB conditions for polymerized models by using holonomy and inverse triad corrections simultaneously, whereas in earlier work these were only considered separately. Further, our formalism allows to derive compatible LTB conditions for a vast of class of polymerized models available in the current literature. Within this broader class of polymerizations there are effective models contained for which the classical LTB condition is a compatible one. Our results show that there exist a class of effective models for which the dynamics decouples completely along the radial direction. It turns out that this subsector is strongly linked to the property that in the temporally gauge fixed model, the algebra of the geometric contribution to the Hamiltonian constraint and the spatial diffeomorphism constraint is closed.  We finally apply the formalism to existing models from the literature and compare our results to the existing ones.}
\begin{document}
\maketitle

\newpage
\begin{table}[ht]
\begin{center} 
    \begin{longtable}{  m{6cm}| m{11cm}  } 
\toprule
\rowcolor{gray!10}  \multicolumn{2}{c}{LIST OF NOTATION} \\
\toprule
 $\ltbf(x)$ & LTB function, the marginally bound case is defined as $\frac{d}{dx}\ltbf(x)=\ltbf^\prime(x)=0$.\\
        \hline
        $f(K_x, K_\phi, E^x, E^\phi)$ &  polymerization function of the term involving extrinsic curvature variables, see eq.  \eqref{eq:defpolyhamiltonianconstraint} for the definition, can also includes inverse triad corrections. $f\to 0$ in the classical limit.\\
        \hline
        $f^{(1)}(K_\phi, E^x), f^{(2)}(K_\phi, E^x)$ & special form of $f$ appearing in the case of a closed constraint algebra  (Lemma \ref{lemma:f1_CC}, \eqref{eq:no_kx_f1}) whose classical limits read $f^{(1)} \to K_{\phi}^2$ and $f^{(2)} \to K_{\phi}$.\\
        \hline
        $h_1(E^x), h_2(E^x)$& inverse triad corrections of the remaining terms involving $\frac{1}{\sqrt{E^x}}$, $\frac{E^x}{\sqrt{E^x}}$ terms, see \eqref{eq:defpolyhamiltonianconstraint}.\\
        \hline
        $\widetilde{K}_x$ & density weight zero combination of the canonical variables $\widetilde{K}_x \coloneqq \frac{K_x}{E^\phi}$ used to parameterize $g_\Delta$, see Lemma \ref{lemma:g1_first}. \\
        \hline
        $g$ & combination of canonical variables  $g=\frac{ \partial_x {E^x}}{2 E^{\phi}}$ used to derive an alternative set of effective EOM (\ref{eq:Kp_ltb_eom}-\ref{eq:ltb_eom}) on the constraint hypersurface $C_x = 0$.\\
        \hline
        $g_c=\ltbf$ & classical LTB condition. \\
        \hline
        $g_\Delta$ & generalized form of the classical LTB condition $g_c = \ltbf$, to study all possible effective LTB conditions, see definition \ref{def:comp_LTB}. \\
        \hline
        $g^{(1)}_\Delta( K_{\phi}, E^x, \ltbf), g^{(2)}_\Delta(\widetilde{K}_x, K_{\phi}, E^x)$ & two classes of solutions for $g_\Delta$ corresponding to the marginally and non-marginally bound case, see Lemma \ref{lemma:g1_first}, \ref{thm:marginal}, \ref{thm:non_marginally bound case }. \\
        \hline
        $\tilde{g}_\Delta$ & special class of solutions such that $g^{(1)}_\Delta( K_{\phi}, E^x, \ltbf) = \tilde{g}_\Delta( K_{\phi}, E^x)\ltbf$, see Corollary \ref{cor:gEx_non_m}.\\
 \bottomrule       
    \end{longtable}
\end{center}
\caption{List of notation used in this article.}
\label{tab:Notation}
\end{table}
\newpage

\section{Introduction}
\label{sec:intro}
The investigation of symmetry reduced models in the context of loop quantum gravity (LQG) has gained much interest in recent years, in particular with focus on cosmological, see \cite{Li:2023dwy} for a recent review on physical implications of loop quantum cosmology (LQC), as well as spherically symmetric models \cite{Ashtekar:2005qt,Modesto:2005zm,Boehmer:2007ket,Chiou:2012pg,Gambini:2013hna,Brahma:2014gca,Dadhich:2015ora,Tibrewala:2013kba,BenAchour:2017ivq,Yonika:2017qgo,DAmbrosio:2020mut,Olmedo:2017lvt,Ashtekar:2018lag,Ashtekar:2018cay,Bojowald:2018xxu,BenAchour:2018khr,Bodendorfer:2019cyv,Alesci:2019pbs,Assanioussi:2019twp,Benitez:2020szx,Gan:2020dkb,Gambini:2020qhx,Husain:2021ojz,Husain:2022gwp,Li:2021snn,Gan:2022mle,Kelly:2020lec,Gambini:2020nsf,Han:2020uhb,Zhang:2021xoa,Munch:2022teq,Lewandowski:2022zce} or \cite{Ashtekar:2023cod} for a recent review on these models. While for both spacetimes symmetry reduced quantum models exist \cite{Ashtekar:2006rx,Ashtekar:2006uz,Ashtekar:2006wn,Bojowald:2004af,Bojowald:2004ag,Bojowald:2005cb,Corichi:2015xia,Gambini:2022hxr}, in general their quantum dynamics are still too complicated to cope with and therefore 
a lot of work in the literature has focused on so-called effective models for these spacetimes in the framework of LQC and loop quantum black holes respectively. A common feature of these effective models is that the holonomy operators of full LQG are replaced by so-called polymerized functions of connection and inverse volume operators by so-called inverse triad corrections. These two modifications capture the quantum geometric effects from full LQG in the loop inspired quantization of these symmetry reduced models with the main result that singularities in these spacetimes are resolved and in the cosmological models one finds a big bounce replacing the big bang \cite{Ashtekar:2006rx,Ashtekar:2006wn,Ashtekar:2011ni}.
\footnote{Here we should note that while the inverse volume corrections are generally found to be of little consequence for singularity resolution in models which are non-compact or without an intrinsic curvature \cite{Ashtekar:2006wn}, they both descend from LQG and the inverse volume modifications can be potentially important in models with an intrinsic curvature. See for eg. \cite{Motaharfar:2022pjp,Motaharfar:2023gpp} where their role in the initial conditions near the to-be classical singularity is discussed.}  In this sense the effective models can be understood as an intermediate step between the classical and the full quantum model in which already some of the features the quantum model is expected to have can be investigated in a simpler setup. Recently also effective Lema\^itre-Tolman-Bondi (LTB) models have been analyzed to describe the quantum gravity dust collapse \cite{Bojowald:2008ja,Bojowald:2009ih,Bambi:2013caa,Kelly:2020lec, BenAchour:2020bdt, Munch:2020czs,Husain:2021ojz,Husain:2022gwp, Giesel:2022rxi,Bobula:2023kbo,Fazzini:2023scu}. At the classical level LTB models \cite{lematre,tolman,bondi} can be embedded into a spherically symmetric spacetime with dust by solving the spatial diffeomorphism constraint together with an additional so-called LTB condition. This imposes further conditions on the independent variables of spherically symmetric spacetimes yielding a simplified set of dynamical equations and removes some of the gauge freedom involved. Compared to the classical spherically symmetric models, LTB models are also attractive because of their simplifications, as the corresponding quantization or effective models are easier to handle.
~\\
~\\
In the context of effective models, an important question to ask is to what extent the reduction from spherically symmetric to LTB models, which is possible in the classical theory, can also be carried out at the level of effective models when quantum geometric corrections are included. In a broader context, this is related to the question to what extent the classical reduction from spherically symmetric model to LTB and the subsequent quantization commutes with first quantizing the spherically symmetric model and then performing the reduction to the LTB sector. A seminal analysis can be found in \cite{Bojowald:2008ja} for the marginally bound LTB model and for the non-marginally bound model in \cite{Bojowald:2009ih} in the framework of Ashtekar-Barbero variables. The strategy followed in both cases is to formulate a so-called LTB condition which relates the two triad variables present in spherically symmetric models and require it to be stable under the effective dynamics. For the models considered in \cite{Bojowald:2008ja,Bojowald:2009ih} this necessarily leads to a modification of the classical LTB condition and in turn to a modified LTB metric. 
~\\
~\\
While these models have been studied in the LQG community for over an decade, our  manuscript aims to revisit the assumptions and analysis in \cite{Bojowald:2008ja,Bojowald:2009ih}  and improve and generalize it 
in several aspects: First we will re-examine the classical case and show that, in contrast to \cite{Bojowald:2008ja}, for the marginally bound case the LTB condition cannot be treated as a gauge-fixing condition but needs to implemented as an additional first class constraint. Second, for the non-marginally bound case we show that the corresponding Dirac bracket algebra selects a preferred pair of canonical variables as far as the latter quantization of the model is concerned. Third, we move on to effective models with a much broader class of polymerizations than the work in \cite{Bojowald:2008ja,Bojowald:2009ih}. Given the corresponding effective dynamics of a chosen polymerization, we then look for a compatible LTB condition. While this is done in \cite{Bojowald:2008ja,Bojowald:2009ih} by using the effective Hamiltonian, here we perform our generalized analysis at the level of the equations of motion. The reason for choosing this approach is that the equations have a much simpler structure thus algebraic manipulations and numerical analysis are less complicated. Using this approach in contrast to \cite{Bojowald:2008ja,Bojowald:2009ih} we are able to determine compatible LTB conditions for models in which holonomy and inverse triad corrections are present simultaneously. Furthermore, our results show that some of the results derived in  \cite{Bojowald:2008ja,Bojowald:2009ih} need to be
improved in the way how the contribution of the spin connection is considered as well as whether the LTB condition can be used as a gauge fixing condition. By doing so we obtain partly different conclusions to \cite{Bojowald:2008ja,Bojowald:2009ih}. Another example where we also extend the results of \cite{Bojowald:2008ja,Bojowald:2009ih} is that the analysis used here allows polymerized models that have the classical LTB condition as a compatible one, whereas for the models in  \cite{Bojowald:2008ja,Bojowald:2009ih} the LTB condition of the effective model is always non-trivially differing from the classical one.  
~\\
~\\
Another aspect in our investigation is that we also consider 
constraints on the possible polymerizations that come from the requirement that the classical algebra of the spatial diffeomorphism constraint and the geometric contribution of the Hamiltonian constraint agrees with its polymerized counterpart.
Note that this requirement is not the same as asking for the algebra of the total constraints to close. The latter is usually understood as ensuring the covariance in the Hamiltonian formalism. In classical models with dust we can consider a temporal gauge fixing with respect to dust time, see for instance \cite{Giesel:2020raf,Husain:2011tm} for models in quantum cosmology and \cite{Kuchar:1991pq,Kuchar:1995xn} for models in full GR or LQG \cite{Giesel:2007wn} where this has be done. Working in such a partially gauge fixed models at the classical level the fact that the total Hamiltonian constraints are first class carries over to the property that the geometric contribution to the Hamiltonian constraint Poisson commute up to the spatial diffeomorphism. This means that the spatial diffeomorphism constraint and the geometric contribution of the Hamiltonian constraint form a closed algebra. If we demand this property also to hold for the effective model, then it turns out to be exactly the above mentioned assumption for the algebra. The question of conditions on the algebra can also be discussed in the context of mimetic gravity models.  These are covariant models which involve next to the
metric, a scalar field and a Lagrange multiplier that enforces the mimetic condition, together
with a mimetic potential, that is a term in the action that involves higher derivative couplings between gravity and the scalar field. As shown in \cite{Langlois:2017hdf,Bodendorfer:2017bjt,BenAchour:2017ivq,deHaro:2018hiq,Bodendorfer:2018ptp,Han:2022rsx},  a choice of polymerization in a given effective model can be associated with the choice of a particular mimetic model, even if this correspondence is not unique. 
If we choose a temporal gauge fixing with respect to the involved scalar constraint, then there exist models for which the Hamiltonian constraint deparametrizes and for which it does not. As we can link the choice of polymerization to a a choice of the mimetic potential we can formulate the condition on the algebra in the context of mimetic models as follows: The  condition on the algebra that is satisfied in the classical theory selects only those polymerization that correspond to mimetic models for which the Hamiltonian constraint deparametrizes. If we drop that assumption we also allow polymerizations corresponding to mimetic models where no deparametrization occurs. Nevertheless these models are intrinsically consistent covariant models, so we expect covariance to be implemented also in the Hamiltonian formulation. What we relax is that once we choose a temporal gauge fixing the quantities that become the physical Hamiltonian densities once also the diffeomorphism constraint is reduced do not commute in the effective model, although they do in the classical theory.
~\\
~\\
On the basis of the results in this article we have applied the developed formalism to two further cases which are discussed in two separated articles \cite{GenBirkhoff} and \cite{noShocksPaper}. The first one deals with the vacuum case for which we can formulate a Birkhoff-like theorem in the context of effective models. The work in \cite{noShocksPaper} applies our results to the dust collapse in effective models with a particular focus on the questions under which assumptions shock solutions exists and what we can learn about the properties of the effective model if we consider different choices of coordinates.
~\\
~\\
The paper is structured as follows: In table \ref{tab:Notation} we provide a list of the notation used in this article that will be particularly relevant for section \ref{sec:EffMod}. 
After the introduction in section \ref{sec:intro}, we discuss the embedding of LTB models into spherically symmetric models in the classical theory in section \ref{sec:ClassMod}, where we discuss the marginally and non-marginally bound models separately in subsections \ref{sec:ClassNonMarg} and \ref{sec:ClassMarg}. Because we extend the analysis on the LTB condition in \cite{Bojowald:2008ja,Bojowald:2009ih}, we discuss in section \ref{sec:CompBojowald} in detail how our formalism compares and differs from the strategy followed in \cite{Bojowald:2008ja,Bojowald:2009ih} and in particular what our analysis generalizes. Guided by the results in the classical theory, section \ref{sec:EffMod} presents the derivation of compatible LTB conditions in effective models. We start in section \ref{sec:GenRemarks} with a discussion of what kind of properties we want the effective models under consideration to satisfy and in addition give a brief summary of the main results of section \ref{sec:EffMod} for the benefit of the reader before we discuss the detailed analysis. In section \ref{sec:HolCorrections} we investigate what kind of restrictions we obtain on the polymerization if we require that the algebra of the effective geometric contribution of the Hamiltonian constraint and the spatial diffeomorphism constraint form a closed algebra. This selects a certain class of effective models covered by Lemma \ref{lemma:f1_CC}. The derivation of compatible LTB conditions is presented in section \ref{sec:LTBcondition}. Here we do not only consider effective models covered by Lemma \ref{lemma:f1_CC} but also effective models with more generic polymerizations. As two applications of our formalism we discuss in section \ref{sec:examples} the models considered in \cite{Bojowald:2008ja,Bojowald:2009ih} in subsection \ref{sec:example_1} and the model from \cite{Tibrewala:2012xb} in subsection \ref{sec:example_2}. The analysis in subsection \ref{sec:example_1} allows to discuss in detail where the similarities and differences are compared to the results in \cite{Bojowald:2008ja,Bojowald:2009ih} are. The model in subsection \ref{sec:example_2} is an example of the models that fall into the class of models of Lemma \ref{lemma:f1_CC} and is also covered by Corollary \ref{cor:C_and_dynamics}. Finally we summarize and conclude in section \ref{sec:Concl}.
\section{LTB conditions in  spherically symmetric models: Classical case}
\label{sec:ClassMod}
In this section we briefly review the restriction from classical spherically symmetric models to LTB models. For this purpose a so-called LTB condition is introduced which restricts the spherically symmetric sector to the LTB one. Early works on LTB conditions in the context of loop quantum gravity can be found in \cite{Bojowald:2008ja,Bojowald:2009ih}, where two LTB conditions were introduced. 
We will discuss the marginally and non-marginally bound case separately because the way LTB conditions can be implemented differs in the two cases and to the knowledge of the authors this has not been considered in detail in the literature yet. Finally, we comment on the differences to and some open issues in the results in \cite{Bojowald:2008ja,Bojowald:2009ih}.

\noindent We work with the Ashtekar-Barbero variables $(A_a^j, E^a_j)$  in spherical symmetry and choose  the spatial manifold as $\mathbb{R}\times S^2$. We consider the Gaussian dust model \cite{Kuchar:1990vy} in spherical symmetry and denote the degrees of freedom by $(T,P_T, S,P_S)$. In this symmetry reduced case after implementing the Gau\ss{} constraint they have the following form
\begin{eqnarray*}
A_a^j \tau_j \mathrm{~d} X^a & = & 2\beta K_x(x) \tau_1 \mathrm{~d} x+\left(\beta K_\phi(x) \tau_2+\frac{\partial_x E^x(x)}{2E^\phi(x)} \tau_3\right) \mathrm{d} \theta \\
&&+\left(\beta K_\phi(x) \tau_3-\frac{\partial_x E^x(x)}{2E^\phi(x)} \tau_2\right) \sin (\theta) \mathrm{d} \phi+\cos (\theta) \tau_1 \mathrm{~d} \phi \\
E_a^j \tau^j \frac{\partial}{\partial X^a} & = &E^x(x) \sin (\theta) \tau_1 \partial_x+\left(E^\phi(x) \tau_2\right) \sin (\theta) \partial_\theta+\left(E^\phi(x) \tau_3\right) \partial_{\phi}\,,  
\end{eqnarray*}
where $X^a=(x,\theta,\phi)$, $a=1,2,3$ denote spherical coordinates, $\beta$ the Barbero-Immirzi parameter and $\tau_j=-\frac{1}{2}\sigma_j$ with $\sigma_j$ being the Pauli matrices. The phase space which is already partially gauge fixed with respect to the Gau\ss{} constraint includes $\left(K_x(x), E^x(x)\right)$ and $\left(K_{\phi}(x), E^{\phi}(x)\right)$ and the dust sector and their non-vanishing Poisson brackets read
\begin{eqnarray*}
 \{K_x(x), E^x(y)\} &=& G \delta(x,y)\quad  \{K_\phi(x), E^\phi(y)\} =G \delta(x,y),\\
 \{T(x),P_T(y)\}&=&\delta(x,y), \quad \{S(x),P_S(y)\}=\delta(x,y)
\end{eqnarray*}
with $G$ being Newton's constant. The metric in terms of these variables reads
\be 
\label{eq:metricSphSymm}
\mathrm{d} s^2 = -N(x,t)^2 dt^2 + \frac{(E^{\phi})^2}{\abs{E^x}} (dx + N^x dt)^2 + E^x d \Omega^2\,.
\ee 
The total classical Hamiltonian constraint $ C^{\rm tot}$ is given by
\begin{eqnarray*}
 C^{\rm tot}(x) &=& C(x)+ C^{\rm dust}(x)
 \end{eqnarray*}
 with
 \begin{eqnarray}
 \label{eq:Cdust_C}
     C^{\rm dust}(x)&=&\qty(P_T \sqrt{1+\frac{\left|E^x\right|}{\left(E^{\phi}\right)^2}\left(T^{\prime}\right)^2}+\left(\frac{\left|E^x\right|}{\left(E^{\phi}\right)^2} T^{\prime} P_S S^{\prime}\right)\left(1+\frac{\left|E^x\right|}{\left(E^{\phi}\right)^2}\left(T^{\prime}\right)^2\right)^{-\frac{1}{2}})(x)
\end{eqnarray}
and the  geometric contribution to the Hamiltonian constraint $C$ has the following form
\be\label{eq:C_c}
C(x)=\frac{1}{2 G}\frac{E^\phi}{\sqrt{{{E^x}}}}\left[-E^{x}\qty(  \frac{4 K_x K_{\phi}}{E^{\phi}}+ \frac{K_{\phi}^2}{E^x}  ) + \qty(\frac{  {{E^x}}'}{2{{E^{\phi}}} })^2 - 1 + 2\frac{E^x}{E^\phi}\qty(\frac{   {{E^x}}'}{2{{E^{\phi}}} })'
\right](x)\,,
\ee
where the prime stands for a derivative with respect to the radial coordinate $x$. Note that we expressed the involved spin connection $\Gamma_\phi$ directly in terms of the triads and its derivatives as $\Gamma_{\phi}=-\frac{\left(E^x\right)^{\prime}}{2 E^{\phi}}$. The total classical spatial diffeomorphism constraint $C^{\rm tot}_x$ reads
\begin{eqnarray*}
C^{\rm tot}_x(x) &=& C_x(x)+ C^{\rm dust}_x(x),
\end{eqnarray*}
with the individual geometric and dust contributions  $C_x$ and $C^{\rm dust}_x$ respectively given by
\be\label{eq:diffeo}
C_x(x)= \frac{1}{G}\left(E^{\phi} {K_{\phi}} ' - K_x {E^x}'\right)(x)\quad{\rm and}\quad C^{\rm dust}_x=\left(P_T T^\prime + P_S S^\prime\right)(x).
\ee

\noindent LTB models are spherically symmetric solutions of Einstein's equation with pressureless dust. The LTB metric is in diagonal form in terms of our chosen variables 
\begin{equation}
\label{eq:MetricLTBDiag}
 \mathrm{d} s^2=-\mathrm{d} t^2+\frac{\left((E^x)^\prime\right)^2}{4|E^x|\Xi^2(x)} \mathrm{d} x^2+|E^x| \mathrm{~d} \Omega^2\,,  
\end{equation}
where we introduced $\Xi(x):=\sqrt{1+{\cal E}(x)}$ with ${\cal E}(x)$ being the usual function involved in the LTB metric that vanishes in the marginally bound case, see for instance \cite{Bojowald:2008ja} where this notation was used. In general also ${\cal E}(x) \equiv \mathrm{const}$ can be chosen in the marginally bound case, since then we can rescale the radial coordinate to set $\widetilde{\Xi}(x) \equiv 1$. Comparing the metrics in \eqref{eq:metricSphSymm} and \eqref{eq:MetricLTBDiag} and following \cite{Bojowald:2008ja} the so-called LTB condition  for the triads denoted as $C_{\rm LTB}$ in the classical theory is given by
\begin{equation}
\label{eq:LTBCondTriads}
 C_{\rm LTB}(x) = \left(|E^x|^\prime -2\Xi(x)E^\phi\right)(x)\, . 
\end{equation}
In the following we will discuss how $C_{\rm LTB}$ can be implemented in the marginally and non-marginally bound case which then in both cases allow to restrict the classical spherically symmetric model to its LTB sector. The analysis in the classical model serves as preparation for the case of effective models discussed in section \ref{sec:EffMod}.

\subsection{The non-marginally bound case}
\label{sec:ClassNonMarg}
As we saw in the previous section by comparing the form of the LTB metric in  \eqref{eq:MetricLTBDiag} with the general spherical symmetric one in \eqref{eq:metricSphSymm}, the LTB condition for the triad in \eqref{eq:LTBCondTriads} is an additional condition to the spherically symmetric model that implements a specific relation between the two triad variables that are independent in the spherically symmetric model. In the non-marginally bound case we can understand the LTB condition in \eqref{eq:LTBCondTriads} as a gauge fixing condition for the spatial diffeomorphism constraint as they form a second class pair. We can further see from the desired form of the lapse function and the shift vector in the LTB model, that is $N = 1$, $N^x=0$, that choosing a dust time gauge is convenient. Therefore,  we will consider the dust time gauge together with the LTB condition from \eqref{eq:LTBCondTriads} as gauge fixing conditions in the spherically symmetric model denoted by $G_T$ and $G_x$ respectively
\begin{eqnarray}
    \label{eq:GFconditionsclassical}
  G_T = (T -t)(x) && \quad G_x = \frac{{E^x}'}{2E^\phi}(x) - \ltbf(x)\,.     
\end{eqnarray}
Note that in the non-marginally bound LTB model $\ltbf' \neq 0$. Further we restrict this analysis to positive radial triad components, allowing us to drop the absolute value,  present in \eqref{eq:LTBCondTriads}, in $G_x$. The stability equations for these gauge fixing conditions are
\begin{align}
    \dv{G_T(x)}{t} &= \poissonbracket{G_T(x)}{H_\mathrm{can}} -1 = \left(N \sqrt{1+\frac{E^x}{\left(E^{\phi}\right)^2}\left(T^{\prime}\right)^2} + N^x\, T'\right)(x) \approx N(x)-1 \overset{!}{=}0\\
    \dv{G_x(x)}{t} &= \poissonbracket{G_x(x)}{H_\mathrm{can}} \approx (\zeta[N] + N^x\, \ltbf')(x)\approx (N^x\, \ltbf')(x)\overset{!}{=}0.
\end{align}
 Note that the functional $\zeta$ appearing in the second stability equation is for constant lapse functions weakly  satisfying
\begin{equation}\label{eq:zetanonmarginallybound}
    \zeta[N](x) \approx  \left(N\frac{\sqrt{E^x}G}{{E^\phi}^2 } C_x\right)(x) \approx -  \left( N \frac{\sqrt{E^x}G}{{E^\phi}^2 }C^{\rm dust}_x\right)(x) \approx - \left(N\frac{\sqrt{E^x}G}{{E^\phi}^2 }\qty(P_S S')\right)(x)\,,
\end{equation}
where we used the temporal gauge $T'\approx t'=0$ in the last weak equivalence. For the full result see equation \ref{eq:appendixpoissonGxC} in appendix \ref{app:ltb_general}. This means we have to constrain either $P_S = 0$ or $S'=0$ in order to get the solution $N^x = 0$ for the shift vector. In both cases this just adds another first class constraint to the model that forces the field $S$ to become constant under the Hamiltonian equations of motion. This holds true for the Brown-Kucha\v{r} \cite{Brown:1994py} dust model in an analogous way. In general we have to constrain the matter contribution to the diffeomorphism constraint without the time reference field (since this is already weakly vanishing due to the time gauge) to vanish. In case of dust models which incorporate only one matter field, like non-rotational dust, this additional first class constraint is absent. Hence, the final number of physical degrees of freedom agrees for dust models with one and two dust fields once the LTB condition and the temporal gauge fixing have been implemented.
\newline
We can also see that the non-marginally bound condition, i.e. $\ltbf' \neq 0$, is crucial as otherwise $G_x$ cannot be treated as a gauge fixing condition for the diffeomorphism constraint. Next we want to investigate the dynamical structure of the reduced system by computing the Dirac brackets with respect to the  system of second class constraints $C^J := (G_T, G_x, C^\mathrm{tot}, C_x^\mathrm{tot})^{T}$. The non-vanishing Dirac brackets of the elementary gravitational phase space variables are given by
\begin{align}
\label{eq:DBGravPart}
\poissonbracket{K_x(x)}{E^x(y)}_D &= G \qty(1 + \frac{\ltbf}{\ltbf'}(y)\pdv{y})\delta(x,y), && \poissonbracket{K_\phi(x)}{E^\phi(y)}_D = G \qty(1 + \frac{\ltbf}{\ltbf'}(x)\pdv{x})\delta(x,y)\\ \poissonbracket{K_x(x)}{K_\phi(y)}_D &= -G^2 \frac{C_x^{\mathrm{dust}}}{2 {E^\phi}^2 \ltbf'}\pdv{y}\delta(x,y), && \poissonbracket{E^x(x)}{E^\phi(y)}_D = 0 \\ \poissonbracket{K_x(x)}{E^\phi(y)}_D &= \frac{G}{2}\pdv{x}\qty( \frac{1}{\ltbf'(x)}\pdv{x}\delta(x,y)), && \poissonbracket{K_\phi(x)}{E^x(y)}_D = -2 G \,\frac{\ltbf^2}{\ltbf'}(x)\delta(x,y)\,.
\end{align}
More details of the computations are presented in appendix \ref{appendix:Diracbrackets_classical}. As we can see, the results of these Poisson brackets partly involve phase space variables as well as the LTB function and  derivatives acting on the delta function. Note that in case of only one (temporal) reference field or in the case that we look at a system where the matter part (without the temporal reference field) of the diffeomorphism constraint is already implemented as a primary constraint, the bracket $\poissonbracket{K_x(x)}{K_\phi(y)}$ vanishes weakly. The set of variables $(K_\phi,E^x)$ is somehow special as it is the only pair of variables for which the result of the Dirac bracket involves no derivatives acting on the delta function. Therefore, for this subset of variables we can absorb the terms involving the LTB function with a suitable canonical transformation into a redefinition of the elementary canonical variables which then satisfy standard canonical Poisson brackets. The latter is important for a later quantization of the model. In turn, for any other chosen set of canonical variables a quantization of the gauge fixed model is more complicated because we need to find a representation that respects the algebra of the Dirac brackets.
Since the result of this Poisson bracket involves the LTB function here, we discuss in \eqref{eq:Diracbracket_simplified}  how this result generalizes when we consider it in the context of effective models.
~\\
~\\
To conclude this section we want to derive the corresponding LTB Hamiltonian for the chosen gauge fixings. For the final choice of independent elementary variables there
are multiple reductions possible, since the constraints can be used to eliminate appearing phase space variable for any given choice at the classical level. As we aim at analyzing the corresponding effective model later on, taking the aforementioned algebra of Dirac brackets into account, choosing the canonical pair $\qty(K_\phi, E^x)$ is especially convenient as far as the corresponding quantum model is concerned. From the LTB condition and diffeomorphism constraint in the case that $P_S S' \equiv 0$, we can deduce
\begin{align}\label{eq:EandKLTBreductionnonmarginal}
    E^\phi(x) \approx\frac{(E^x)'}{2\ltbf}(x), && K_x \approx \frac{(K_\phi)'}{2\ltbf}(x)\,.
\end{align}
Using these relations in addition to the temporal gauge fixing condition and solution for the Lagrange multipliers we can write the corresponding, partially gauge fixed, canonical Hamiltonian as
\begin{equation}
\label{eq:PGFHcanClass}
    H_\mathrm{can} \approx \int \mathrm{d}x\, C^\mathrm{tot}(x) \approx \int\mathrm{d}x\, \qty(\frac{1}{2G}\frac{\ltbf'}{\ltbf^2}\sqrt{E^x}\qty(\ltbf^2 - K_\phi^2 - 1)  + P_T)(x)\,.
\end{equation}
 The equations of motion of this physical Hamiltonian are given by
 \begin{align}
     \Dot{{E^x}}(x) &= \dv{E^x}{t}\,(x) =  -2 K_{\phi}\sqrt{E^x}(x) \\
     \Dot{{K_\phi}}(x) &= \frac{1}{2\sqrt{E^x}}\qty(K_\phi^2 + 1 - \ltbf^2)(x)\,.
 \end{align}
\subsection{The marginally bound case}
\label{sec:ClassMarg}
The marginally bound case is defined by $\ltbf' = 0$ from which we will set without loss of generality $\ltbf(x)\equiv1$. This has an impact on the constraint algebra since now we obtain for $G_x$ in \eqref{eq:GFconditionsclassical} using equation \eqref{eq:appendixpoissonGxdiffeo}
\begin{eqnarray*}
 \poissonbracket{G_x(x)}{C_x(y)}\approx 0 .
\end{eqnarray*}
Consequently, the LTB condition  $C_{\rm LTB}$ which is weakly equivalent to $G_x$ can not be used as gauge fixing condition for the the spatial diffeomorphism constraint $C_x$ any longer. However, we can consider a spherically symmetric model in which $C_{\rm LTB}$ is treated as an additional constraint in the system by introducing an additional Lagrange multiplier that enforces this constraint. In this case the constraint analysis results in only non-vanishing Poisson bracket between constraints as: 
\begin{equation}
    \poissonbracket{G_x(x)}{C^\mathrm{tot}\qty[N]} \approx \qty(N \frac{\sqrt{E^x}G}{{E^\phi}^2 } C_x + N'K_\phi \frac{\sqrt{E^x}}{ E^\phi} )(x).
\end{equation}
Transferring to the partially gauge fixed system where the dust time gauge is implemented, we can see that only the first term survives. If we consider for instance non-rotational dust, then only the dust field $T$ is present and in the time gauge chosen $C_x$ weakly vanishes and thus $G_x$ is a first class constraint. This is an issue since in comparison with the non-marginally bound case here the diffeomorphism constraint and LTB condition are two first class constraints instead of being a pair of second class constraints and thus no physical degrees of freedom are left after reduction of constraints. Note that this problem can not be circumvented by considering Gaussian or Brown-Kucha\v{r} dust, where, as can be seen in \eqref{eq:Cdust_C} and \eqref{eq:diffeo}, another dust field $S$ is present. The reason is that in this case a new constraint arises in the constraint analysis, namely $C_x \approx -C_x^\mathrm{dust} \approx -P_S S'\approx 0$. It turns out that this constraint is first class as well, and therefore it reduces the additional dust degrees of freedom, which means which means that finally there are no physical degrees of freedom in these systems either. As can be seen from \eqref{eq:PGFHcanClass} the geometric part of the partially gauge fixed canonical Hamiltonian vanishes in the marginally bound case. In the following we discuss in more detail the close relation between the marginally bound case and the stationary and vacuum solution in classical theory as well as in the framework of effective models in \cite{GenBirkhoff}.
~\\
~\\
\subsection{Discussion on and comparison to the results of Bojowald et al. in \texorpdfstring{\cite{Bojowald:2008ja,Bojowald:2009ih}}{}}
\label{sec:CompBojowald}
Our investigation from the last two subsections show that the constraint analysis and the implementation of an LTB condition in the classical theory as well as in the polymerized effective model is more subtle than discussed in the work in \cite{Bojowald:2008ja,Bojowald:2009ih}. 
Here we will list a few aspects that underlie this using the results obtained so far and some points that address open issues if we go beyond the classical theory and work with effective models.
\begin{itemize}
\item{\underline{Constraint analysis and LTB condition:}}
~\\
The first observation in the classical theory is the way one must deal with the LTB condition in \eqref{eq:LTBCondTriads} is different for the marginally and non-marginally bound case. In the latter case it builds a pair of second class constraints  with the diffeomorphism constraint, whereas in the first case this is not given.  Consequently, for the marginally bound case, when in addition one works in dust comoving coordinates, one can only add this LTB condition as an additional first class constraint to the spherically symmetric system. However, as our constraint analysis shows in this case this yields a system that possesses no physical degrees of freedom. To the understanding of the authors this has not been considered in \cite{Bojowald:2008ja} where for the marginally bound case the LTB conditions is treated as a second class constraint.
\item{\underline{One versus two LTB conditions:}}
~\\ 
In case we use the LTB condition  in \eqref{eq:LTBCondTriads} in the non-marginally bound case as a gauge fixing condition for the diffeomorphism constraint and also consider comoving dust coordinates, then as discussed above, we have two gauge fixing conditions -- one for the Hamiltonian and one for the diffeomorphism constraint. In \cite{{Bojowald:2008ja,Bojowald:2009ih}} it is argued that from the stability of the LTB condition  in \eqref{eq:LTBCondTriads} follows a second LTB condition for the extrinsic curvature variables which is identical to the second equation in \eqref{eq:EandKLTBreductionnonmarginal}.  Now if we consider the two gauge fixing conditions, here shown in \eqref{eq:GFconditionsclassical}, then the stability of both is ensured by choosing the Lagrange multipliers $N,N^x$ appropriately. As our analysis shows, the stability of both relate the shift vector $N^x$ to the dust contribution of the diffeomorphism constraint $C_x^{\rm dust}$. Now in case we have one dust field in the model and use comoving dust coordinates we have $C_x^{\rm dust}=PT'\approx 0 \approx  -(K^\prime_\phi -2\ltbf K_x)$ without any further conditions, like a second LTB condition, to require. In other words, implementing the pair of second class constraints encoded in  LTB condition in \eqref{eq:LTBCondTriads} and diffeomorphism constraint to remove two canonical variables  is equivalent to implementing the two LTB conditions considered in \cite{Bojowald:2008ja,Bojowald:2009ih}. These are the one for triads shown in \eqref{eq:LTBCondTriads} and the second equation for the extrinsic curvature shown in \eqref{eq:EandKLTBreductionnonmarginal}, where one has to work with the Dirac brackets in \eqref{eq:DBGravPart} after the implementation of these constraints.
 
This is completely different from treating the second LTB condition as an additional constraint/gauge fixing condition.  It seems that in \cite{Bojowald:2008ja,Bojowald:2009ih} a model with only one dust field is considered because there is no contribution from the dust in the diffeomorphism constraint. Thus, to the understanding of the authors the second LTB condition introduced in \cite{Bojowald:2008ja,Bojowald:2009ih} is automatically satisfied.
\item{\underline{Spin connection in stability analysis of the LTB condition:}}
~\\
As shown by our analysis in the marginally bound model with a comoving dust frame the LTB condition for the triads in \eqref{eq:LTBCondTriads} is a first class constraint in the partially gauge fixed theory where the lapse $N=1$. Hence, in the stability analysis of the LTB condition as performed  in \cite{Bojowald:2008ja} the spin connection can only be set to $\Gamma_\phi=-1$, the value it takes in the LTB sector, right from the beginning as it seems to be done in \cite{Bojowald:2008ja}, if one works with $N=1$ or at least a constant lapse function in the classical case. We see in the next section that taking this point seriously into account is crucial for the analogue stability analysis in the effective models. In the non-marginally bound case the LTB condition and the diffeomorphism constraint build a second class constraint pair. These constraints can be inserted into the Hamiltonian constraint provided that we use the corresponding Dirac bracket instead of the Poisson bracket.  As \eqref{eq:DBGravPart} shows the algebra of the Dirac brackets is more complicated and differs from the standard canonical commutation relations (CCR) the corresponding Poisson brackets satisfy. It is not obvious to the authors that this has been taken into account in \cite{Bojowald:2009ih} when they discuss that their defined reduced constraint preserves the LTB condition(s).

\item{\underline{Choice of variables in the LTB sector:}}
~\\
Related to the last point, the algebra of the Dirac brackets in the gravitational sector shown in \eqref{eq:DBGravPart} indicates that there is a preferred choice of phase space coordinates when we restrict the spherically symmetric sector to the LTB sector, namely to choose $(K_\phi,E^x)$ as the elementary canonical pair in the LTB sector. The reason for this  choice is that it is the only pair in the gravitational sector that satisfies, after a suitable canonical transformation, standard CCR for which the usual quantization procedure can be applied. For other combinations one first needs to show that corresponding representation for the corresponding operators exits. This becomes particularly relevant in the case of the effective models which should mimic the underlying quantum theory to some extend and and are usually based on the assumption that one works with a quantization of the standard (symmetry-reduced) holonomy-flux algebra and not some more complicated algebra.
\end{itemize}
As in \cite{Bojowald:2008ja,Bojowald:2009ih} the main purpose of our work is to consider an LTB condition beyond the classical theory which is consistent with the effective dynamics and we will discuss such an analysis in detail in section \ref{sec:EffMod}.  Some of the motivation how we perform such investigation comes from the following open questions we want to discuss in the context of the  work in \cite{Bojowald:2008ja,Bojowald:2009ih}:

\begin{itemize}
    \item{\underline{Effective LTB conditions:}}
~\\ 
According to the results in \cite{Bojowald:2008ja,Bojowald:2009ih} the classical LTB conditions they introduce are no longer stable with respect to their defined effective dynamics which is obtained by either introducing holonomy corrections or inverse triad correction in the dynamics. Their strategy to deal with this fact is that they modify the LTB condition for the triads in \eqref{eq:LTBCondTriads} as well as the equation for the extrinsic curvature in \eqref{eq:EandKLTBreductionnonmarginal}, that they call second LTB condition in their work and those modifications are restricted by the stability requirement of the LTB conditions. As we have seen that the way how the constraints are treated differs for the non-marginally and marginally bound case. We aim at carefully analyzing the stability of LTB condition(s) in effective models taking this insight into account and understand if and how the results differ from the existing results in \cite{Bojowald:2008ja,Bojowald:2009ih}.
\item{\underline{The role of the spin connection in the stability analysis:}}
~\\
The role of the spin connections becomes  more involved when the effective models are considered, which are investigated in section 4 in \cite{Bojowald:2008ja} and section 3 in \cite{Bojowald:2009ih}. In the case of the marginally bound model in \cite{Bojowald:2008ja} the  spin connection in the dynamics is set to $\Gamma_\phi=-1$ right from the start of the analysis. However, then the LTB condition in \eqref{eq:LTBCondTriads}  is modified such that even in the effective LTB sector this is no longer valid but we have  either $\Gamma_\phi \approx -f(E^x)$ or $\Gamma_\phi \approx g(K_\phi)$, depending on the chosen modification in \cite{Bojowald:2008ja}. However, the generator of the dynamics is still defined with a spin connection $\Gamma_\phi=-1$. Although it is discussed in section 4.2 of \cite{Bojowald:2008ja} that those corrections need to be included for consistency but in general not so easy to compute a strategy how to deal with this is not given in \cite{Bojowald:2008ja}. Hence, whether a simpler form of the LTB condition in effective models exist is an interesting question from this point of view.
Likewise in the non-marginally bound case in \cite{Bojowald:2009ih} in section 3.1 and 3.2 the spin connection is also fixed to its classical LTB value at the start of the stability analysis. In section 3.3 a modified spin connection is considered as well as the requirement that the resulting Hamiltonian constraints should commute. Here, one would need to consider the corresponding Dirac brackets associated with these modified LTB conditions in the stability analysis if we insert the second class constraints right from the beginning. 

\item{\underline{The role of an effective LTB metric:}}
~\\
With the insights discussed above we wish to analyze in particular the question whether a polymerization of the dynamics in the effective models necessarily leads to an effective LTB metric which merges only in the classical limit into the classical LTB metric. The results in \cite{Bojowald:2008ja,Bojowald:2009ih} indicate this because in their effective models the stable version of the LTB condition in \eqref{eq:LTBCondTriads} 
necessarily involves a modification which carries over to the form of the corresponding metric. As our analysis in the next section will show, there exists also effective models with a polymerized dynamics that are still consistent with the classical LTB metric.
\end{itemize}
 
\section{LTB conditions in  spherically symmetric models: Effective model case}
\label{sec:EffMod}
The discussion in the former section showed that the way how the LTB condition is implemented is crucial in order to perform a consistent reductions from the spherically symmetric sector to the LTB sector. If one now wishes to perform a similar analysis in effective models, further complications arise, such as the fact that generic effective constraints may not satisfy the analog of the classical constraint algebra and how to deal with the freedom to choose a polymerization for the constraint and the LTB condition. Therefore, we want to do the analysis of the effective models systematically, so that we do not just study a particular effective model, but come up with a list of properties that the effective models must satisfy, and then discuss examples for models which fall into this class. 

\subsection{Generic remarks for the effective models under consideration}
\label{sec:GenRemarks}
Before we  discuss specific derivations and applications in the next section we wish to start with a few remarks on how we will perform the stability analysis of the LTB condition and what class of effective models we will consider.
\begin{itemize}
    \item {\underline{Function preserving polymerization:}}
    ~\\
    We choose this requirement to limit the huge freedom one has by introducing a polymerization function and to stay as close as possible to the usual quantization procedure.
\item{\underline{Closure of the algebra of effective constraints:}}
    ~\\
    In accordance with GR the starting point of all effective models we want to consider should be a general covariant theory. In the Hamiltonian formulation of GR this covariance is encoded in the algebra of four first class constraints -- the hypersurface deformation algebra. They are the generators of gauge transformations which are time- and spacelike diffeomorphisms on the phase space. If we choose temporal partial gauge fixing in GR with respect to a dust field, then the dust models considered here have the property that the Hamiltonian constraint deparametrizes. As a consequence, the corresponding physical Hamiltonian densities commute up to the spatial diffeomorphism. If we later quantize such a model, we usually aim at defining the operators corresponding to the physical Hamiltonian density in such a way that they also commute up to the spatial diffeomorphism in the quantum theory. Often this is even taken as a guiding principle in order to restrict possible ambiguities in the quantization step. Now if we consider a given polymerization we would like to consider only those which still allow covariance to be implemented in the polymerized model.\footnote{In LQC, another way to explore covariance is via a covariant action \cite{Olmo:2008nf,Delhom:2023xxp}.  } Recently, this has been discussed in the context of (extended) mimetic gravity models \cite{Chamseddine:2013kea,Sebastiani:2016ras,Takahashi:2017pje,Langlois:2017mxy}. These are covariant models which involve next to the metric, a scalar field and a Lagrange multiplier that enforces the mimetic condition, together with a mimetic potential, that is a term in the action that involves higher derivative couplings between gravity and the scalar field. The (extended) mimetic gravity theory belongs to the family of Degenerate Higher-Order
Scalar-Tensor (DHOST) theories \cite{BenAchour:2016fzp,Langlois:2017mxy}, which propagates (up to) only three
degrees of freedom: one scalar and two gravity tensorial modes. In \cite{Langlois:2017hdf,Bodendorfer:2017bjt,BenAchour:2017ivq,deHaro:2018hiq,Bodendorfer:2018ptp,Han:2022rsx} it was shown that a specific choice of the mimetic potential can be related to a certain choice of the polymerization in effective models when a temporal gauge fixing with respect to the scalar field is considered. As all these models are derived from a covariant Lagrangian, due to the mimetic condition we expect that on the one hand the models involve second class constraints and on the other hand once we partially reduce with respect to these second class constraints we end up with a model involving a Hamiltonian and diffeomorphism constraint that are first class, similar to what happens in the Gaussian \cite{Kuchar:1991pq}  and Brown-Kucha\v{r} dust \cite{Kuchar:1995xn} models and in this sense the covariance is implemented. For the mimetic model originally introduced by Chamseddine and Mukhanov which involves a specific choice of the mimetic potential \cite{Chamseddine:2016uef}, this has been shown in \cite{Langlois:2015skt,BenAchour:2016fzp,Bodendorfer:2017bjt,deCesare:2020got}. For generic mimetic models, although one expects  similar properties, to the knowledge of the authors such a result is not available in the literature yet. When we choose a temporal gauge fixing for such models, the closure of the first class constraint algebra carries over to the condition that the associated Hamiltonian densities either commute up to the spatial diffeomorphism, if the system deparametrizes, or do not but in a controlled way, as we know what the non-vanishing result of their classical Poisson bracket is. If we consider polymerized effective models from the perspective of mimetic models, then for a generic mimetic potential the model will not deparametrize. Consequently, if we consider the corresponding polymerized physical Hamiltonian densities they will not commute up to the spatial diffeomorphism. Therefore, we can understand the requirement that the Hamiltonian densities commute up to the spatial diffeomorphism in this context as follows:  When we polymerize a classical model and require that in the temporally partial gauge fixed model the Hamiltonian densities commute up to the spatial diffeomorphism, then we restrict the possible classes of mimetic potentials and thus the possible choices of polymerization functions, because we only consider mimetic models in which the Hamiltonian constraint deparametrizes. If we drop that requirement we still have an underlying covariant mimetic model. Although the corresponding physical Hamiltonian density will not commute up to the spatial diffeomorphism, this corresponds to an intrinsically consistent mimetic model that however does not deparametrize. Following this perspective, and comparing classical and effective clocks as a simplified models for quantum clocks, we realize that deparametrization might not be preserved if the polymerization is not chosen carefully. In our analysis in section \ref{sec:EffMod} we will consider both cases, polymerization that correspond to a mimetic model that deparametrizes but also polymerizations where this is not the case.

\item{\underline{No polymerization of the spatial diffeomorphism constraint:}}
~\\
In most existing effective models, the spatial diffeomorphism constraint is treated differently from the Hamiltonian constraint as far as polymerization is considered, because unlike the Hamiltonian constraint, they do not consider polymerization for it.  In full LQG, the infinitesimal diffeomorphisms do not exist and one quantizes finite diffeomorphisms. An alternative approach is taken in the context of the master constraint or Brown-Kuchar dust model, in which the squared version of the infinitesimal diffeomorphism contracted with the spatial metric and weighted by a suitable density weight is quantized, which can be implemented as a well-defined operator in LQG. If we derive an appropriate effective model for the latter case, we also expect a polymerization effect in the diffeomorphism constraint in general. In this work, we will not consider such a possible polymerization effect in the diffeomorphism constraint. The reason for this is twofold. First, if we want to work with a first class algebra of effective constraints, the situation simplifies greatly because the classical diffeomorphism constraint has its usual geometric interpretation and transforms an effective polymerized Hamiltonian constraint in the expected way (given the fact that the polymerization did not change the density weight). Second and not completely unrelated to the first point, the geometric part of the diffeomorphism constraint becomes a conserved charge at the level of the physical Hamiltonian once the gauge fixings are implemented. As we will show in section \ref{sec:HolCorrections}, if we want to apply this to the effective model, no polymerization of the diffeomorphism constraint is allowed, since this leads to anomalies as far as the conserved charge is considered. We leave the more general class of models with polymerized diffeomorphism constraints to future work and consider the analysis here as a first step.

\item{\underline{Polymerization of the LTB condition:}}
~\\
We want to consider generic polymerizations of the LTB condition implementing the correct classical limit. In this way inverse triad corrections and holonomy corrections can be treated simultaneously. In \cite{Bojowald:2008ja} inverse triad and holonomy corrections were treated separately for the marginally bound case and modified LTB conditions were studied only for these separate cases. For the non-marginal bound case only inverse triad corrections were considered in \cite{Bojowald:2009ih}. Therefore, our rather general assumption about the polymerization of the LTB condition on the one hand allows us to go beyond the existing results in the literature and on the other hand provides the possibility to look at the existing results from a different angle. Since the second LTB condition used in \cite{Bojowald:2008ja,Bojowald:2009ih} is related to the (modified) LTB condition involving the triads, we do not need any assumptions for this relation, since all the chosen polymerization and inverse triad corrections for the effective Hamiltonian constraint as well as the polymerization of the LTB condition are transferred to the relation involving the extrinsic curvature through the effective dynamics.
\item{\underline{Stability of LTB condition:}}
~\\
We aim at performing the stability analysis at the effective level and therefore any modification of either the effective Hamiltonian or the LTB conditions due to polymerization or inverse triad effects needs to be consistently taking into account.
\end{itemize}
~\\
~\\
Before we  start the systematic analysis of effective models in the section \ref{sec:HolCorrections} and \ref{sec:LTBcondition},  we give a brief summary of what is included in these sections as well as the the main results presented there. For this we also refer to table \ref{tab:Notation} at the beginning of the article where the notation used in our work is listed. First we will analyze the implications of the second assumption, the closure of the algebra. Due to the fact that the diffeomorphism constraint remains the classical one in all effective models (third assumption), we want to consider in the following the only interesting bracket, which is the scalar constraint with itself. The other two brackets already close due to properties of the spatial diffeomorphism constraint. Requiring that the result of the remaining bracket is proportional to the diffeomorphism constraint restricts the form of compatible polymerizations of the scalar constraint. Most prominently, as we will show below,  we are not allowed to polymerize $K^x$, the radial component of the extrinsic curvature.
\\
As the next step we will conduct in section \ref{sec:LTBcondition} the analysis of the existence of a compatible LTB condition in the most general form with respect to the dynamics generated by an effective primary Hamiltonian in the form of LTB metric. As we will see this not only imposes conditions on the polymerization of the LTB condition but also the one of the polymerization in the effective primary Hamiltonian. The system is actually over-constrained for a given Hamiltonian, especially when there is a polymerization of $K^x$ involved. By this we mean that  for a given Hamiltonian -- and thus the resulting effective dynamics -- we might not be able to find a compatible polymerization of the LTB condition which can reduce the set of equations of motion to only two independent first order ones as expected for the LTB sector.
~\\
~\\
The main results of this work can be summarized as the following:
\begin{enumerate}
    \item[1.] The existence of a compatible LTB condition $g_{\Delta}(K_{\phi},E^x,\tilde{K}_x,\ltbf)(t,x) = \frac{\partial_x E^x}{E^{\phi}}(t,x)$ determines the form of polymerization of $K_x$ in the polymerization function $f$ completely. 
    \item[2.]
    In the non-marginally bound case a compatible LTB condition exists only when the involved polymerization functions do not contain a polymerization of $K_x$. 
\end{enumerate}    
The details of 1. and 2. are shown in Lemma \ref{thm:marginal} - \ref{thm:non_marginally bound case } and Corollary \ref{cor:gEx_m} - \ref{cor:gEx_non_m}.
\begin{enumerate}
    \item[3.] A compatible LTB condition ${g}_{\Delta} = \tilde{g}_{\Delta}(E^x) \ltbf$ in both the marginally and non-marginally bound case exists if the polymerization function $f$ is given by the form in Lemma \ref{lemma:f1_CC}. 
  \item[4.] Given a compatible LTB condition, 
  \begin{itemize}
  \item the dynamics is determined by the equations of motion of $K_{\phi},E^x$ in the LTB coordinates.
  \item the dynamics can be generated from a reduced effective Hamiltonian density $C^{\Delta} \sim C(K_{\phi},E^x)$ up to boundary terms in the non-marginally bound case with the Dirac bracket $\{ K_{\phi}, E^x \}_D = - \frac{2 \ltbf^2 \tilde{g}_{\Delta}}{\ltbf'}$, where the Dirac bracket is the one constructed in section \ref{sec:ClassNonMarg}.
  \item 
  the equations of motion for different radial coordinates $x$ decouple completely. 
\end{itemize}  
\end{enumerate}
The details of 4. are given in Corollary \ref{cor:C_and_dynamics}. 
\begin{enumerate}
    \item[5.] We generalize and improve the existing results in \cite{Bojowald:2008ja,Bojowald:2009ih}. First, the formalism introduced here for finding a compatible LTB condition allows to include holonomy and inverse triad corrections at the same time. Second, our results show that some of the results used in \cite{Bojowald:2008ja,Bojowald:2009ih} needs to be improved in the way the contribution of the spin connection are handled and if doing so we obtain partly different conclusions to \cite{Bojowald:2008ja,Bojowald:2009ih}. A further example where we extend the results is  that the analysis used here allows polymerized models that have the classical LTB condition as a compatible one. The detailed comparison of the results from \cite{Bojowald:2008ja,Bojowald:2009ih} and the ones obtained in this work can be found in section \ref{sec:example_1}.
\end{enumerate}
\subsection{Effective models with general holonomy polymerizations and inverse triad corrections}
\label{sec:HolCorrections}
To start the analysis of effective models we would like to include expected modifications to the classical gravitational contribution to the Hamiltonian constraint \eqref{eq:C_c} to capture the holonomy effects from LQG. The holonomy effects are encoded generally in polymerizations (via trigonometric functions), where linear combinations of $K_{\phi},K_{x}$ are polymerized with some function ${\cal P}$: $a K_{\phi} + b K_{x} \to {\cal P}(a K_{\phi} + b K_{x})$. The factors $a$ and $b$ here can be phase space dependent such that it can encode the possible $\bar{\mu}$-scheme, e.g. the one presented in \cite{Ashtekar:2006wn} (see \cite{Han:2022rsx} for its exposition in mimetic gravity). To capture such generic polymerizations to the $K_{\phi},K_{x}$ part appearing in the gravitational contribution to the  Hamiltonian constraint, we introduce the following ansatz 
\be\label{poly_f}
C \to C^{\Delta} : \;\;\frac{4 K_x K_{\phi}}{E^{\phi}} + \frac{K_{\phi}^2}{E^x} \to \left( \frac{4 K_x K_{\phi}}{E^{\phi}} + \frac{K_{\phi}^2}{E^x} \right) (1+f(K_x,K_{\phi},{E^x},{E^{\phi}} )),
\ee 
where the function $f$  can in principle include $K_{\phi},K_{x}$ in $C^\Delta$ to arbitrary combinations of polymerizations. The specific form is introduced such that one can easily recover the classical limit with a vanishing $f$, i.e. $f$ expanded as a polynomials in  the polymerization parameter $\alpha$ will not contain zeroth order terms in $\alpha$. For example, when we have a polymerization which brings $K_{\phi}$ to the form $\frac{\sin(\alpha K_{\phi})}{\alpha}$, the function $f$  is given by
\begin{eqnarray}
   f= \frac{ \frac{4 K_x \sin (\alpha K_{\phi})}{\alpha E^{\phi}} + \frac{\sin(\alpha K_{\phi})^2}{\alpha^2 E^x}}{ \frac{4 K_x K_{\phi}}{E^{\phi}} + \frac{K_{\phi}^2}{E^x} } -1 = \sum_{n = 1}^{\infty} (-1)^n \alpha^{2n} \frac{2 K_{\phi}^{2n} \left( 4^n K_{\phi}E^{\phi} + 4(n+1) K_x E^x  \right)}{ 3 (2n+2)! \left( K_{\phi}E^{\phi} + 4 K_x E^x \right)} .
\end{eqnarray}
Some other examples for the choice of $f$ will be given later in Lemma \ref{lemma:f1_CC} and section \ref{sec:examples}. 
We further introduce the inverse triad corrections to the rest of the Hamiltonian, that is those terms involving no extrinsic curvature, as follows:
\begin{equation}
\label{eq:InvTriadCorr}\frac{1}{\sqrt{E^x}} \to \frac{h_1(E^x)}{\sqrt{E^x}},\quad \frac{E^x}{\sqrt{E^x}}=\sqrt{E^x} \to \sqrt{E^x} h_2(E^x) .
\end{equation}
This results in the following polymerized gravitational contribution to the Hamiltonian constraint
\begin{equation}\label{eq:defpolyhamiltonianconstraint}
    C^{\Delta}(x)= \frac{ E^{\phi}}{2 G \sqrt{{{E^x}}}}\left[ -{(1 + f) E^x} \qty(\frac{4 K_x K_{\phi}}{E^{\phi}} + \frac{K_{\phi}^2}{E^x } ) + h_1 
\qty(\qty(\frac{  {{E^x}}'}{2{{E^{\phi}}} })^2 - 1   )+2\frac{E^x}{E^\phi} h_2 \qty(\frac{   {{E^x}}'}{2{{E^{\phi}}} })'\right](x)\,,
\end{equation}
where all holonomy corrections are encoded in $f$ and the inverse triad corrections of the terms not involving extrinsic curvature variables are encoded in $h_1,h_2$. Note that the function $f(K_x,K_{\phi},{E^x},{E^{\phi}} )$ can also encode inverse triad corrections as well, see for example the model in section \ref{sec:example_1}. This means if there are inverse triad corrections present in terms that include extrinsic curvature variables these will be involved in the polymerizations function $f$ next to possible holonomy corrections, whereas all remaining included inverse triad corrections will be covered by $h_1,h_2$ and by construction these terms do not contain any holonomy corrections. Nevertheless our ansatz will not cover all possible polymerized models, e.g. the model presented in \cite{Alonso-Bardaji:2021yls}. In the classical limit $f$ vanishes, whereas $h_1$ and $h_2$ are equal to 1. Further, the inverse triad modifications are encoded only in $E^x$ terms which is consistent with introducing these modifications only for the compact part of the spatial manifold. Note that for the same reason we do not introduce any modifications in $E^\phi$. In the marginally bound case, the terms proportional to $E^\phi$ drop out from $C^\Delta$ and it is only $E^x$ which contributes. \\

\noindent Here we will work with an effective metric in generalized Gullstrand–Painlevé 
coordinates 
\be \label{eq:metricSphSymm1} 
\mathrm{d}s^2 = - \mathrm{d}t^2 + \frac{(E^{\phi})^2}{E^x} (\mathrm{d}x + N^x \mathrm{d}t)^2 + E^x \mathrm{d} \Omega^2 \, ,
\ee 
where we have already implemented the comoving gauge as a partial temporal gauge fixing. 
As a result, a coordinate transformation on $t$ at the level of the metric is in principle not allowed unless we have a covariant Lagrangian underlying the effective theory. However we still want to keep the physics invariant under infinitesimal coordinate transformations along the radial direction $x$ generated by 
\begin{eqnarray} 
x \to x + \xi(t,x) \, .
\end{eqnarray} 
This implies that we are asking the effective system to be invariant under gauge transformations generated by the classical diffeomorphism constraint $C_x$. As a consequence, after partially gauge fixing with respect to the comoving gauge we will work with the primary Hamiltonian $H_P^{\Delta}$ 
\begin{eqnarray} \label{eq:defprimarypolyHamiltonian}
    H_P^{\Delta}[N^x] = \int \mathrm{d}x \, (C^{\Delta}+ N^x C_x)(x)  \,.
\end{eqnarray}

\noindent In order to have $C_x$ as a gauge symmetry of the model, we need to require that it is stable under the evolution generated by the primary Hamiltonian above. Since the diffeomorphism constraint has its classical form we know $\poissonbracket{C_x[N]}{C_x[M]}= C_x[[N,M]]$ and so the stability requirement reduces to
\begin{eqnarray}
    \poissonbracket{H_P^{\Delta}[N^x=0]}{ C_x(y)}= 0 \, .
\end{eqnarray}
This is equivalent to requiring that the polymerized Hamiltonian density still has density weight one, as we then have
\begin{eqnarray}\label{eq:commute_cx}
    \poissonbracket{C^{\Delta}[N]}{C_x[N^x]} =  C^{\Delta}[N^x \qty(\partial_x N)]\, .
\end{eqnarray}
As a consequence the polymerization functions must have density weight zero \cite{Giesel:2023euq}. In the case of the inverse triad corrections $h_1$ this is already true since it only depends on $E^x$, but for the holonomy corrections $f$ we have to restrict
\begin{eqnarray}\label{eq:f_weight}
    f(K_x,K_{\phi},{E^x},{E^{\phi}} ) = f( \widetilde{K}_x,K_{\phi},{E^x}) \quad{\rm with}\quad \widetilde{K}_x = K_x/{E^{\phi}}\,.
\end{eqnarray}

\noindent In general, as discussed above the chosen polymerization can be linked to some mimetic model. If that mimetic model does not deparametrize, compared to the classical unpolymerized model, the polymerization will introduce an anomaly to the algebra generated by the $C^{\Delta}$ as well. However, as the algebra of $C^{\Delta}$ comes from an intrinsically consistent covariant generalization of GR, in terms of the primary Hamiltonian $H_P^{\Delta}$, this may not introduce any trouble to general covariance as shown in \cite{BenAchour:2017ivq,Han:2022rsx}, but it implies that the Hamiltonian density is not a conserved quantity in that mimetic model. 
\begin{lemma}
\label{lemma:f1_CC}
~\\
Assuming that the polymerized gravitational contribution to the Hamiltonian constraint denoted by $C^{\Delta}$ in \eqref{eq:defpolyhamiltonianconstraint} is, for solutions of the diffeomorphism constraint $C_x$, conserved under the 
    primary Hamiltonian $ H_P^{\Delta}[N^x=0]$, that is
\begin{eqnarray}\label{eq:cdeltaconserved}
    \eval{\poissonbracket{H_P^{\Delta}[N^x=0]}{C^{\Delta}(y)}}_{C_x = 0} = 0 \, ,
\end{eqnarray}
then the most generic form of polymerization function $f$ encoding the holonomy corrections in \eqref{eq:defpolyhamiltonianconstraint} is given by 
\begin{eqnarray}\label{eq:no_kx_f1}
    f = \frac{{f}^{(1)}(K_{\phi},E^x)- {f}^{(2)}(K_{\phi},E^x) K_{\phi}}{(K_{\phi}+4 {E^x} \widetilde{K}_x) K_{\phi}} + \frac{{f}^{(2)}(K_{\phi},E^x)}{ K_{\phi}} -1 \, ,
\end{eqnarray}
with phase space functions ${f}^{(1)},{f}^{(2)}$ satisfying the following partial differential equation
\be\label{final_con_clo}
\frac{h_1 - 2 E^x \partial_{E^x} h_2}{h_2}= \frac{-4 {E^x} \partial_{E^x}{f}^{(2)}+  \partial_{K_{\phi}}{f}^{(1)}}{2 {f}^{(2)}} \eqqcolon \text{Con}_{f},
\ee 
where $h_1,h_2$ denote the inverse triad corrections from \ref{eq:InvTriadCorr}.
\end{lemma}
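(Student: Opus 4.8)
The plan is to evaluate $\{H_P^\Delta[N^x=0],C^\Delta(y)\}=\{\int\mathrm{d}x\,C^\Delta(x),C^\Delta(y)\}$ directly from the elementary brackets $\{K_x(x),E^x(y)\}=G\delta(x,y)$, $\{K_\phi(x),E^\phi(y)\}=G\delta(x,y)$ and then to impose that the result vanish on $C_x=0$. I would first split $C^\Delta=C^\Delta_K+C^\Delta_\Gamma$, where $C^\Delta_K=-\frac{1}{2G\sqrt{E^x}}(1+f)(4E^xK_xK_\phi+E^\phi K_\phi^2)$ collects the extrinsic-curvature terms; by \eqref{eq:f_weight} this is an ultralocal function of $K_x,K_\phi,E^x,E^\phi$ with no spatial derivatives, since $f$ enters only through $\widetilde K_x=K_x/E^\phi$. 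The remainder $C^\Delta_\Gamma$ contains the $h_1,h_2$ terms and depends only on $E^x,E^\phi$ and their spatial derivatives up to second order. The bracket then has a $K$--$K$, a $\Gamma$--$\Gamma$ and a cross piece: the first vanishes identically because $C^\Delta_K$ is ultralocal (its self-bracket is a multiple of $\delta(x,y)$ with a coefficient vanishing on the diagonal $x=y$), the second because $\{E^x(x),E^\phi(y)\}=0$, so the whole content sits in $\mathcal B(y):=\{C^\Delta_K[1],C^\Delta_\Gamma(y)\}+\{C^\Delta_\Gamma[1],C^\Delta_K(y)\}$.

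Computing $\mathcal B$ at constant lapse --- so that $\delta/\delta E^x(\cdot)$ of $\int C^\Delta_\Gamma$ is the variational (Euler--Lagrange) derivative of the density $C^\Delta_\Gamma$ --- integrating by parts to move the derivatives off $\delta(x,y)$, $\partial_x\delta(x,y)$ and $\partial_x^2\delta(x,y)$, and recombining, one finds that $\mathcal B$ is a total spatial derivative, $\mathcal B(y)=G\,(\partial_x\Theta)(y)$, with
\[
\Theta=\frac{\partial C^\Delta_\Gamma}{\partial(E^x)'}\,\frac{\partial C^\Delta_K}{\partial K_x}+\frac{\partial C^\Delta_\Gamma}{\partial(E^\phi)'}\,\frac{\partial C^\Delta_K}{\partial K_\phi}+\frac{\partial C^\Delta_\Gamma}{\partial(E^x)''}\,\partial_x\!\left(\frac{\partial C^\Delta_K}{\partial K_x}\right)-\left(\partial_x\frac{\partial C^\Delta_\Gamma}{\partial(E^x)''}\right)\frac{\partial C^\Delta_K}{\partial K_x}\,,
\]
where the derivatives $\partial/\partial(E^x)'$, $\partial/\partial(E^x)''$, $\partial/\partial(E^\phi)'$ are taken regarding $(E^x)'$, $(E^x)''$, $(E^\phi)'$ as formally independent of $E^x,E^\phi$. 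In the classical case ($f=0$, $h_1=h_2=1$) this $\Theta$ is proportional to $C_x$ (up to a sign and the factor $G$ it equals $q^{xx}C_x$ with $q^{xx}=E^x/(E^\phi)^2$), which is the familiar statement that the geometric Hamiltonian closes with the diffeomorphism constraint. Hence the hypothesis \eqref{eq:cdeltaconserved} is equivalent to demanding that $\Theta$ itself vanish on $C_x=\frac{1}{G}(E^\phi K_\phi'-K_x(E^x)')=0$; this is an equivalence, not merely an implication, because every term of $\Theta$ carries a spatial field derivative, so an $x$-constant residue of $\Theta|_{C_x=0}$ is forced to be zero.

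Now $\Theta$ is linear in the first spatial derivatives $K_x',K_\phi',(E^x)',(E^\phi)'$ with coefficients depending only on $(K_x,K_\phi,E^x,E^\phi)$. On $C_x=0$ one eliminates $K_\phi'=\widetilde K_x(E^x)'$; then $K_x'$ stays unconstrained and the only source of a $K_x'$-term in $\Theta|_{C_x=0}$ is $\frac{\partial C^\Delta_\Gamma}{\partial(E^x)''}\,\partial^2_{K_x}C^\Delta_K=\frac{\sqrt{E^x}h_2}{2GE^\phi}\,\partial^2_{K_x}C^\Delta_K$, so, since $h_2\neq0$, we need $\partial^2_{K_x}C^\Delta_K=0$, i.e.\ $(1+f)(K_\phi+4E^x\widetilde K_x)$ must be affine in $\widetilde K_x$. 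Integrating twice in $\widetilde K_x$ and writing the two integration functions of $(K_\phi,E^x)$ in the particular combination of \eqref{eq:no_kx_f1} --- the one that makes the classical limit $f\to0$ give $f^{(1)}\to K_\phi^2$, $f^{(2)}\to K_\phi$ --- reproduces that formula. With $f$ of this affine form the coefficient of $(E^\phi)'$ in $\Theta|_{C_x=0}$ turns out to vanish identically, while the vanishing of the coefficient of $(E^x)'$, after clearing the common denominator, is precisely \eqref{final_con_clo}; the $K_\phi$-independence of its left-hand side is automatic and identifies $\text{Con}_f$ as a function of $E^x$ only. Since each step above is an equivalence, the displayed $f$ together with \eqref{final_con_clo} is both necessary and sufficient.

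The step I expect to be the main obstacle is establishing the total-derivative identity $\mathcal B(y)=G\,(\partial_x\Theta)(y)$: one has to track carefully the $\delta$, $\partial_x\delta$ and $\partial_x^2\delta$ contributions produced by the $(E^x)'$, $(E^\phi)'$ and $(E^x)''$ dependence of $C^\Delta_\Gamma$, use the antisymmetry of the Poisson bracket to cancel the non-derivative and first-derivative pieces against the Euler--Lagrange terms, and verify that the remaining second-derivative pieces reorganise into $\partial_x(\,\cdot\,)$. A related subtlety is the chain rule: because $\{K_\phi,E^\phi\}\neq0$, every $\partial_{K_x}$ or $\partial_{E^\phi}$ acting on $f$ must be taken through $\partial_{K_x}f=\frac{1}{E^\phi}\partial_{\widetilde K_x}f$ and $\partial_{E^\phi}f=-\frac{\widetilde K_x}{E^\phi}\partial_{\widetilde K_x}f$, and it is exactly these factors that make the $K$-sector derivatives $\partial_{K_\phi}f^{(1)}$ and $\partial_{E^x}f^{(2)}$ appear next to $h_1,h_2$ in \eqref{final_con_clo}.
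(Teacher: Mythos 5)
Your proposal is correct and is essentially the computation the paper performs in Appendix \ref{app:bracketpolyHamiltonconstraint}: there the bracket is organized as $(A\,{E^x}'+B\,K_x'+C\,K_\phi'+D\,{E^\phi}')[N_2N_1'-N_1N_2']$ with ultralocal coefficients, which is the same object as your $\Theta$ (your $K_x'$-coefficient $\tfrac{\sqrt{E^x}h_2}{2GE^\phi}\partial_{K_x}^2C^\Delta_K$ is the paper's $B\propto h_2\big(8E^x\partial_{\widetilde K_x}f+(K_\phi+4E^x\widetilde K_x)\partial^2_{\widetilde K_x}f\big)$, whose vanishing gives \eqref{eq:no_kx_f1}, and your $(E^x)'$-coefficient condition is the paper's $A-C\widetilde K_x=0$, i.e.\ \eqref{final_con_clo}). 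The only differences are organizational — you fix $N_1=1$, split $C^\Delta$ into kinetic and potential parts, and package the anomaly as a total derivative $G\,\partial_x\Theta$ — and these do not change the argument.
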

\begin{proof}
    The proof is given in Appendix \ref{app:bracketpolyHamiltonconstraint}.
\end{proof}
\noindent Plugging the complicated version of the polymerization function $f$ encoding the holonomy corrections given in equation \ref{eq:no_kx_f1} into the polymerized gravitational Hamiltonian constraint defined in equation \ref{eq:defpolyhamiltonianconstraint}, we can see that the curvature terms assume the form
\begin{eqnarray}
   C^{\Delta}(x)= \frac{E^{\phi}}{2 G E^x}\left[ -\sqrt{{{E^x}}} \qty(\frac{4 K_x {f}^{(2)}(K_{\phi},E^x)}{E^{\phi}} + \frac{{f}^{(1)}(K_{\phi},E^x)}{E^x} ) + h_1 \dots \right](x)\,.
\end{eqnarray}
In other words the polymerization of $K_{x}$ is completely removed due to the requirement of the  closure of the constraint algebra. Using equation \eqref{eq:commute_cx} and from previous lemma \eqref{eq:cdeltaconserved}, we have for arbitrary $N^x$
\begin{eqnarray}
     \eval{\poissonbracket{H_P^{\Delta}[N^x]}{C^{\Delta}(y) }}_{C^{\Delta} = C_x = 0}  = 0\,,
\end{eqnarray}
namely $C^{\Delta} =0$ is a conserved quantity for diffeomorphism invariant solutions thus vacuum solution is generally allowed.
\noindent The exact result of the Poisson bracket between the two polymerized Hamiltonian constraint is
\be
\poissonbracket{C^{\Delta}[N_1]}{C^{\Delta}[N_2]} = \Big( \qty(\partial_{K_{\phi}} {f}^{(2)}) \frac{E^x }{(E^{\phi})^2} C_x\Big) [N_1 N_2' - N_2 N_1'] \,.
\ee 
The details can be found in Appendix \ref{app:bracketpolyHamiltonconstraint}. This is identical to the classical result up to the deformation function $\partial_{K_{\phi}} {f}^{(2)}$ that depends on the given form of the polymerization function still encoded in the choice of ${f}^{(2)}$.

\subsection{Compatible LTB conditions}\label{sec:LTBcondition}
In this subsection we want to shift our focus to the implementation of a generalized LTB condition for the effective system. As we want to consider LTB conditions for polymerizations that correspond to mimetic models that either do deparametrize or do not, we will not directly build on all the results from the previous subsection, due to the restrictive nature of implications for the form of holonomy correction function $f$ stemming from the requirement of a closed algebra of the polymerized geometric contributions to the Hamiltonian constraint. Here we want to relax this a bit by only demanding the conservation of the diffeomorphism constraint $C_x$. This means we will work with the primary Hamiltonian in \eqref{eq:defprimarypolyHamiltonian}, where the gravitational contribution to the Hamiltonian constraint is given in \eqref{eq:defpolyhamiltonianconstraint}. Furthermore we assume that  the polymerization  function $f$ encoding the holonomy corrections fulfills equation \eqref{eq:f_weight}.
~\\
~\\
Instead of using the constraint analysis as described in section \ref{sec:ClassMod} in the classical theory, we would like to work directly with the equations of motion in this subsection. The existence of a compatible LTB condition with $N^x = 0$ implies for such type of solutions, that the four first order equations of motion for $K_x, E^x, K_{\phi}, E^{\phi}$ can be reduced into two independent equations provided the diffeomorphism $C_x = 0$ is satisfied. The set of independent equations of motion will be the same as the one obtained from Dirac's algorithm implementing the appropriate gauge fixing. The reason for choosing this approach is that the equations have a much simpler structure and as a result algebraic manipulations and numerical analysis are less complicated. Inspired by what happens in the classical LTB condition, we would like to represent the independent equations using $E^x, K_{\phi}$. We first solve for $K_x$ using the classical diffeomorphism $C_x$. Then we introduce a change of variable from $E^{\phi}$ to ${g}$ with ${g} \coloneqq \frac{ \partial_x {E^x}}{2 E^{\phi}}$ which is inspired by classical LTB conditions in \eqref{eq:EandKLTBreductionnonmarginal}. Using the new variables $E^x, K_{\phi}, {g}$, where $K_x,E^{\phi}$ are now given by
\begin{eqnarray}\label{eq:Ep_Kx_ltb}
      E^{\phi} = \frac{ \partial_x {E^x}}{2 {g}}, \quad K_x = \frac{ \partial_x {K_{\phi}}}{2 {g}} , \qquad \widetilde{K}_x \coloneqq \frac{K_x}{E^\phi}= \frac{ \partial_x {K_{\phi}}}{\partial_x {E^x}}\, ,
\end{eqnarray}
the equations of motion are equivalent to 
\begin{eqnarray}
\label{eq:Kp_ltb_eom}
    \partial_t K_{\phi} &=& \frac{{g}^2( 2 h_2 + 4 E^x \partial_{E^x} h_2 -h_1) -h_1}{2 \sqrt{E^x}} + \mathcal{F}_{K_\phi}(K_{\phi}, E^x, \partial_x K_{\phi}, \partial_x {E^x}) ,%
    \\
    \label{eq:Ex_ltb_eom}
    \partial_t E^x  &=& \mathcal{F}_{E^x}(K_{\phi}, E^x, \partial_x K_{\phi}, \partial_x {E^x}) ,%
    \\
    \label{eq:ltb_eom}
    {\partial_t {g}}  &=& {g} \, \mathcal{F}_{{g}}(K_{\phi}, E^x, \partial_x K_{\phi}, \partial_x {E^x}, \partial_x^2 K_{\phi}, \partial_x^2 {E^x}) \,. %
\end{eqnarray}
The explicit forms of $\mathcal{F}_{K_\phi}, \mathcal{F}_{E^x}, \mathcal{F}_{{g}}$ are shown in Appendix \ref{app:eom}. They depend on $K_{\phi},E^x$ and their derivatives only. One can check that the equation of motion for $K_x$ 
is automatically satisfied for $C_x = 0$. This make sense since $C_x$ is a conserved quantity for $H_P^{\Delta}$ as previously stated. We note that the first term on the right hand side of \eqref{eq:Kp_ltb_eom} comes from the variation of the spin connection which was ignored in \cite{Bojowald:2008ja}.
Here $\mathcal{F}_{K_\phi}, \mathcal{F}_{E^x}, \mathcal{F}_{{g}}$ are densities of weight zero, thus the density weight is consistent in (\ref{eq:Kp_ltb_eom}-\ref{eq:ltb_eom}) as ${g}$ has density weight zero from its definition. In the classical theory \eqref{eq:ltb_eom} gives automatically $\partial_t {g} = 0$ for arbitrary initial ${g}$. This indicates that ${g}$ is actually non-dynamical, e.g. ${g} = \ltbf(x)$ due to the classical LTB condition and the system is completely determined by \eqref{eq:Kp_ltb_eom} and \eqref{eq:Ex_ltb_eom}. In general when holonomy and/or inverse triad corrections are present this is no longer the case. Consequently, in order to have \eqref{eq:ltb_eom} as an equation, that is not independent to \eqref{eq:Kp_ltb_eom} and \eqref{eq:Ex_ltb_eom}, the time dependence of ${g}$ must be determined by $K_{\phi},E^x$ and their derivatives. This leads to our definition of a compatible LTB condition:
\begin{definition}\label{def:comp_LTB}
~\\
We say an LTB condition ${g}(t,x)$ is compatible with the effective Hamiltonian $H^\Delta_P$ in \eqref{eq:defprimarypolyHamiltonian} or polymerization encoded in $f,h_1$ respectively, if it satisfies \eqref{eq:ltb_eom} and can be expressed as 
\begin{eqnarray}\label{eq:an_g_eom}
    {g}(t,x) = {g}_{\Delta}(K_{\phi}, E^x, \partial_x K_{\phi}, \partial_x {E^x}, \cdots, \partial_x^n K_{\phi}, \partial_x^n {E^x}, \ltbf)(t,x) \,,
\end{eqnarray}
with a non-negative integer $n \in \mathbb{N}_{+} \cup \{0\}$ where $\partial_x^0 = \text{Id}$, %
and the function $\ltbf(x)$ depends on the radial coordinate $x$ only. If $\partial_{\ltbf} {g} =0$, namely ${g}$ does not depend on $\ltbf$ at all, and thus the $x$ dependence is also encoded completely in the phase space variables $K_{\phi}, E^x$, we call such a case the marginally bound case, otherwise we call it the non-marginally bound case.
\end{definition}
\noindent Substituting the property \eqref{eq:an_g_eom} of a compatible LTB condition into the equation of motion in \eqref{eq:ltb_eom} and successively using the other two equations of motions in \eqref{eq:Kp_ltb_eom} and \eqref{eq:Ex_ltb_eom} the equation \eqref{eq:ltb_eom} can be rewritten in a form that does no longer involve first order temporal derivatives. If we now collect the involved contributions by their order of partial derivatives acting on $K_\phi$ and $E^x$, we will get conditions in the following form\footnote{Here we require $K_{\phi},E^x$ are at least of differentiability class $C^{2n}$ for $x$ and of $C^1$ in $t$}:
\begin{eqnarray}
    F_{C_0} + \sum_{k=1}^{n}F_{C_k}^n(K_{\phi}, E^x, \partial_x K_{\phi}, \partial_x {E^x}, \cdots, \partial_x^n K_{\phi}, \partial_x^n {E^x},{g},\partial {g}, \cdots, \partial^n {g} ) \partial_x^{n+k} K_{\phi} + \nonumber\\ \label{eq:defofFc}
    \sum_{k=1}^{n}F_{D_k}^n(K_{\phi}, E^x, \partial_x K_{\phi}, \partial_x {E^x}, \cdots, \partial_x^n K_{\phi}, \partial_x^n {E^x},{g},\partial {g}, \cdots, \partial^n {g} ) \partial_x^{n+k} E^x =0 \,.
\end{eqnarray}
Since $\partial_x^{n+k} K_\phi$ and $\partial_x^{n+k} E^x$ are not contained in ${g}$ for all finite $n$ except for $n \to \infty$, we have in total $2n + 1$ independent conditions $F_{C_0} = F_{C_k}^n =F_{D_k}^n=0$. Due to the reason that all higher order $F_{C_k}, F_{D_k}$ terms are generated from $\partial_x^k  {g}^2 $ in \eqref{eq:Kp_ltb_eom}, it is easy to see that
\begin{lemma}
\label{lemma:g1_first}
~\\
For all finite $n < \infty$, the only allowed LTB condition compatible with the polymerization function $f$ encoding the holonomy corrections in \eqref{poly_f} will be of the form 
    \begin{eqnarray}\label{eq:an1_g_eom}
            {g}_{\Delta}=  {g}_{\Delta}^{(1)}( K_{\phi}, E^x, \ltbf) + {g}_{\Delta}^{(2)}(\widetilde{K}_x, K_{\phi}, E^x)\,,
    \end{eqnarray}
subject to the following  condition
    \begin{eqnarray}\label{eq:an1_g_eom_con}
        2 \left(2 h_2 + 4 E^x \partial_{E^x} h_2 - h_1 \right) \left(\partial_{\widetilde{K}_x}{g}_{\Delta}\right)^2 = 8 {E^x} K_{\phi} \partial_{\widetilde{K}_x}f + K_{\phi} (K_{\phi} + 4 {E^x}\widetilde{K}_x ) \partial_{\widetilde{K}_x}^2f\,.
    \end{eqnarray}
\end{lemma}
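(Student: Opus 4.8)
The plan is to turn the compatibility requirement into a polynomial identity in the highest-order spatial derivatives and read off its coefficients. Concretely, I would substitute the ansatz \eqref{eq:an_g_eom} into \eqref{eq:ltb_eom}, expand $\partial_t {g}_\Delta$ by the chain rule as a sum of terms $\partial_{\partial_x^j K_\phi}{g}_\Delta\,\partial_x^j(\partial_t K_\phi) + \partial_{\partial_x^j E^x}{g}_\Delta\,\partial_x^j(\partial_t E^x)$ for $j=0,\dots,n$ (the function $\ltbf$ is $t$-independent and does not contribute), and then replace $\partial_t K_\phi,\partial_t E^x$ by \eqref{eq:Kp_ltb_eom} and \eqref{eq:Ex_ltb_eom} with ${g}\to {g}_\Delta$, together with $\partial_t {g}={g}_\Delta\mathcal{F}_{{g}}$ on the right-hand side. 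After this substitution all first-order time derivatives disappear and one is left with the identity \eqref{eq:defofFc}. The derivatives $\partial_x^{n+k}K_\phi,\partial_x^{n+k}E^x$ with $1\le k\le n$ occur neither inside ${g}_\Delta$ (which only knows derivatives up to order $n$) nor inside $\mathcal{F}_{K_\phi},\mathcal{F}_{E^x},\mathcal{F}_{{g}}$; since for prescribed lower-order data these higher derivatives can be chosen freely by picking $K_\phi,E^x\in C^{2n}$, each of their coefficients together with the derivative-free remainder $F_{C_0}$ must vanish identically, which is exactly the set of $2n+1$ conditions $F_{C_0}=F_{C_k}^n=F_{D_k}^n=0$.

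The crucial structural point is that \emph{the only mechanism generating $x$-derivatives of $K_\phi,E^x$ of order strictly larger than $n+1$ is the action of $\partial_x^k$ on the spin-connection term ${g}^2(2h_2+4E^x\partial_{E^x}h_2-h_1)/(2\sqrt{E^x})$ in \eqref{eq:Kp_ltb_eom}}: the pieces $\mathcal{F}_{K_\phi},\mathcal{F}_{E^x}$ carry only first derivatives and $\mathcal{F}_{{g}}$ only up to second derivatives, whereas ${g}_\Delta^2$ picks up, under $\partial_x^k$, derivatives up to order $n+k$. Hence, for $n\ge 2$, the top coefficient $F_{C_n}^n$, that of $\partial_x^{2n}K_\phi$, comes solely from $\partial_{\partial_x^n K_\phi}{g}_\Delta\cdot\partial_x^n\big({g}_\Delta^2\,A/(2\sqrt{E^x})\big)$ with $A\coloneqq 2h_2+4E^x\partial_{E^x}h_2-h_1$, and equals, up to the nonvanishing factor $1/\sqrt{E^x}$, the expression $A\,{g}_\Delta\,(\partial_{\partial_x^n K_\phi}{g}_\Delta)^2$. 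Since $A\to 1$ and ${g}_\Delta\to\ltbf\neq 0$ in the classical limit, $A\,{g}_\Delta\not\equiv 0$, so $\partial_{\partial_x^n K_\phi}{g}_\Delta=0$; feeding this back, an analysis of the remaining conditions in the hierarchy $F_{C_k}^n=F_{D_k}^n=0$ forces ${g}_\Delta$ to be independent of $\partial_x^n E^x$ and then, descending in the order of the derivative, of every $\partial_x^j K_\phi,\partial_x^j E^x$ with $j\ge 2$. Thus only $n=1$ survives.

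For $n=1$ I would exploit that ${g}=\partial_x E^x/(2E^\phi)$ has density weight zero, so that ${g}_\Delta$ is a weight-zero function of $(K_\phi,E^x,\partial_x K_\phi,\partial_x E^x,\ltbf)$; as the only weight-zero combination of the two weight-one quantities $\partial_x K_\phi,\partial_x E^x$ is their ratio $\widetilde{K}_x=\partial_x K_\phi/\partial_x E^x$ (there being no weight $-1$ object among $K_\phi,E^x,\ltbf$), one gets ${g}_\Delta={g}_\Delta(K_\phi,E^x,\widetilde{K}_x,\ltbf)$, which is \eqref{eq:an1_g_eom} once the $\ltbf$-dependent and $\ltbf$-independent parts are named ${g}^{(1)}_\Delta$ and ${g}^{(2)}_\Delta$. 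Because $\mathcal{F}_{K_\phi},\mathcal{F}_{E^x}$ and $\mathcal{F}_{{g}}$ (apart from its explicit ${g}^2$ contribution) depend on first derivatives only through $\widetilde{K}_x$, the second derivatives $\partial_x^2 K_\phi,\partial_x^2 E^x$ enter \eqref{eq:defofFc} exclusively through $\partial_x\widetilde{K}_x=(\partial_x^2 K_\phi-\widetilde{K}_x\partial_x^2 E^x)/\partial_x E^x$, so $F_{C_1}^1$ and $F_{D_1}^1$ collapse to a single condition; inserting the explicit forms of $\mathcal{F}_{K_\phi},\mathcal{F}_{E^x},\mathcal{F}_{{g}}$ from Appendix \ref{app:eom} and simplifying turns it into \eqref{eq:an1_g_eom_con}. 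Finally, the additive split in \eqref{eq:an1_g_eom} is automatic: the right-hand side of \eqref{eq:an1_g_eom_con} depends only on $K_\phi,E^x,\widetilde{K}_x$, hence so does $(\partial_{\widetilde{K}_x}{g}_\Delta)^2$ and, on any domain where it is nonzero and a branch is fixed, $\partial_{\widetilde{K}_x}{g}_\Delta$ itself, so $\partial_{\ltbf}\partial_{\widetilde{K}_x}{g}_\Delta=0$ and ${g}_\Delta$ separates as ${g}^{(1)}_\Delta(K_\phi,E^x,\ltbf)+{g}^{(2)}_\Delta(\widetilde{K}_x,K_\phi,E^x)$.

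The step I expect to be the main obstacle is making the descent in the second paragraph fully airtight, i.e.\ showing that ${g}_\Delta$ cannot depend on \emph{any} spatial derivative of order $\ge 2$ rather than only on the leading one: once the top coefficient is killed one must track carefully which of the remaining conditions $F_{C_k}^n=F_{D_k}^n=0$ are still nontrivial and verify that they successively remove the dependence on $\partial_x^n E^x$, $\partial_x^{n-1}K_\phi$, and so on, rather than becoming vacuous. That bookkeeping, together with the explicit simplification that produces \eqref{eq:an1_g_eom_con}, is precisely where the detailed structure of $\mathcal{F}_{K_\phi},\mathcal{F}_{E^x},\mathcal{F}_{{g}}$ from Appendix \ref{app:eom} — in particular that they involve first derivatives only via $\widetilde{K}_x$, which in turn reflects the density-weight-zero property \eqref{eq:f_weight} of $f$ — is indispensable.
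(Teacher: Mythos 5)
Your proposal is correct and follows essentially the same route as the paper's proof: expand $\partial_t g_\Delta$ by the chain rule, isolate the coefficients of the spatial derivatives of order $>n$ (the top one, proportional to $(2h_2+4E^x\partial_{E^x}h_2-h_1)\,g_\Delta\,(\partial_{\partial_x^n K_\phi}g_\Delta)^2$ and generated by the spin-connection term of \eqref{eq:Kp_ltb_eom}), descend iteratively to $n=1$, and read off \eqref{eq:an1_g_eom_con} from the coefficient $F_{C_1}^1$, with the additive split following from the $\ltbf$-independence of that condition. The only minor difference is that you reduce the $n=1$ ansatz to $\widetilde K_x$-dependence directly via the density-weight-zero argument, whereas the paper derives it from the combination $F_{D_1}+\widetilde K_x F_{C_1}=0$ and excludes the residual branch $g\propto \partial_x E^x$ by its wrong classical limit, mentioning the density-weight shortcut only in passing in appendix \ref{app:ltb_general}.
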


\begin{proof}
~\\
Suppose we have $n>1$ and 
$\partial_t {g} = \partial_t {g}_{\Delta}(t,x)$ in \eqref{eq:ltb_eom}.  Substituting the ansatz \eqref{eq:an_g_eom} into \eqref{eq:ltb_eom} leads to
    \begin{eqnarray}
        &&\partial_t {g} =\sum_{i=0}^n  \Big[ \qty(\partial_{\partial_{x}^i K_{\phi}} {g}_{\Delta}) \partial_t \partial_{x}^i K_{\phi} + \qty(\partial_{\partial_{x}^i E^{x}} {g}_{\Delta}) \partial_t \partial_{x}^i E^{x}  \Big]  \ \nonumber \,, %
    \end{eqnarray}
    with $\partial_t \partial_x^n K_{\phi} = \partial_x^n (\partial_t K_{\phi})$ containing the term $\partial_{\partial_x^n K_{\phi}} {g}_{\Delta} \partial_x^{2n} K_{\phi}$ using \eqref{eq:Kp_ltb_eom}. This will introduce e.g. a $\partial_x^{2n} K_{\phi}$ term for the highest power in $n$ appearing in ${g}_{\Delta}$.
    Since the right hand side of \eqref{eq:ltb_eom} contains only derivatives up to order $n$, this implies $\partial_{\partial_x^n K_{\phi}} {g}_{\Delta} = 0$. With a similar procedure, iteratively we have $\partial_{\partial_x^{i} K_{\phi}} {g}_{\Delta} = 0$ for all integer orders $ 2 \leq i \leq n$. Now using \eqref{eq:Ex_ltb_eom} similarly for contributions from $\partial_x^n E^x$ to $\partial_t {g}_{\Delta}$ we have a term proportional to $\partial_x^{n+1} E^x$:
    \begin{eqnarray}
        && 0=\left( \partial_{\partial_x E^x} \mathcal{F}_{E^x} +\frac{{g}_{\Delta}(2 h_2 + 4 E^x \partial_{E^x} h_2 -h_1) }{ \sqrt{E^x}} \partial_{\partial_x K_x} {g}_{\Delta} \right)\qty(\partial_{\partial_x^{n} E^x} {g}_{\Delta}) \partial_x^{n+1} E^x \, , \nonumber
    \end{eqnarray}
    which must vanish, thus $\partial_{\partial_x^{n} E^x} {g}_{\Delta} = 0$. This implies    
\begin{eqnarray}
        {g}(t,x) = {g}_{\Delta}(K_{\phi}, E^x, \partial_x K_{\phi}, \partial_x {E^x}, \cdots, \partial_x^{n-1} {E^x}, \ltbf)(t,x).
    \end{eqnarray}
    The procedure can be performed iteratively until we reach $n=1$ and we finally end up with
    \begin{eqnarray}
        {g}(t,x) = {g}_{\Delta}(K_{\phi}, E^x, \partial_x K_{\phi}, \partial_x {E^x}, \ltbf)(t,x)\,.
    \end{eqnarray}
It is then straight forward to calculate $F_{C_1}^1,F_{D_1}^1$ which are the coefficients for the terms involving $\partial_x^2$ in \eqref{eq:ltb_eom} as can be seen in equations \ref{eqapp:fc1} and \ref{eqapp:fd1}. Further appendix \ref{app:ltb_general} shows in detail that the solution is given by equation \eqref{eq:an1_g_eom} subject to the  condition shown in \eqref{eq:an1_g_eom_con}. Moreover, since the right hand side of equation \eqref{eq:an1_g_eom_con} does not contain $\ltbf$, this means $\partial_{\widetilde{K}_x}{g}_\Delta$ is independent of $\ltbf$ too. Only in the integration constant ${g}_{\Delta}^{(1)}( K_{\phi}, E^x, \ltbf)$ the LTB function can be present.
\end{proof}
\noindent
We can define the analogue of the classical gauge fixing condition $G_x$ of \eqref{eq:GFconditionsclassical} in the polymerized model by
\begin{equation}
\label{eq:GxDelta}
G_x^{\Delta}(x):= \frac{E^x{}'}{2 E^{\phi}}(x) - {g}_{\Delta}(x)\,,   \end{equation}
with ${g}_{\Delta}$ given by \eqref{eq:an1_g_eom}, whose classical limit is $\lim\limits_{\Delta\to 0} G_x^{\Delta}=G_x$.
Likewise to the classical case, we have that for non-constant ${g}_{\Delta}(x)$, $G_x^{\Delta}$ and the spatial diffeomorphism constraint $C_x$ build a second class constraint pair
\begin{eqnarray}
    \poissonbracket{C_x(x)}{G_x^{\Delta}(y)} = (\partial_{\ltbf}{g}_{\Delta}) \ltbf'(x) \delta(x,y)\,.
\end{eqnarray}
Now with equation \eqref{eq:an1_g_eom}, the only remaining condition in \eqref{eq:defofFc} is $F_{C_0}=0$ and we can rewrite it as
\begin{eqnarray}\label{eq:LTB_gene}
    \frac{(2 h_2 + 4 E^x \partial_{E^x} h_2 -h_1){g}_{\Delta}}{\partial_x E^x} \qty(\partial_{\ltbf} {g}_{\Delta}) \qty(\partial_{\widetilde{K}_x} {g}_{\Delta}) \ltbf' + \frac{1}{4 (E^x)^{3/2}}\widetilde{F}_{C_0}(\widetilde{K}_x,K_{\phi},E^x,{g}_{\Delta},\partial {g}_{\Delta},f,\partial f) = 0\,,\nonumber \\
\end{eqnarray}
where the explicit form of $\widetilde{F}_{C_0}$ is given in \ref{eqapp:def_tfc0}. As ${g}_{\Delta}$ does not contain $\ltbf'$, we can separate the solution of ${g}_{\Delta}$ into two classes depending on whether $\partial_{\ltbf} {g}_{\Delta}$ vanishes or not as the term $(2 h_2 + 4 E^x \partial_{E^x} h_2 -h_1)$ has a non vanishing classical limit and thus is not allowed to vanish identically in the effective theory. For the first option this is the marginally bound case and and the second possibility covers the non-marginally bound case. 
This result can be summarized in the following Lemma.
\begin{lemma}
\label{thm:marginal}
\textbf{(marginally bound case)} ~\\
A compatible LTB condition in the marginally bound case in the sense of Definition \ref{def:comp_LTB} is determined by the following $f=f(\widetilde{K}_x,K_{\phi},E^x)$ and ${g}_{\Delta} = {g}_{\Delta}^{(2)}(\widetilde{K}_x,K_{\phi},E^x)$ which are given by the solutions of the following equations
\begin{eqnarray}%
    0 &=& 8 {E^x} K_{\phi} \partial_{\widetilde{K}_x}f + K_{\phi} (K_{\phi} + 4 {E^x}\widetilde{K}_x) \partial_{\widetilde{K}_x}^2f- 2 \left(2 h_2 + 4 E^x \partial_{E^x} h_2 -h_1 \right) \left(\partial_{\widetilde{K}_x} {g}_{\Delta}\right)^2 \\ %
     0 &=& \widetilde{F}_{C_0}(\widetilde{K}_x,K_{\phi},E^x,{g}_{\Delta},\partial {g}_{\Delta},f,\partial f)
    \end{eqnarray}
\end{lemma}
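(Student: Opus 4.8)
The plan is to obtain this statement as a specialization of Lemma \ref{lemma:g1_first} and of the rewritten form \eqref{eq:LTB_gene} of the remaining condition $F_{C_0}=0$, using only that the marginally bound case is, by Definition \ref{def:comp_LTB}, the case $\partial_{\ltbf}{g}_{\Delta}=0$ (and recalling the standing assumption \eqref{eq:f_weight}, which is what makes $f=f(\widetilde{K}_x,K_{\phi},E^x)$). First I would invoke Lemma \ref{lemma:g1_first}: for any finite $n$ a compatible LTB condition must be of the additive form ${g}_{\Delta}={g}_{\Delta}^{(1)}(K_{\phi},E^x,\ltbf)+{g}_{\Delta}^{(2)}(\widetilde{K}_x,K_{\phi},E^x)$ subject to \eqref{eq:an1_g_eom_con}. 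Applying $\partial_{\ltbf}$ and imposing $\partial_{\ltbf}{g}_{\Delta}=0$ gives $\partial_{\ltbf}{g}_{\Delta}^{(1)}=0$, so ${g}_{\Delta}^{(1)}$ is a function of $K_{\phi},E^x$ alone. Since it then carries no $\widetilde{K}_x$-dependence and \eqref{eq:an1_g_eom_con} constrains only $\partial_{\widetilde{K}_x}{g}_{\Delta}$, this $\widetilde{K}_x$-independent piece can be absorbed into the second summand, so without loss of generality ${g}_{\Delta}={g}_{\Delta}^{(2)}(\widetilde{K}_x,K_{\phi},E^x)$, and \eqref{eq:an1_g_eom_con} becomes verbatim the first displayed equation of the Lemma.

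For the second equation I would use the fact, established in the lead-up to \eqref{eq:LTB_gene}, that once ${g}_{\Delta}$ has the form \eqref{eq:an1_g_eom} the only surviving requirement among the coefficients $F_{C_0},F_{C_k}^n,F_{D_k}^n$ in \eqref{eq:defofFc} is $F_{C_0}=0$, equivalently \eqref{eq:LTB_gene}. In the marginally bound case $\partial_{\ltbf}{g}_{\Delta}=0$ annihilates the first term of \eqref{eq:LTB_gene} identically, and since the factor $\tfrac{1}{4(E^x)^{3/2}}$ multiplying the remaining term is nowhere zero, $F_{C_0}=0$ collapses to $\widetilde{F}_{C_0}(\widetilde{K}_x,K_{\phi},E^x,{g}_{\Delta},\partial {g}_{\Delta},f,\partial f)=0$, which is the second displayed equation. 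For the converse direction, any $f$ and ${g}_{\Delta}={g}_{\Delta}^{(2)}(\widetilde{K}_x,K_{\phi},E^x)$ solving the two equations is automatically of the form \eqref{eq:an1_g_eom} with ${g}_{\Delta}^{(1)}\equiv 0$, satisfies all the coefficient conditions in \eqref{eq:defofFc}, and therefore solves \eqref{eq:ltb_eom} after substitution, i.e.\ defines a compatible LTB condition in the sense of Definition \ref{def:comp_LTB}; this yields the claimed characterization in both directions.

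The genuinely technical input --- the derivation of the additive structure \eqref{eq:an1_g_eom}, of the constraint \eqref{eq:an1_g_eom_con}, and the explicit form of $\widetilde{F}_{C_0}$ after substituting \eqref{eq:Kp_ltb_eom}--\eqref{eq:Ex_ltb_eom} into \eqref{eq:ltb_eom} --- is already carried out in Lemma \ref{lemma:g1_first} and Appendix \ref{app:ltb_general}, so the present proof is essentially a reduction. I expect the only real point requiring care to be the consistency of the case split with Definition \ref{def:comp_LTB}: one should check that when \eqref{eq:Kp_ltb_eom}--\eqref{eq:Ex_ltb_eom} are substituted into \eqref{eq:ltb_eom} no hidden $\ltbf'$- or higher-$x$-derivative dependence re-enters, so that ``$\partial_{\ltbf}{g}_{\Delta}$ vanishes'' versus ``$\partial_{\ltbf}{g}_{\Delta}$ does not vanish'' is genuinely exhaustive and the marginally bound branch is exactly the one obtained by killing the $\ltbf'$-term in \eqref{eq:LTB_gene}.
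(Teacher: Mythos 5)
Your proposal is correct and follows essentially the same route as the paper, whose proof is precisely the specialization of Lemma \ref{lemma:g1_first} and equation \eqref{eq:LTB_gene} to the case $\partial_{\ltbf}{g}_{\Delta}=0$; your first displayed equation is \eqref{eq:an1_g_eom_con} verbatim and your second is what survives of \eqref{eq:LTB_gene} once the $\ltbf'$-term is killed. The extra observations you add (absorbing the now $\ltbf$-independent ${g}_{\Delta}^{(1)}$ into ${g}_{\Delta}^{(2)}$, and the converse direction) are sound and merely make explicit what the paper leaves implicit.
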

\begin{proof}
The proof is straightforward by using Lemma \ref{lemma:g1_first} and equation \eqref{eq:LTB_gene} for the special case  $\partial_{\ltbf} {g}_{\Delta} = 0$
\end{proof}
\noindent
Note that the $K_x$ dependence of the polymerization function $f$ encoding the holonomy corrections and ${g}_{\Delta}$ will be determined simultaneously, they are not independent. This means for a given specific $f$ with $K_x$ corrections, it might happen that no solution exists for ${g}_{\Delta}$. We can make this precise for the non-marginally bound case in the following Lemma.
\begin{lemma}\label{thm:non_marginally bound case }\textbf{(non-marginally bound case)}
~\\
To have a compatible LTB condition ${g}_{\Delta}$ in the non-marginally bound case in the sense of Definition \ref{def:comp_LTB}, the polymerization function $f$ encoding the holonomy corrections as well as ${g}_{\Delta}$ are restricted to the form of
      \begin{eqnarray}\label{eq:no_kx_f1_g1_n1}
    f = \frac{{f}^{(1)}(K_{\phi},E^x)}{K_{\phi}+4 {E^x} \widetilde{K}_x} + {f}^{(2)}(K_{\phi},E^x) \ , \qquad
    {g}_{\Delta} =   {g}_{\Delta}^{(1)}( K_{\phi}, E^x, \ltbf).
    \end{eqnarray}
Moreover, they satisfy
    \begin{eqnarray}\label{eq:con_non_m_g1}
            0 = \mathcal{G}_1[{g}_{\Delta},f]+\mathcal{G}_2[{g}_{\Delta},f]
    \end{eqnarray}
    where $\mathcal{G}_{1}$ is linear and $\mathcal{G}_2$ cubic in ${g}_{\Delta}$ (and its derivatives $\partial_{K_{\phi}} {g}_{\Delta}$).  The explicit forms of  $\mathcal{G}_{1}$ and $\mathcal{G}_2$ read
\begin{equation}\label{eq:G12_expan_explicit}
    \begin{split}
           \mathcal{G}_1 =& 2 {f}^{(2)} \left( {g}_{\Delta} -2 E^x \partial_{E^x} {g}_{\Delta}\right) +{g}_{\Delta} \left(4 E^x \partial_{E^x} {f}^{(2)} - \partial_{K_{\phi}}{f}^{(1)}\right)
        -(h_1+ {f}^{(1)}) \partial_{K_{\phi}} {g}_{\Delta}\\
        \mathcal{G}_2 =&({g}_{\Delta})^2 \left(2 h_2 + 4 E^x \partial_{E^x} h_2 -h_1\right) \partial_{K_{\phi}} {g}_{\Delta}\, .
    \end{split}
\end{equation}
\end{lemma}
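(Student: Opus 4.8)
The plan is to take Lemma~\ref{lemma:g1_first} as the starting point: it already restricts any compatible LTB condition to the form $g_\Delta = g_\Delta^{(1)}(K_{\phi},E^x,\ltbf) + g_\Delta^{(2)}(\widetilde{K}_x,K_{\phi},E^x)$ subject to the consistency relation \eqref{eq:an1_g_eom_con}, and reduces the last remaining equation of motion \eqref{eq:ltb_eom} to the single scalar condition $F_{C_0}=0$ written out in \eqref{eq:LTB_gene}. From here the whole argument is a case distinction on whether $\partial_{\ltbf} g_\Delta$ vanishes; the present Lemma is the branch $\partial_{\ltbf}g_\Delta\neq 0$, which by Definition~\ref{def:comp_LTB} is precisely the non-marginal case.

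First I would exploit the $\ltbf'$-structure of \eqref{eq:LTB_gene}, which is displayed there as a $\ltbf'$-proportional term plus the $\ltbf'$-independent piece $\widetilde{F}_{C_0}$ (the $\ltbf'$-contributions having been collected into the first summand; note that neither $f=f(\widetilde{K}_x,K_{\phi},E^x)$ nor $g_\Delta$ depends on $\ltbf'$). In the non-marginal case $\ltbf(x)$ and $\ltbf'(x)$ are independently prescribable through the initial data at any point, so the $\ltbf'$-coefficient and $\widetilde{F}_{C_0}$ must vanish separately. The $\ltbf'$-coefficient is proportional to $(2h_2+4E^x\partial_{E^x}h_2-h_1)\,g_\Delta\,(\partial_{\ltbf}g_\Delta)(\partial_{\widetilde{K}_x}g_\Delta)$; the first factor equals $1$ in the classical limit and hence cannot vanish identically, $g_\Delta$ has nonvanishing classical limit $\ltbf$ and hence cannot vanish identically, and $\partial_{\ltbf}g_\Delta\neq 0$ is the hypothesis. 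Therefore $\partial_{\widetilde{K}_x}g_\Delta=0$, which in the decomposition \eqref{eq:an1_g_eom} means $g_\Delta^{(2)}$ carries no $\widetilde{K}_x$-dependence and can be absorbed into $g_\Delta^{(1)}$, i.e. $g_\Delta = g_\Delta^{(1)}(K_{\phi},E^x,\ltbf)$. Feeding $\partial_{\widetilde{K}_x}g_\Delta=0$ back into \eqref{eq:an1_g_eom_con} makes its left-hand side vanish and leaves the linear second-order ODE $8E^xK_{\phi}\,\partial_{\widetilde{K}_x}f + K_{\phi}(K_{\phi}+4E^x\widetilde{K}_x)\,\partial_{\widetilde{K}_x}^2 f=0$ in the variable $\widetilde{K}_x$, with $K_{\phi},E^x$ as parameters. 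Dividing by $K_{\phi}$ and setting $w=\partial_{\widetilde{K}_x}f$ turns this into a separable first-order equation with solution $w\propto (K_{\phi}+4E^x\widetilde{K}_x)^{-2}$; a further integration in $\widetilde{K}_x$ yields $f = f^{(1)}(K_{\phi},E^x)/(K_{\phi}+4E^x\widetilde{K}_x) + f^{(2)}(K_{\phi},E^x)$, which is the claimed form \eqref{eq:no_kx_f1_g1_n1}, the degenerate loci $K_{\phi}=0$ and $\partial_x E^x=0$ being dealt with by continuity.

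Finally, with $g_\Delta=g_\Delta^{(1)}(K_{\phi},E^x,\ltbf)$ and $f$ of this form the only surviving equation is $\widetilde{F}_{C_0}=0$. Here I would substitute $g=g_\Delta$ into \eqref{eq:ltb_eom}, eliminate $\partial_t K_{\phi}$ and $\partial_t E^x$ using \eqref{eq:Kp_ltb_eom}--\eqref{eq:Ex_ltb_eom}, and reorganize by powers of $g_\Delta$ and $\partial_{K_{\phi}}g_\Delta$. The chain-rule term $(\partial_{K_{\phi}}g_\Delta)\,\partial_t K_{\phi}$ picks up the spin-connection piece $g^2(2h_2+4E^x\partial_{E^x}h_2-h_1)/(2\sqrt{E^x})$ of \eqref{eq:Kp_ltb_eom} --- the term dropped in \cite{Bojowald:2008ja} --- producing the cubic contribution $\mathcal{G}_2=(g_\Delta)^2(2h_2+4E^x\partial_{E^x}h_2-h_1)\partial_{K_{\phi}}g_\Delta$, while every other term is at most linear in $g_\Delta$ and its derivatives and assembles into $\mathcal{G}_1$; matching coefficients then gives \eqref{eq:con_non_m_g1} with the explicit $\mathcal{G}_1,\mathcal{G}_2$ of \eqref{eq:G12_expan_explicit}.

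I expect the last step to be the main obstacle: it is the heavy bookkeeping of $\widetilde{F}_{C_0}$ --- showing that, after the substitutions, all inverse-triad factors recombine exactly into the compact combinations $(2h_2+4E^x\partial_{E^x}h_2-h_1)$, $(h_1+f^{(1)})$ and $(4E^x\partial_{E^x}f^{(2)}-\partial_{K_{\phi}}f^{(1)})$ --- whereas the algebraic steps above are essentially forced. A secondary point worth spelling out cleanly is the pointwise independence of $\ltbf'$ and $\ltbf$ used in the second paragraph, since it is exactly this that singles out the non-marginal branch and makes $\partial_{\widetilde{K}_x}g_\Delta=0$ (hence the absence of any $K_x$-polymerization in $f$) unavoidable.
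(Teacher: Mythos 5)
Your proposal is correct and follows essentially the same route as the paper: conclude $\partial_{\widetilde{K}_x}g_\Delta=0$ from the $\ltbf'$-separation of \eqref{eq:LTB_gene} together with $\partial_{\ltbf}g_\Delta\neq 0$ and the non-vanishing classical limit of $(2h_2+4E^x\partial_{E^x}h_2-h_1)$, solve the resulting ODE in $\widetilde{K}_x$ from \eqref{eq:an1_g_eom_con} to fix the form of $f$, and substitute back to obtain $\mathcal{G}_1+\mathcal{G}_2=0$. The only part you leave schematic — verifying that $\widetilde{F}_{C_0}$ collapses to \eqref{eq:G12_expan_explicit} — is exactly the computation the paper itself defers to appendix \ref{app:ltb_general}, and you correctly identify the origin of the cubic term $\mathcal{G}_2$ in the spin-connection contribution to $\partial_t K_\phi$.
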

\begin{proof}~\\
In the non-marginally bound case we have $\partial_{\ltbf} {g}_{\Delta} \neq 0$ and thus $\partial_{\widetilde{K}_x} {g}_{\Delta}=0$ due to equation \eqref{eq:LTB_gene} and the argument presented right below \eqref{eq:LTB_gene}. Plugging this in the condition \eqref{eq:an1_g_eom_con} from Lemma \ref{lemma:g1_first} we can derive the general solution for holonomy corrections which is exactly the first equation in \eqref{eq:no_kx_f1_g1_n1}. Substituting this back into the equation  \eqref{eq:LTB_gene} gives condition \eqref{eq:con_non_m_g1}.
\end{proof}
\noindent Note that clearly \eqref{eq:con_non_m_g1} is also the condition in the marginally bound case when the polymerization function $f$ encoding the holonomy corrections takes the form in \eqref{eq:no_kx_f1_g1_n1}, or in other words when there is no polymerization of $\widetilde{K}_x$ as stated in Corollary 1 below.
\begin{corollary}\label{cor:gEx_m}
~\\
A compatible LTB condition ${g}_{\Delta} =   \tilde{g}_{\Delta}( K_{\phi}, E^x) $ in the marginally bound case exists when the polymerization function $f$ encoding the holonomy corrections is given by \eqref{eq:no_kx_f1_g1_n1}, namely there is no polymerization of $\widetilde{K}_x$ involved. The compatible LTB condition 
     ${g}_\Delta =   \tilde{g}_{\Delta}( K_{\phi}, E^x) $ is determined by the solution of \eqref{eq:con_non_m_g1}.
\end{corollary}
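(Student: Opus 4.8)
The plan is to specialize Lemma~\ref{thm:marginal} to the ansatz \eqref{eq:no_kx_f1_g1_n1} for $f$ and show that its first determining equation collapses, forcing $g_{\Delta}$ to be independent of $\widetilde{K}_x$. Concretely, from $f = \frac{f^{(1)}(K_{\phi},E^x)}{K_{\phi}+4E^x\widetilde{K}_x} + f^{(2)}(K_{\phi},E^x)$ one computes
\[
\partial_{\widetilde{K}_x} f = -\frac{4E^x f^{(1)}}{(K_{\phi}+4E^x\widetilde{K}_x)^2}, \qquad
\partial_{\widetilde{K}_x}^2 f = \frac{32 (E^x)^2 f^{(1)}}{(K_{\phi}+4E^x\widetilde{K}_x)^3}\,.
\]
Substituting these into the holonomy part of the first equation of Lemma~\ref{thm:marginal} gives
\[
8E^x K_{\phi}\,\partial_{\widetilde{K}_x} f + K_{\phi}(K_{\phi}+4E^x\widetilde{K}_x)\,\partial_{\widetilde{K}_x}^2 f
= -\frac{32(E^x)^2 K_{\phi} f^{(1)}}{(K_{\phi}+4E^x\widetilde{K}_x)^2} + \frac{32(E^x)^2 K_{\phi} f^{(1)}}{(K_{\phi}+4E^x\widetilde{K}_x)^2} = 0\,,
\]
so that the first equation of Lemma~\ref{thm:marginal} reduces to $-2\bigl(2h_2+4E^x\partial_{E^x}h_2-h_1\bigr)(\partial_{\widetilde{K}_x} g_{\Delta})^2 = 0$.

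Next I would invoke the fact, already used below \eqref{eq:LTB_gene}, that the prefactor $2h_2+4E^x\partial_{E^x}h_2-h_1$ has the non-vanishing classical limit $1$ and hence cannot vanish identically in the effective theory. Therefore $(\partial_{\widetilde{K}_x} g_{\Delta})^2 = 0$, i.e.\ $\partial_{\widetilde{K}_x} g_{\Delta}=0$. Combined with the marginally bound condition $\partial_{\ltbf} g_{\Delta}=0$ from Definition~\ref{def:comp_LTB} and the general structure \eqref{eq:an1_g_eom} of Lemma~\ref{lemma:g1_first}, the remaining $x$-dependence of $g_{\Delta}$ is then carried entirely by $K_{\phi},E^x$, so $g_{\Delta}=\tilde{g}_{\Delta}(K_{\phi},E^x)$ as claimed.

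It then remains to identify the second determining equation $\widetilde{F}_{C_0}=0$ of Lemma~\ref{thm:marginal} with \eqref{eq:con_non_m_g1}. Here I would argue exactly as in the proof of Lemma~\ref{thm:non_marginally bound case }: in \eqref{eq:LTB_gene} the first summand carries a factor $\partial_{\ltbf}g_{\Delta}$, which vanishes in the marginally bound case, so only $\widetilde{F}_{C_0}(\widetilde{K}_x,K_{\phi},E^x,g_{\Delta},\partial g_{\Delta},f,\partial f)=0$ survives; inserting $\partial_{\widetilde{K}_x}g_{\Delta}=0$ together with the form \eqref{eq:no_kx_f1_g1_n1} of $f$ into the explicit expression for $\widetilde{F}_{C_0}$ from \ref{eqapp:def_tfc0} reorganizes it precisely into $\mathcal{G}_1[\tilde{g}_{\Delta},f]+\mathcal{G}_2[\tilde{g}_{\Delta},f]=0$ with $\mathcal{G}_1,\mathcal{G}_2$ as in \eqref{eq:G12_expan_explicit}. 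This is exactly the content of the remark immediately preceding the Corollary, so that $\tilde{g}_{\Delta}(K_{\phi},E^x)$ is determined by the solution of \eqref{eq:con_non_m_g1}.

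The only genuinely computational part — and hence the main obstacle — is the last step: tracking the cancellations inside $\widetilde{F}_{C_0}$ once $f$ is restricted to \eqref{eq:no_kx_f1_g1_n1} and $\partial_{\widetilde{K}_x}g_{\Delta}=0$, and verifying that the leftover terms reassemble into the linear-plus-cubic combination $\mathcal{G}_1+\mathcal{G}_2$ with the stated coefficients. Everything else is the short algebraic cancellation displayed above together with the non-degeneracy argument for the inverse-triad prefactor.
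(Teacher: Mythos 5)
Your proof is correct and follows the same route the paper implicitly takes: it specializes Lemma \ref{thm:marginal} to the ansatz \eqref{eq:no_kx_f1_g1_n1}, uses the vanishing of $8E^xK_\phi\partial_{\widetilde{K}_x}f+K_\phi(K_\phi+4E^x\widetilde{K}_x)\partial^2_{\widetilde{K}_x}f$ together with the non-degeneracy of $2h_2+4E^x\partial_{E^x}h_2-h_1$ to force $\partial_{\widetilde{K}_x}g_\Delta=0$, and then invokes the reduction of $\widetilde{F}_{C_0}$ to $\mathcal{G}_1+\mathcal{G}_2$ already established in appendix \ref{app:ltb_general}. The paper states the corollary as an immediate consequence of the remark preceding it, and your write-up simply makes that reasoning explicit.
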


\noindent Another interesting case is for  non-marginally bound  solutions when we assume the form of the LTB condition to be ${g}_{\Delta} = \tilde{g}_{\Delta}( K_{\phi}, E^x) \ltbf$. Then $\mathcal{G}_1$ and $\mathcal{G}_2$ contain different orders of ${g}$ and $\ltbf$. Hence the condition in \eqref{eq:con_non_m_g1} actually implies $\mathcal{G}_1 = \mathcal{G}_2 =0$. This is the content of Corollary \ref{cor:gEx_non_m}.
\begin{corollary}
\label{cor:gEx_non_m}
~\\
For an LTB condition of the form ${g}_{\Delta} = \tilde{g}_{\Delta}( K_{\phi}, E^x) \ltbf$ in the non-marginally bound case, the compatible LTB condition is given by
    \begin{eqnarray}    \label{eq:ltb_con_to_f1_f2}
    {g}_\Delta = \tilde{g}_{\Delta}(E^x)\ltbf \, ,\qquad
    1-\frac{2{E^x} \partial_{E^x} \tilde{g}_{\Delta}}{\tilde{g}_{\Delta}} =\frac{-4 {E^x} \partial_{E^x}{f}^{(2)}+  \partial_{K_{\phi}}{f}^{(1)}}{2 {f}^{(2)}} =\text{Con}_{f}%
\end{eqnarray}
where $\text{Con}_{f}$ must be a function that depends on $E^x$ only, which puts even more constraints on the polymerization function $f$ encoding the holonomy corrections.
\end{corollary}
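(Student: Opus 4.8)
The plan is to feed the special ansatz ${g}_{\Delta} = \tilde{g}_{\Delta}(K_{\phi},E^x)\,\ltbf$ into the master condition \eqref{eq:con_non_m_g1} of Lemma \ref{thm:non_marginally bound case } and then exploit that, in the non-marginally bound case, $\ltbf(x)$ is a free profile (subject only to $\ltbf'\neq0$) which can be prescribed independently of the pointwise values of the phase space data $K_{\phi},E^x$. With this ansatz all derivatives of ${g}_{\Delta}$ carry a single power of $\ltbf$, i.e.\ $\partial_{E^x}{g}_{\Delta}=(\partial_{E^x}\tilde{g}_{\Delta})\ltbf$ and $\partial_{K_{\phi}}{g}_{\Delta}=(\partial_{K_{\phi}}\tilde{g}_{\Delta})\ltbf$, so that in \eqref{eq:G12_expan_explicit} the contribution $\mathcal{G}_1$ is homogeneous of degree one in $\ltbf$ whereas $\mathcal{G}_2$, being cubic in ${g}_{\Delta}$ and $\partial_{K_{\phi}}{g}_{\Delta}$, is homogeneous of degree three. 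Since the identity $\mathcal{G}_1+\mathcal{G}_2=0$ must hold along a family of solutions in which $\ltbf$ sweeps arbitrary values at fixed $K_{\phi},E^x$, and $z\mapsto z$ and $z\mapsto z^3$ are linearly independent, it forces $\mathcal{G}_1=0$ and $\mathcal{G}_2=0$ separately.

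I would then read off each piece. From $\mathcal{G}_2=(\tilde g_\Delta)^2\bigl(2h_2+4E^x\partial_{E^x}h_2-h_1\bigr)(\partial_{K_{\phi}}\tilde g_\Delta)\,\ltbf^3=0$, using that neither $2h_2+4E^x\partial_{E^x}h_2-h_1$ (classical limit $1$) nor $(\tilde g_\Delta)^2$ (classical limit $1$) can vanish identically in the effective theory, we conclude $\partial_{K_{\phi}}\tilde{g}_{\Delta}=0$, i.e.\ $\tilde{g}_{\Delta}=\tilde{g}_{\Delta}(E^x)$; this is the first assertion in \eqref{eq:ltb_con_to_f1_f2}. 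Substituting $\partial_{K_{\phi}}\tilde{g}_{\Delta}=0$ into $\mathcal{G}_1=0$ annihilates the $(h_1+f^{(1)})\partial_{K_{\phi}}{g}_{\Delta}$ term and leaves $2f^{(2)}\bigl(\tilde g_\Delta-2E^x\partial_{E^x}\tilde g_\Delta\bigr)+\tilde g_\Delta\bigl(4E^x\partial_{E^x}f^{(2)}-\partial_{K_{\phi}}f^{(1)}\bigr)=0$; dividing by $2f^{(2)}\tilde g_\Delta$, which is legitimate because $f^{(2)}$ and $\tilde g_\Delta$ are not identically zero (classical limits $K_{\phi}$ and $1$), rearranges this exactly into $1-\tfrac{2E^x\partial_{E^x}\tilde g_\Delta}{\tilde g_\Delta}=\tfrac{-4E^x\partial_{E^x}f^{(2)}+\partial_{K_{\phi}}f^{(1)}}{2f^{(2)}}$, which is the second assertion and also the defining expression of $\text{Con}_f$.

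Finally, since $\tilde g_\Delta$ has just been shown to depend on $E^x$ alone, the left-hand side of this last relation is a function of $E^x$ only, hence so is the right-hand side, i.e.\ $\text{Con}_f$ must depend on $E^x$ only; this is the extra restriction on the holonomy-correction data $f^{(1)},f^{(2)}$, on top of the form already fixed by Lemma \ref{thm:non_marginally bound case }. The only step that needs genuine care rather than routine algebra is the degree-in-$\ltbf$ separation in the first paragraph: one has to argue that compatibility in the sense of Definition \ref{def:comp_LTB} is required for the whole family of non-marginally bound profiles together with arbitrary admissible initial data, so that matching the coefficients of $\ltbf$ and $\ltbf^3$ is justified and not merely a single pointwise identity; everything downstream is elementary manipulation of \eqref{eq:G12_expan_explicit}.
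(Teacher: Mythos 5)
Your proposal is correct and follows essentially the same route as the paper: the paper's (very terse) argument is precisely that with the ansatz ${g}_{\Delta}=\tilde{g}_{\Delta}(K_{\phi},E^x)\,\ltbf$ the terms $\mathcal{G}_1$ and $\mathcal{G}_2$ in \eqref{eq:con_non_m_g1} are homogeneous of degree one and three in $\ltbf$ respectively, so both must vanish separately, after which $\mathcal{G}_2=0$ forces $\partial_{K_{\phi}}\tilde{g}_{\Delta}=0$ and $\mathcal{G}_1=0$ reduces to \eqref{eq:ltb_con_to_f1_f2}. Your write-up actually supplies more justification than the paper does (the non-vanishing of the classical-limit factors and the independence of the $\ltbf$-profile from the pointwise phase-space data), which is a welcome elaboration rather than a deviation.
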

\noindent Note that the condition $\text{Con}_{f}$ on the right hand side of the second equation in \eqref{eq:ltb_con_to_f1_f2} is responsible for closing the algebra between $C^{\Delta}$ and $C_x$ as shown in Lemma \ref{lemma:f1_CC}. This makes the polymerization function $f$ encoding the holonomy corrections given in the form of Lemma \ref{lemma:f1_CC} a special class of models for which the algebra of $C^{\Delta}$ and $C_x$ is closed. Note that the algebra of the total constraints can close even if the algebra of $C^{\Delta}$ and $C_x$ does not. In this case the polymerization corresponds to mimetic models that do not deparametrize. A special case is the vacuum case where $C^{\Delta}$ itself is a constraint again. The vacuum case will be discussed in a separate article \cite{GenBirkhoff}.
~\\
~\\
For the class of  models where the algebra of $C^{\Delta}$ and $C_x$ closes, we can formulate the following two further corollaries. The first one, Corollary \ref{cor:fEx_clo} discussed the existence and form of compatible LTB conditions in the class of models that are covered by Lemma  \ref{lemma:f1_CC}.
\begin{corollary}
\label{cor:fEx_clo}
~\\
A compatible LTB condition ${g}_{\Delta} = \tilde{g}_{\Delta}(E^x)\Xi$ in both the marginally and non-marginally bound case always exists if the polymerization function encoding the holonomy corrections $f$ is given by the form in Lemma \ref{lemma:f1_CC}. In this case $C^{\Delta}$ in \eqref{eq:defpolyhamiltonianconstraint}
is conserved under the dynamics and there is a closed algebra of $C^{\Delta}$ and $C_x$. A solution for ${g}_{\Delta} = \tilde{g}_{\Delta}(E^x) \ltbf$ is given by
    \begin{eqnarray}\label{eq:g_as_fEx}
         {2{E^x} \partial_{E^x} \tilde{g}_{\Delta}} = \left(1-\frac{h_1 - 2 E^x \partial_{E^x} h_2}{h_2} \right)\tilde{g}_{\Delta}
    \end{eqnarray}
    Moreover, the classical LTB condition ${g}_c=\Xi(x)$ is compatible if and only if further requiring that the functions $h_1,h_2$ encoding the inverse triad corrections  in \eqref{eq:InvTriadCorr} satisfy $h_1 = h_2 + 2 E^x \partial_{E^x} h_2$. 
\end{corollary}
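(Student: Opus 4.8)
The plan is to obtain Corollary~\ref{cor:fEx_clo} as a direct consequence of Lemma~\ref{lemma:f1_CC} together with Corollary~\ref{cor:gEx_non_m}. First I would recall that for a polymerization function $f$ of the form given in Lemma~\ref{lemma:f1_CC}, namely \eqref{eq:no_kx_f1}, the $\widetilde{K}_x$–dependence of $f$ is exactly such that $\partial_{\widetilde{K}_x}^2 f = 0$ in the combination appearing in \eqref{eq:an1_g_eom_con}; more precisely, writing $f$ in this form the curvature term of $C^{\Delta}$ reduces to the expression displayed just after Lemma~\ref{lemma:f1_CC}, with $K_x$ entering only linearly through $f^{(2)}$. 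Hence $f$ is already of the shape \eqref{eq:no_kx_f1_g1_n1} with the identifications $f^{(1)}\to f^{(1)}$, $f^{(2)}\to f^{(2)}$ and we are precisely in the situation of Lemma~\ref{thm:non_marginally bound case } and Corollary~\ref{cor:gEx_non_m} (and of Corollary~\ref{cor:gEx_m} in the marginally bound case). Therefore a compatible LTB condition of the form ${g}_{\Delta}=\tilde{g}_{\Delta}(E^x)\ltbf$ is governed by \eqref{eq:ltb_con_to_f1_f2}, i.e.\ by $1-\tfrac{2E^x\partial_{E^x}\tilde g_\Delta}{\tilde g_\Delta}=\mathrm{Con}_f$.

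Next I would invoke the PDE \eqref{final_con_clo} from Lemma~\ref{lemma:f1_CC}, which for precisely this class of $f$ identifies
\[
\mathrm{Con}_f \;=\; \frac{-4E^x\partial_{E^x}f^{(2)}+\partial_{K_\phi}f^{(1)}}{2f^{(2)}}\;=\;\frac{h_1-2E^x\partial_{E^x}h_2}{h_2}\,,
\]
which is manifestly a function of $E^x$ alone. This is the crucial point: for a generic $f$ of the form \eqref{eq:no_kx_f1_g1_n1} one only knows from Corollary~\ref{cor:gEx_non_m} that $\mathrm{Con}_f$ \emph{must} depend on $E^x$ only, which is an extra restriction; but for $f$ in the class of Lemma~\ref{lemma:f1_CC} this restriction is \emph{automatically} satisfied, so a solution $\tilde g_\Delta(E^x)$ always exists. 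Substituting $\mathrm{Con}_f=(h_1-2E^x\partial_{E^x}h_2)/h_2$ into $1-\tfrac{2E^x\partial_{E^x}\tilde g_\Delta}{\tilde g_\Delta}=\mathrm{Con}_f$ and rearranging gives
\[
2E^x\partial_{E^x}\tilde g_\Delta \;=\;\Bigl(1-\frac{h_1-2E^x\partial_{E^x}h_2}{h_2}\Bigr)\tilde g_\Delta\,,
\]
which is \eqref{eq:g_as_fEx}; this is a first–order linear ODE in $\tilde g_\Delta(E^x)$ whose solution exists (by quadrature) for any admissible $h_1,h_2$, establishing existence in both the marginally and non-marginally bound cases. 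That $C^{\Delta}$ is conserved and that the algebra of $C^{\Delta}$ and $C_x$ closes is immediate from Lemma~\ref{lemma:f1_CC} itself, since $f$ is by hypothesis of that form.

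Finally, for the last claim I would ask when the \emph{classical} condition ${g}_c=\Xi(x)$, i.e.\ $\tilde g_\Delta\equiv 1$, solves \eqref{eq:g_as_fEx}. Setting $\tilde g_\Delta=1$ forces $\partial_{E^x}\tilde g_\Delta=0$, so the left side vanishes and the right side reduces to $\bigl(1-\tfrac{h_1-2E^x\partial_{E^x}h_2}{h_2}\bigr)=0$, i.e.\ $h_1=h_2+2E^x\partial_{E^x}h_2$; conversely this condition makes $\tilde g_\Delta\equiv1$ an admissible solution of the ODE, which is the stated ``if and only if''. I do not anticipate a serious obstacle here: the entire argument is bookkeeping that recognizes the class of Lemma~\ref{lemma:f1_CC} as a subcase of \eqref{eq:no_kx_f1_g1_n1} and then reads off the consistency of the two expressions for $\mathrm{Con}_f$. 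The only point requiring mild care is checking that a generic $f$ of the form \eqref{eq:no_kx_f1} indeed satisfies $\partial_{\widetilde{K}_x}^2 f$ and the combination in \eqref{eq:an1_g_eom_con} in a way consistent with $\partial_{\widetilde{K}_x}{g}_\Delta=0$, so that Corollary~\ref{cor:gEx_non_m} genuinely applies rather than the more general Lemma~\ref{lemma:g1_first}; this follows because in \eqref{eq:no_kx_f1} the $\widetilde{K}_x$–dependence sits entirely in the single factor $(K_\phi+4E^x\widetilde{K}_x)^{-1}$ multiplying $K_\phi$–ratios, which upon inserting into $C^\Delta$ cancels against the overall $(K_\phi+4E^x\widetilde{K}_x)$, leaving no $\widetilde{K}_x$ beyond the linear $f^{(2)}$ term, hence the right side of \eqref{eq:an1_g_eom_con} combines to be consistent with $\partial_{\widetilde{K}_x}{g}_\Delta=0$.
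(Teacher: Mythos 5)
Your proposal is correct and follows essentially the same route the paper takes (the paper leaves this corollary without a displayed proof, but the remark preceding it makes exactly your chain of reasoning: Lemma \ref{lemma:f1_CC}'s $f$ falls into the class with $\partial_{\widetilde{K}_x}{g}_\Delta=0$, Corollary \ref{cor:gEx_non_m} then gives $1-2E^x\partial_{E^x}\tilde g_\Delta/\tilde g_\Delta=\mathrm{Con}_f$, and \eqref{final_con_clo} makes $\mathrm{Con}_f$ a function of $E^x$ alone, yielding the ODE \eqref{eq:g_as_fEx} and the stated iff for $\tilde g_\Delta\equiv 1$). Your closing check that the combination $8E^xK_\phi\partial_{\widetilde{K}_x}f+K_\phi(K_\phi+4E^x\widetilde{K}_x)\partial^2_{\widetilde{K}_x}f$ vanishes for $f$ of the form \eqref{eq:no_kx_f1} is the right way to see that Corollary \ref{cor:gEx_non_m} genuinely applies.
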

\noindent A special case of Corollary \ref{cor:fEx_clo} is when $h_1 =h_2=1$, such that there are no inverse triad corrections present. Then  as can be easily seen from \eqref{eq:g_as_fEx} the classical LTB condition ${g}_c=\Xi(x)$ is a compatible one.
~\\
Note that in \cite{Bojowald:2005cb,Bojowald:2009ih} only models with a polymerized LTB condition were considered and the class of models with $h_1 =h_2=1$ were not considered.
Furthermore, as also discussed more in detail in section \ref{sec:examples} the closure of the algebra of $C^{\Delta}$ and $C_x$ and thus the conservation of the scalar density, was not a criterion for the models considered in \cite{Bojowald:2005cb,Bojowald:2009ih}.
~\\
~\\
In the final Corollary of this subsection the form of the polymerized scalar constraint and effective dynamics are discussed. In addition it is also shown how these dynamics can be linked back to the Dirac brackets discussed in subsection \ref{sec:ClassNonMarg}.
\begin{corollary}
\label{cor:C_and_dynamics}
~\\
Models with a polymerization function $f$ encoding the holonomy corrections given by the form in Lemma \ref{lemma:f1_CC} and further the LTB condition ${g}_{\Delta} = \tilde{g}_{\Delta}(E^x)\Xi$ due to Corollary \ref{cor:gEx_non_m} and \ref{cor:fEx_clo} yield an effective scalar density of the form
    \begin{eqnarray}\label{eq:def_C_tilde}
        C^{\Delta} = \frac{1}{2G}\left( \frac{\ltbf'}{\ltbf} + \partial_x \right)\widetilde{C}^{\Delta} \qquad{\rm with}\quad  \widetilde{C}^{\Delta} = \frac{\sqrt{E^x}}{\tilde{g}_{\Delta} \ltbf} \left( - F^{(2)} + h_2 \left( \tilde{g}_{\Delta}^2 \ltbf^2 -1 \right) \right) 
    \end{eqnarray}
    where we defined $\partial_{K_{\phi}} F^{(2)}(K_{\phi},E^x) =2 {f}^{(2)}(K_{\phi},E^x)$ satisfying 
\begin{eqnarray}
        \frac{h_1 - 2 E^x \partial_{E^x} h_2}{h_2} = \frac{-2 E^x \partial_{E^x} F^{(2)} + f^{(1)} }{ F^{(2)} }\, .
    \end{eqnarray}
    The effective dynamics in (\ref{eq:Kp_ltb_eom}-\ref{eq:Ex_ltb_eom}) reduce for those class of models to
    \begin{eqnarray}\label{eq:eff_dy_new}
        \partial_t E^x = 2 \sqrt{E^x} {f}^{(2)} \ , \qquad
        \partial_t K_{\phi} = - \frac{1}{2 \sqrt{E^x}} \left( f^{(1)}-  \tilde{g}_{\Delta}^2 \ltbf^2 (2 h_2 + 4 E^x \partial_{E^x} h_2 -h_1) + h_1 \right).
    \end{eqnarray}
These equations are valid for the marginally as well as non-marginally bound case. In case of the latter they can be generated from $C^{\Delta}$ with the Dirac bracket corresponding to the second class constraint pair $G_x^\Delta$ and $C_x$
    \begin{eqnarray}\label{eq:eff_dirac_b}
        \poissonbracket{K_{\phi}(x)}{E^x(y)}_D = - 2 G \frac{ \ltbf^2 }{\ltbf'}\tilde{g}_{\Delta} \delta(x-y).
    \end{eqnarray}
    One of the properties of these class of models is that the effective dynamics decouples completely along the radial $x$ direction. Clearly $\widetilde{C}^{\Delta}$ is conserved under the effective dynamics.
\end{corollary}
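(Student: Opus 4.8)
The plan is to prove the four claims in the order they appear: the closed form \eqref{eq:def_C_tilde} of $C^{\Delta}$, the reduced equations of motion \eqref{eq:eff_dy_new}, the Dirac-bracket realization \eqref{eq:eff_dirac_b}, and the radial decoupling together with the conservation of $\widetilde{C}^{\Delta}$; each step feeds the next.

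\emph{Step 1 (the form of $C^{\Delta}$).} I would start from the intermediate expression for $C^{\Delta}$ obtained by inserting the Lemma~\ref{lemma:f1_CC} shape of $f$ from \eqref{eq:no_kx_f1} into \eqref{eq:defpolyhamiltonianconstraint}, in which the extrinsic-curvature part is linear in $K_x$ with coefficient $f^{(2)}$ and the potential term is $f^{(1)}$. Next I impose the given compatible condition ${g}_{\Delta}=\tilde g_{\Delta}(E^x)\,\ltbf$ (from Corollaries~\ref{cor:gEx_non_m} and \ref{cor:fEx_clo}) together with $C_x=0$: by \eqref{eq:Ep_Kx_ltb} this fixes $E^{\phi}=\frac{(E^x)'}{2\tilde g_{\Delta}\ltbf}$, $\frac{4K_x}{E^{\phi}}=\frac{4\,\partial_x K_{\phi}}{\partial_x E^x}$ and $\frac{(E^x)'}{2E^{\phi}}=\tilde g_{\Delta}\ltbf$, so $\bigl(\frac{(E^x)'}{2E^{\phi}}\bigr)'=(\tilde g_{\Delta}\ltbf)'$. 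Using $\partial_{K_{\phi}}F^{(2)}=2f^{(2)}$ to trade $2f^{(2)}\partial_x K_{\phi}$ for $\partial_x F^{(2)}-(\partial_{E^x}F^{(2)})\partial_x E^x$, the whole bracket collapses into a single total $x$-derivative, i.e.\ $C^{\Delta}=\frac{1}{2G\ltbf}\partial_x(\ltbf\,\widetilde{C}^{\Delta})=\frac{1}{2G}\bigl(\frac{\ltbf'}{\ltbf}+\partial_x\bigr)\widetilde{C}^{\Delta}$ with $\widetilde{C}^{\Delta}$ as in \eqref{eq:def_C_tilde}. The remaining $\partial_x E^x$ and $\ltbf'$ pieces must cancel, and I expect this to work out precisely because of the defining relation \eqref{eq:g_as_fEx} for $\tilde g_{\Delta}$ together with the consistency identity $\text{Con}_f=\frac{-2E^x\partial_{E^x}F^{(2)}+f^{(1)}}{F^{(2)}}$, which follows from integrating \eqref{final_con_clo} in $K_{\phi}$ using $\partial_{K_{\phi}}F^{(2)}=2f^{(2)}$. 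This matching of the inverse-triad and spin-connection terms — checking that the cross terms produced by $\partial_x$ acting on the product $\frac{\sqrt{E^x}}{\tilde g_{\Delta}\ltbf}\bigl(-F^{(2)}+h_2(\tilde g_{\Delta}^2\ltbf^2-1)\bigr)$ recombine into the $h_1,h_2$ structures — is the main bookkeeping obstacle, and it is exactly where \eqref{eq:g_as_fEx} is forced. The marginally bound case is identical with $\ltbf\equiv 1$, $\ltbf'=0$.

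\emph{Step 2 (dynamics and decoupling).} I would substitute the same $f$ and ${g}_{\Delta}=\tilde g_{\Delta}\ltbf$ into the general effective equations \eqref{eq:Kp_ltb_eom}--\eqref{eq:Ex_ltb_eom}, with $\mathcal{F}_{K_{\phi}},\mathcal{F}_{E^x}$ from Appendix~\ref{app:eom}, or equivalently compute from the reduced $C^{\Delta}$ of Step~1. Because the $\widetilde{K}_x$-polymerization is absent, $C^{\Delta}$ is linear in $K_x$ with no $\partial_x K_x$, so the $E^x$-equation reduces at once to $\partial_t E^x=2\sqrt{E^x}f^{(2)}$; collecting the remaining terms of $\mathcal{F}_{K_{\phi}}$ — including the spin-connection contribution, the first term of \eqref{eq:Kp_ltb_eom}, that was dropped in \cite{Bojowald:2008ja} — against the $f^{(1)},h_1,h_2$ terms yields the $K_{\phi}$-equation in \eqref{eq:eff_dy_new}. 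Since $f^{(1)},f^{(2)}$ depend only on $K_{\phi},E^x$, the functions $\tilde g_{\Delta},h_1,h_2$ only on $E^x$, and $\ltbf$ only on $x$, the right-hand sides evaluated at a fixed $x$ involve only $K_{\phi}(t,x),E^x(t,x)$ and the constant $\ltbf(x)$: no spatial derivatives of the fields enter, which is precisely the complete decoupling in $x$. Conservation of $\widetilde{C}^{\Delta}$ then follows from $\partial_t C^{\Delta}=0$ (Lemma~\ref{lemma:f1_CC} / Corollary~\ref{cor:fEx_clo}) together with $\partial_x(\ltbf\widetilde{C}^{\Delta})=2G\,\ltbf\,C^{\Delta}$, or by a one-line direct check $\partial_t\widetilde{C}^{\Delta}=0$ using \eqref{eq:eff_dy_new}.

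\emph{Step 3 (Hamiltonian realization).} For the non-marginally bound case I would use that $G_x^{\Delta}$ of \eqref{eq:GxDelta} and $C_x$ form a second-class pair with $\poissonbracket{C_x(x)}{G_x^{\Delta}(y)}=(\partial_{\ltbf}{g}_{\Delta})\,\ltbf'\,\delta(x,y)=\tilde g_{\Delta}\,\ltbf'\,\delta(x,y)$, which is invertible since $\ltbf'\neq 0$. Inverting this $2\times2$ system exactly as in Section~\ref{sec:ClassNonMarg} — and noting that the correction only modifies brackets through the $E^{\phi}$- and $K_x$-dependence eliminated by $G_x^{\Delta}$ and $C_x$ — gives $\poissonbracket{K_{\phi}(x)}{E^x(y)}_D=-2G\frac{\ltbf^2}{\ltbf'}\tilde g_{\Delta}\,\delta(x,y)$, reducing to the classical bracket in \eqref{eq:DBGravPart} when $\tilde g_{\Delta}=1$. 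Finally I would integrate $\int\!\mathrm{d}x\,C^{\Delta}=\frac{1}{2G}\int\!\mathrm{d}x\,\frac{1}{\ltbf}\partial_x(\ltbf\widetilde{C}^{\Delta})$ by parts, discarding the boundary term, to obtain the reduced physical Hamiltonian $\frac{1}{2G}\int\!\mathrm{d}x\,\frac{\ltbf'}{\ltbf}\widetilde{C}^{\Delta}$, and check that $\poissonbracket{\,\cdot\,}{\int C^{\Delta}}_D$ reproduces \eqref{eq:eff_dy_new} — the same computation that in Section~\ref{sec:ClassNonMarg} took \eqref{eq:PGFHcanClass} to the classical LTB equations of motion, now with $\tilde g_{\Delta},F^{(2)},h_2$ in place of $1,K_{\phi}^2,1$. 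The ``up to boundary terms'' caveat in the statement is exactly this integration by parts.
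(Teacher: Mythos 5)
Your proposal is correct and follows essentially the same route the paper takes (the corollary is proved there by direct computation from Lemma \ref{lemma:f1_CC}, Corollary \ref{cor:fEx_clo}, the equations of motion in Appendix \ref{app:eom}, and the Dirac-bracket construction of section \ref{sec:ClassNonMarg}): rewriting $2f^{(2)}\partial_xK_\phi$ as a total derivative of $F^{(2)}$ on $C_x=0$, using \eqref{eq:g_as_fEx} and $\mathrm{Con}_f$ to absorb the leftover $\partial_xE^x$ and $\ltbf'$ pieces, and reading off the decoupled dynamics and the modified bracket. The only point worth stating explicitly is that integrating \eqref{final_con_clo} in $K_\phi$ produces an $E^x$-dependent integration constant in $F^{(2)}$, which must be fixed so that $\mathrm{Con}_f F^{(2)}=-2E^x\partial_{E^x}F^{(2)}+f^{(1)}$ holds exactly — this is what the word ``satisfying'' in the statement encodes, and your phrasing slightly glosses over that choice.
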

~\\
The fact that the effective dynamics are decoupled completely along the radial direction is interesting also in the context of the dust shell models considered in \cite{Kiefer:2019csi,Giesel:2021dug} where such a decoupling was one of the basic assumptions for the dust shell models. The work in this article shows that under the further assumptions discussed in Corollary \ref{cor:C_and_dynamics} this is indeed justified.
~\\
Note that for monotonic $\ltbf(x)$, %
we can rewrite \eqref{eq:eff_dirac_b} as
\begin{eqnarray}\label{eq:Diracbracket_simplified}
    \poissonbracket{K_{\phi}(\ltbf^{-1}(x))}{E^x(\ltbf^{-1}(y))}_D =  2 G \,\mathrm{sgn}\qty(\ltbf'(x))\,   \tilde{g}_{\Delta} \delta\left(\ltbf^{-1}(x)-\ltbf^{-1}(y)\right)
\end{eqnarray}
with the primary Hamiltonian up to boundary terms given by
\begin{eqnarray}
    H^{\Delta}_{\text{pri}} = \frac{1}{2G} \int \text{d}x \  \frac{\ltbf'}{\ltbf^2} \widetilde{C}^{\Delta} = \frac{1}{2G} \int \text{d} \ltbf^{-1} \ \widetilde{C}^{\Delta} 
\end{eqnarray}
Given this result, one can then either use this as a classical starting point and proceed with the standard loop quantum gravity inspired quantization procedure or use it as an effective model corresponding to such quantum models.
~\\
As discussed in subsection \ref{sec:ClassNonMarg} coming from a spherically symmetric model and 
implementing the LTB condition for the non-marginally bound case can be understood as a gauge fixing with respect to the spatial diffeomorphism constraint. Hence, along the lines of the work in \cite{Giesel:2021rky} we can address the question for which class of models gauge fixing and quantization commute. Here this carries over to implementing the LTB condition at the classical level and quantize/polymerize afterwards or first polymerize and afterwards implement the LTB condition with respect to the effective dynamics. In particular the difference manifests in models where either the whole spherically symmetric sector is polymerized/quantized or the LTB sector only.  In general we do not expect such a commutativity between the two steps. However, there exist a class of models where these two steps indeed commute: 
\begin{remark}
~\\
For non-marginally bound models for which the compatible LTB condition is exactly the classical one, that is $G_x^\Delta$  in \eqref{eq:GxDelta} exactly agrees with $G_x$ in \eqref{eq:GFconditionsclassical},
polymerization and implementing the LTB condition commute. 
\end{remark}

\section{Applications}\label{sec:examples}
On the basis of the general results given in previous section, we want to discuss in the following two concrete examples. The first one in subsection \ref{sec:example_1} corresponds to the models considered in the context of work in \cite{Bojowald:2008ja} and \cite{Bojowald:2009ih}. The first paper mainly considered models in the marginally bound case, whereas the second one extends the analysis to the non-marginally bound sector. These models do not have a polymerization of the radial curvature component $K_x$, but go beyond the scope of theories described in Lemma \ref{lemma:f1_CC} and thus are in accordance with the results from Corollaries \ref{cor:gEx_m} and \ref{cor:gEx_non_m}. We will show a detailed comparison between the results given in \cite{Bojowald:2008ja} and \cite{Bojowald:2009ih} and study their dynamics numerically. 
~\\
The second example discussed in subsection \ref{sec:example_2} is in the class of models covered by Lemma \ref{lemma:f1_CC} with a $\frac{\sin(\mu x)}{\mu}$-polymerization in both the $\mu_0$ and the $\bar{\mu}$ schemes with no inverse triad corrections present. The latter effective model can be found in the work \cite{Tibrewala:2012xb}. Applying Corollaries \ref{cor:fEx_clo} and \ref{cor:C_and_dynamics}, we can see that in this example for both schemes the classical LTB condition is the compatible LTB condition. In addition their equations of motion decouple at each $x$ and coincide with the LQC effective equations for each $x$. 

\subsection{Example 1}
\label{sec:example_1}
Based on the models in \cite{Bojowald:2008ja} and \cite{Bojowald:2009ih}, we want to analyze effective LTB models that have holonomy corrections of the form $K_{\phi} \to \sin(\beta \sqrt{\Delta} K_{\phi})/(\beta \sqrt{\Delta})$ as well as inverse triad corrections. Any other holonomy corrections of $K_{\phi}$ can be treated similarly. Note that $K_x$ is not polymerized in these models. In the following we will consider inverse triad corrections switched on and off. In our notation, the holonomy corrections are parameterized by a function $f$ (see equation \eqref{eq:defpolyhamiltonianconstraint} and \eqref{eq:no_kx_f1}) comparing this with equation (75) and (49) as well as (50) for the two types of inverse triad corrections in \cite{Bojowald:2008ja} or with equations (30) and (28) in \cite{Bojowald:2009ih}, we obtain the following relations with the notation used in \cite{Bojowald:2008ja,Bojowald:2009ih}
\begin{eqnarray}
  && \text{I} :  \qquad {f}^{(1)} = h_1 ({f}^{(2)})^2 \ , \quad {f}^{(2)} = \frac{\sin(\beta \sqrt{\Delta} K_{\phi})}{\beta \sqrt{\Delta}} \ , \quad h_2 = l(E^x) \ , \\
  && \text{II} :  \qquad {f}^{(1)} = ({f}^{(2)})^2/h_1 \ , \quad {f}^{(2)} = h_1 \frac{\sin(\beta \sqrt{\Delta} K_{\phi})}{\beta \sqrt{\Delta}} \ , \quad h_2 = h(E^x)\ ,
\end{eqnarray}
with $\Delta = 4 \pi l_p^2 $ and the inverse triad corrections $h_1$ are taken as
\begin{eqnarray}
    h_1(E^x) = \alpha(E^x) =\sqrt{E^x} \frac{\sqrt{E^x + \beta l_p^2/2} - \sqrt{E^x - \beta l_p^2/2}}{\beta l_p^2/2} \,. 
\end{eqnarray}
Case $\text{I}$ and $\text{II}$ refer to the two types of inverse triad corrections introduced in \cite{Bojowald:2008ja,Bojowald:2009ih}. The second function $h_2(E^x)$ encoding inverse triad corrections is only necessary to parameterize corrections related to the spin connection encoded in
$l(E^x),h(E^x)$ introduced in \cite{Bojowald:2009ih}. It is then straightforward to check that 
\begin{eqnarray}
    && \text{I} :  \qquad \frac{-4 {E^x} \partial_{E^x}{f}^{(2)}+  \partial_{K_{\phi}}{f}^{(1)}}{2 {f}^{(2)}} = \cos(\beta \sqrt{\Delta} K_{\phi}) h_1 \\
     && \text{II} :  \qquad \frac{-4 {E^x} \partial_{E^x}{f}^{(2)}+  \partial_{K_{\phi}}{f}^{(1)}}{2 {f}^{(2)}} = \frac{1}{h_1} \left( \cos(\beta \sqrt{\Delta} K_{\phi}) h_1 - 2 E^x \partial_{E^x} h_1 \right).
\end{eqnarray}
As the above equations show in both cases, the equations ${f}^{(1)} $ and ${f}^{(2)}$ need to satisfy are not functions of $E^x$ only when there are holonomy corrections. %
Thus we have 
\begin{equation*}
\poissonbracket{H_{P}^{\Delta}}{C^{\Delta}}\eval_{C_x = 0} \neq 0    
\end{equation*}
according to Lemma \ref{lemma:f1_CC} (\eqref{final_con_clo} in Lemma \ref{lemma:f1_CC} is violated), i.e. the Hamiltonian density is not conserved in contrast to the classical theory. This implies there will be no compatible LTB condition in the form of ${g}_{\Delta} = \tilde{g}_{\Delta} \ltbf$ in the non-marginally bound case due to Corollary \ref{cor:gEx_non_m}. Moreover, as shown in \cite{GenBirkhoff}, this further implies there exist no non-trivial stationary vacuum solution. Nonetheless in the marginal bound case a compatible LTB condition exists since we clearly have $\partial_{\widetilde{K}_x} {g}_\Delta = 0$ and thus Corollary \ref{cor:gEx_m} applies. 
~\\
In case there are no holonomy corrections, the existence of a non-trivial $h_2$ can close the algebra of $C^\Delta$ and $C_x$ (see  \eqref{final_con_clo} in Lemma  \ref{lemma:f1_CC}) and so a compatible LTB condition in both the marginally and non-marginally bound case can exist. This is the reason why in \cite{Bojowald:2009ih} the correction function $h_2$ is necessary. However we obtain a different $h_2$ function compared with  \cite{Bojowald:2009ih} in which the classical LTB condition is used to reduce the spin connection as discussed in section \ref{sec:CompBojowald}.
~\\
The modified LTB condition ${g}_\Delta={g}_{\Delta}^{(1)}(K_{\phi},E^x)$ is given by the solution of equation \eqref{eq:con_non_m_g1}, which in this model here becomes
\begin{eqnarray}
   \text{I}: \quad  0&=& \frac{\sin (\beta \sqrt{\Delta} K_{\phi}) }{\beta \sqrt{\Delta} } \left({g}_{\Delta}^{(1)} \left(-1+h_1 \cos (\beta \sqrt{\Delta} K_{\phi})\right)+2 {E^x} \partial_{E^x}{g}_{\Delta}^{(1)}\right)\\
    &&-\left(\left(h_1+({g}_{\Delta}^{(1)}){}^2 (h_1-2 h_2 - 4 E^x \partial_{E^x} h_2)\right)+h_1 \frac{\sin ^2(\beta \sqrt{\Delta} K_{\phi})}{\beta^2 \Delta}\right) \partial_{K_{\phi}}{g}_{\Delta}^{(1)} \nonumber \\
\text{II}: \quad  0&=& 2 \frac{\sin (\beta \sqrt{\Delta} K_{\phi}) }{\beta \sqrt{\Delta} } \left({g}_{\Delta}^{(1)} h_1 \left(-1+\cos (\beta \sqrt{\Delta} K_{\phi})\right)-2 h_1 {E^x} \partial_{E^x} {g}_{\Delta}^{(1)} +2 {g}_{\Delta}^{(1)} {E^x} \partial_{E^x}  h_1\right)\\
    &&-\left(\left(h_1+({g}_{\Delta}^{(1)}){}^2 (h_1-2 h_2 - 4 E^x \partial_{E^x} h_2)\right)+h_1 \frac{\sin ^2(\beta \sqrt{\Delta} K_{\phi})}{\beta^2 \Delta }\right) \partial_{K_{\phi}}{g}_{\Delta}^{(1)}\,. \nonumber
\end{eqnarray}
The term $\left(h_1+({g}_{\Delta}^{(1)}){}^2 (h_1-2 h_2 - 4 E^x \partial_{E^x} h_2)\right)$ is coming from the contribution of the spin connection, which is missing in \cite{Bojowald:2008ja}, see the  definition of Hamiltonian density in (36), as well as to the polymerized ones in (49),(50) and (51).
When the polymerization of $K_{\phi}$ is turned off, above equation reproduces exactly the equation (56) and (71) of \cite{Bojowald:2008ja} for the LTB condition with inverse triad corrections as in such a case $\partial_{K_{\phi}}{g}_{\Delta}^{(1)} =0$ and thus the contribution from the spin connection part can be ignored. 
However, this is not the case when we have holonomy polymerizations. Comparing to equation (79) in \cite{Bojowald:2008ja}, we see the contribution of an extra term $\left(1 - ({g}_{\Delta}^{(1)}){}^2\right)$ by taking $h_1 =h_2 =1$ following our analysis. Moreover, due to the appearance of the term involving $({g}_{\Delta}^{(1)}){}^2$, we do not have an analytic solution to this equation. Therefore, we study this model numerically. Here we restrict our analysis to the stationary solution where $\partial_t + \partial_x =0$. As a result, all phase space functions are only a function of $z=x-t$, more specifically, ${g}_{\Delta}^{(1)} $ is also a function of $z$. Using the \eqref{eq:ltb_eom} %
the equation of motion of the LTB condition can be written as
\begin{eqnarray}
 \text{I}: \quad  \partial_z {g}_{\Delta}^{(1)}(z) &=& \left( \frac{-1+h_1 \cos (\beta \sqrt{\Delta} K_{\phi}) }{ \beta \sqrt{\Delta} \sqrt{E^x}} \sin (\beta \sqrt{\Delta} K_{\phi}){g}_{\Delta}^{(1)} \right)(z) \\
 \text{II}: \quad  \partial_z {g}_{\Delta}^{(1)}(z)  &=& \left( \frac{h_1 \left(-1+\cos (\beta \sqrt{\Delta} K_{\phi})\right) - 2 E^x \partial_{E^x} h_1  }{ \beta \sqrt{\Delta} \sqrt{E^x}} \sin (\beta \sqrt{\Delta} K_{\phi}) {g}_{\Delta}^{(1)} \right)(z)\,.
\end{eqnarray}
It is clear that the equations are invariant under the scaling of ${g}_{\Delta}^{(1)} $. The remaining equations of motion take the form
\begin{eqnarray}
 \text{I} \; \& \; \text{II}: \quad  \partial_z  K_{\phi} &=&  - \frac{ h_1 \sin^2 ( \beta \sqrt{\Delta} K_{\phi})}{2 \beta^2 \Delta \sqrt{{E^x}}}+ \frac{ h_1+({g}_{\Delta}^{(1)})^2(h_1 - 2 h_2 - 4 E^x \partial_{E^x} h_2)}{2 \sqrt{{E^x}}}\\ 
 \text{I}: \quad \partial_z  {E^x} &=& -\frac{2 \sqrt{{E^x}} \sin (\beta \sqrt{\Delta} K_{\phi})}{\beta \sqrt{\Delta}} \\
 \text{II}: \quad \partial_z  {E^x} &=& -\frac{2 h_1 \sqrt{{E^x}} \sin (\beta \sqrt{\Delta} K_{\phi})}{\beta \sqrt{\Delta}}\,.
\end{eqnarray}
We will now solve these differential equations numerically for given initial conditions. Note that the initial value of ${g}_{\Delta}^{(1)} $ is chosen such that when $z\to \infty$ we recover the classical value. The remaining variables $K_x, E^{\phi}$ are then determined by the modified LTB condition and the diffeomorphism constraint ($C_x =0$) as
\begin{eqnarray}
    K_x = \frac{ \partial_z  K_{\phi} }{2{g}_{\Delta}^{(1)}  } \ , \qquad E^{\phi}= \frac{ \partial_z E^x}{2{g}_{\Delta}^{(1)}  }\,.
\end{eqnarray}
The plot of ${g}_{\Delta}^{(1)} $ as a function of $z$ is shown in figure \ref{fig:g1}. One can see that ${g}_{\Delta}^{(1)} $ has a bounce at the to-be classical singularity. The energy density is given by
\begin{eqnarray}
    \rho = \frac{C^{\Delta}}{E^{\phi} \sqrt{E^x}} = \frac{\left( 1-\cos(\beta \sqrt{\Delta} K_{\phi}) \right)({g}_{\Delta}^{(1)})^2}{E^x}\,.
\end{eqnarray}
The graph of the energy density scaled by the volume, which is just $C^{\Delta}= \rho E^{\phi} \sqrt{E^x}$, confirms explicitly that we do not have a vacuum solution in such case, where we refer the reader for more details on the vacuum solution to \cite{GenBirkhoff}. The sign change of $E^{\phi}$ is relevant for the sign change of $C^{\Delta}$. 
\begin{figure}[ht!]
    \centering
    \includegraphics[width=0.45\textwidth]{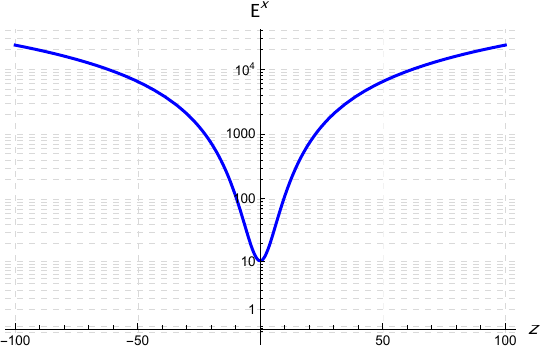}
    \includegraphics[width=0.45\textwidth]{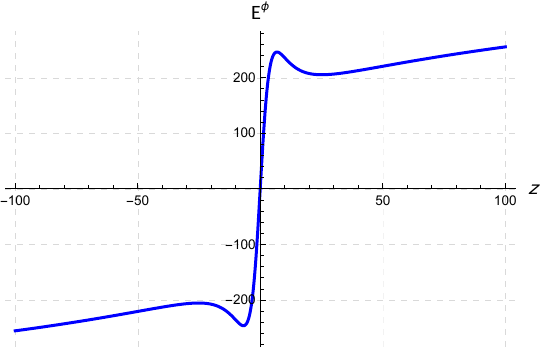}
    \includegraphics[width=0.45\textwidth]{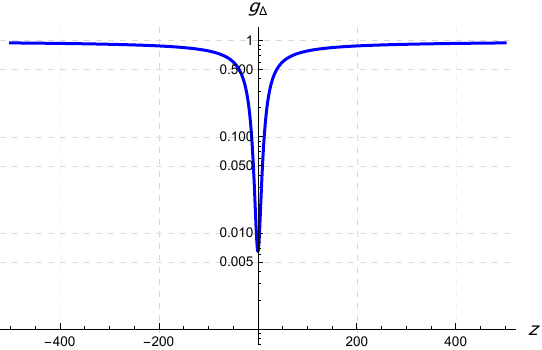}
    \includegraphics[width=0.45\textwidth]{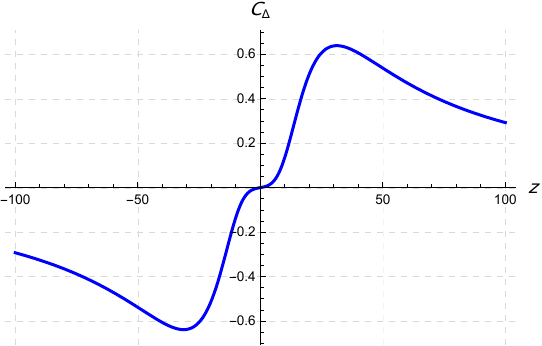}
    \caption{The plot of the $E^x$, $E^{\phi}$, ${g}_\Delta$ and $C_{\Delta}$ under the effective dynamics I. The estimation uses $M = 100, \beta=0.2375$ and we put the classical initial value from classical LTB solutions $E^x = (3 \sqrt{2 M} z_0 /2)^{4/3}, K_{\phi} = -(4 M /3/z_0)^{1/3}, g_{\Delta} = 1$ at $z_0= 10^4$.}
    \label{fig:g1}
\end{figure}

\subsection{Example 2}
\label{sec:example_2}
In the second example found in \cite{Tibrewala:2012xb} we do not consider inverse triad corrections anymore, so we set $h_1 =h_2=1$ and the holonomy corrections are given by the function $f$ in the form of
\begin{eqnarray}
  && {f}^{(1)} = \frac{\sin(\beta \sqrt{\Delta} K_{\phi})^2}{\beta^2 {\Delta}}  \ , \quad {f}^{(2)} = \frac{\sin(2 \beta \sqrt{\Delta} K_{\phi})}{2 \beta \sqrt{\Delta}} 
\end{eqnarray}
We note that the factor of $2$ in the sine function of $ {f}^{(2)}$ is coming from the requirement of the closure of the algebra of $C^\Delta$ and $C_x$ in \eqref{final_con_clo}. At first it may seem that we have not considered a  function preserving polymerization. However, the factor of 2 can be understood as 
\begin{equation*}
\sin(2 \beta \sqrt{\Delta} K_{\phi}) = 2 \sin(\beta \sqrt{\Delta} K_{\phi}) \cos( \beta \sqrt{\Delta} K_{\phi}),  
\end{equation*}
which fulfills the function preserving requirement of holonomy corrections. Due to the reason that equation \eqref{final_con_clo} is satisfied, the system can incorporate the classical LTB condition, where both the marginally bound and the non-marginally bound case are allowed. The compatible LTB condition then reads
\begin{eqnarray}
    {g}_c = \ltbf.
\end{eqnarray}
This model then satisfies exactly the condition imposed in Corollary \ref{cor:fEx_clo} and \ref{cor:C_and_dynamics}. As a result, the equations of motion after imposing the LTB condition are given by
\be\label{eom_mu0_Tibre}
\partial_t E^x &=&  \sqrt{{E^x}} \frac{\sin(2 \beta \sqrt{\Delta} K_{\phi})}{\beta \sqrt{\Delta} }\\
\partial_t K_{\phi} &=&  \frac{\sin(\beta \sqrt{\Delta} K_{\phi})^2}{2 \beta^2 {\Delta}  \sqrt{E^x}} - \frac{\ltbf^2-1}{2 \sqrt{E^x}}.
\ee 
The remaining variables $K_x, E^{\phi}$ are then determined by the LTB condition and the diffeomorphism constraint as
\begin{eqnarray}
    K_x = \frac{ \partial_x  K_{\phi} }{2 \ltbf} \ , \qquad E^{\phi}= \frac{ \partial_x E^x}{2 \ltbf  }.
\end{eqnarray}
The effective geometric contribution to the Hamiltonian constraint becomes
\be\label{eq:Ham_Tibre_mu0}
C^{\Delta}(x) = \frac{1}{2G} \left( \frac{\ltbf'  }{\ltbf} + \partial_x \right) \left( \frac{ \sqrt{E^x}  \sin(\beta \sqrt{\Delta} K_{\phi})^2}{2 \beta^2 {\Delta} \ltbf} - \frac{\sqrt{E^x}}{\ltbf} \left({\ltbf^2-1}\right) \right)(x)
\ee 
according to Corollary \ref{cor:C_and_dynamics}.
Notice that similar to the classical case we see that in the marginally bound case, where $\ltbf'=0$,  we have $C^{\Delta} = 0$. The relation of the marginally bound case to the stationary and vacuum solutions is discussed more in detail in a second article \cite{GenBirkhoff}. Since in this example the LTB condition is exactly the classical one, the Dirac bracket in the effective model exactly agrees with the one in \eqref{eq:DBGravPart}, which means we have
\begin{eqnarray}\label{eq:cano_Tibre}
    \poissonbracket{K_{\phi}(x)}{E^x(y) }_D = - 2G \frac{\ltbf^2}{\ltbf'} \delta(x-y).
\end{eqnarray}
One can check that the equation of motion for $E^x$ in \eqref{eom_mu0_Tibre} can be obtained from the Hamilton's equation defined by the primary Hamiltonian density in \eqref{eq:Ham_Tibre_mu0} (set $N=1$ and $N^x=0$) %
and the bracket in \eqref{eq:cano_Tibre}, which gives a concrete example of Corollary \ref{cor:C_and_dynamics}. 
~\\
~\\
Next we want to construct the $\bar{\mu}$ version of this polymerized model by setting
\begin{eqnarray}
  && {f}^{(1)} = \frac{3 E^x \sin\left(\frac{\beta \sqrt{\Delta} K_{\phi}}{\sqrt{E^x}}\right)^2}{\beta^2 {\Delta}} - \frac{ \sqrt{E^x} K_{\phi} \sin\left( \frac{2\beta \sqrt{\Delta} K_{\phi}}{\sqrt{E^x}}\right)}{\beta \sqrt{\Delta}}  \ , \quad {f}^{(2)} = \frac{\sqrt{E^x}\sin\left( \frac{2\beta \sqrt{\Delta} K_{\phi}}{\sqrt{E^x}}\right)}{2 \beta \sqrt{\Delta}}. 
\end{eqnarray}
This effective constraint coincides with Tibrewala's polymer Hamiltonian \cite{Tibrewala:2012xb}, which can be derived from a covariant mimetic Lagrangian as shown in \cite{BenAchour:2017ivq}. Corollary \ref{cor:C_and_dynamics} can be applied to this model, which lets us write the equations of motion after imposing LTB as
\be\label{eom_ltb_case2}
\partial_t E^x &=&  \frac{{E^x} \sin \left(\frac{2 \a K_{\phi}}{\sqrt{{E^x}}}\right)}{\a}\\
\partial_t K_{\phi} &=& \frac{-3 {E^x} \sin ^2\left(\frac{\a K_{\phi}}{\sqrt{{E^x}}}\right)+\a \sqrt{{E^x}} K_{\phi} \sin \left(\frac{2 \a K_{\phi}}{\sqrt{{E^x}}}\right)}{2 \a^2 \sqrt{{E^x}}}  - \frac{\ltbf^2-1}{2 \sqrt{E^x}}.
\ee 
In order to simplify the equations of motion, we can introduce the new set variables 
\begin{equation*}
b = \frac{K_{\phi}}{\sqrt{E^x}}, \qquad v ={E^x}^{3/2}
\end{equation*}
since then we have
\begin{eqnarray}
\label{reduced equations of motion}
\partial_t v= \frac{3 v \sin \left(2 \a  b \right)}{2 \a },\quad \partial_t b =- \frac{1}{2} \left(\frac{\ltbf(x)^2-1}{v^{2/3}}+\frac{3 \sin ^2\left(\a b\right)}{\a^2}\right).
\end{eqnarray}
The effective Hamiltonian takes the form
\be\label{eq:Ham_Tibre_mubar}
C^{\Delta}(x) = \frac{1}{2G} \left( \frac{\ltbf'  }{\ltbf} + \partial_x \right) \left( \frac{ v  \sin(\beta \sqrt{\Delta} b)^2}{\beta^2 {\Delta} \ltbf} - \frac{v^{1/3}}{\ltbf} \left({\ltbf^2-1}\right) \right)(x)
\ee 
with the Dirac bracket given by
\begin{eqnarray}\label{eq:cano_Tibre_bv}
    \poissonbracket{b(x)}{v(y)}_D = - 3G \frac{\ltbf^2}{\ltbf'} \delta(x-y)\,.
\end{eqnarray}
We note that for this $\bar{\mu}$-scheme model Birkhoff's theorem can be applied, namely the vacuum solution is uniquely given as a static and asymptotically flat solution.
The reason for this is that the theory is scale invariant, thus all solutions of the ordinary differential equations in vacuum are characterized by a single integration parameter which corresponds to the black hole mass. The asymptotic flatness property then can be easily seen from the fact that the model admits a symmetric bounce and when $E^x \to \infty$ we have $\frac{K_{\phi}}{\sqrt{E^x}} \to 0$, and hence the classical theory is approached. We leave the detailed analysis of the dynamics of this model to \cite{noShocksPaper}.

\section{Conclusions}
\label{sec:Concl}
In this work we presented a general analysis on how compatible LTB conditions can be derived for spherically symmetric effective models that allow to embed a (generalized) LTB model into these kind of effective models. We extend the seminal work in \cite{Bojowald:2008ja,Bojowald:2009ih} in several directions. At the classical level we show in section \ref{sec:ClassMod} that only in the case of non-marginally bound LTB models, the LTB condition can be treated as a gauge fixing condition for the spatial diffeomorphism constraint. In the marginally bound model, the LTB condition is an additional first class constraint, yielding a model that has no physical degrees of freedom. At first this might look a bit counter-intuitive. However, taking into account the relation between the marginally bound model and the stationary vacuum solution, this becomes more obvious. We did not discuss that relation in detail in this work but refer to our separate work in \cite{GenBirkhoff}, where we also show that one can formulate a Birkhoff-like theorem for these kind of effective models under consideration. 
~\\
Next in section \ref{sec:EffMod} we discuss how generalized LTB models can be embedded in effective spherically symmetric models. As a given effective model comes with a chosen polymerization, first we list properties the effective models we consider should satisfy. This includes a function preserving polymerization in order to mimic a function preserving quantization. Further, we consider no polymerization of the spatial diffeomorphism constraint. The main motivation for this is that we also consider the algebra of the geometric contribution to the Hamiltonian constraint and the spatial diffeomorphism. At the classical level this forms a closed algebra. If we require this also for the corresponding effective model it will constrain the possible polymerization. Such an investigation significantly simplifies if the spatial diffeomorphism constraint is not polymerized. Interestingly, for the effective models considered here it turns out that in this class of models a polymerization of the extrinsic curvature $K_x$ cannot be present if the algebra is required to close. The class of models that satisfy this requirement is covered by Lemma \ref{lemma:f1_CC}. 
~\\
~\\
The strategy we follow to embed (generalized) LTB solutions differs partly from the one in the work in \cite{Bojowald:2008ja,Bojowald:2009ih}, where suitable LTB conditions are derived using an effective Hamiltonian. Here, we perform our derivation at the level of the equations of motion. The reason for choosing this approach is that the equations
will have a much simpler structure and so algebraic manipulations and numerical analysis are less complicated. An advantage is that this allows to consider holonomy and inverse triad corrections simultaneously, extending the results of \cite{Bojowald:2008ja,Bojowald:2009ih}, where these two kinds of corrections were only investigated separately. For this purpose we define the notion of a compatible LTB condition in Definition \ref{def:comp_LTB}. We consider compatible LTB conditions not only in the class of models covered by Lemma \ref{lemma:f1_CC}, i.e. effective models where the polymerized gravitational contribution to the Hamiltonian constraint is a conserved quantity, but also allow polymerizations that do not satisfy the assumptions stated in that Lemma. In Lemma \ref{thm:marginal} we present the form of the polymerization and the compatible LTB condition for the marginally bound case. Our results show that there exists an interplay between the polymerization of $K_x$ and the one of the compatible LTB condition. This has the consequence, that for a generically chosen polymerization of $K_x$, there might exist no compatible LTB condition at all. We further derive the general form of the polymerization and compatible LTB condition for the non-marginally bound case, see Lemma \ref{thm:non_marginally bound case }. 
~\\
~\\
As this is useful for our applications in section \ref{sec:examples} we further discuss four Corollaries. The first one, Corollary \ref{cor:gEx_m}, considers the marginally bound case and models that involve no polymerization of $\widetilde{K}_x=K_x/E^\phi$ for which a compatible LTB condition exist. Corollary \ref{cor:gEx_non_m} covers non-marginally bound models for which the LTB condition has a specific form. This carries over to a special form of the corresponding compatible LTB condition as well as restrictions on the possible polymerizations. While Corollary \ref{cor:gEx_m} and \ref{cor:gEx_non_m} not only consider models covered by Lemma \ref{lemma:f1_CC}, if restrict to those models, Corollary \ref{cor:fEx_clo} and Corollary \ref{cor:C_and_dynamics} give further insights. Corollary \ref{cor:fEx_clo} shows for this class of models a compatible LTB condition always exists. Moreover, it is shown that the classical LTB condition is a compatible one if additionally a certain relation among the inverse triad corrections is satisfied. This relation also encodes the special case when no inverse triad corrections are present, showing that there exist polymerized models with holonomy corrections only where the classical LTB condition is a compatible one. This is in contrast to the results in \cite{Bojowald:2008ja,Bojowald:2009ih} where for effective models always a non-trivial modification of the LTB condition is considered. As our application in section \ref{sec:example_1} to the models in \cite{Bojowald:2008ja,Bojowald:2009ih} demonstrates, this is due to the fact that the models in \cite{Bojowald:2008ja,Bojowald:2009ih} do not fall into the models covered by Lemma \ref{lemma:f1_CC}. Our results further show that the way contributions of the spin connections are treated is crucial and, in general, restricting the spin connection to its value in the LTB sector in the effective Hamiltonian, as it was done in \cite{Bojowald:2008ja}, does not yield  the correct result for the compatible LTB condition. Finally, Corollary \ref{cor:C_and_dynamics} considers models consistent with Lemma \ref{lemma:f1_CC} that in addition have a specific form of the compatible LTB condition considered in Corollary \ref{cor:gEx_m} and \ref{cor:gEx_non_m}. For these kind of models the generic form of the effective dynamics is presented in Corollary \ref{cor:C_and_dynamics} as well as its relation to the Dirac bracket for the non-marginally bound case that was introduced in the classical theory in section \ref{sec:ClassNonMarg}. Interestingly, for this class of models, the 
effective dynamics decouples completely along the radial direction. Such a decoupling was for instance assumed in the dust shell models in \cite{Giesel:2021dug,Kiefer:2019csi} and the results here support such an assumption. Corollary \ref{cor:C_and_dynamics} also covers the model we consider in our second application in \ref{sec:example_2} introduced in \cite{Tibrewala:2012xb}. It provides an example of an effective model with holonomy corrections for which the classical LTB condition is a compatible one. A more detailed analysis of the effective dynamics in the context of a dust collapse can be found in a separate paper \cite{noShocksPaper}. 
~\\
Since our study presented in this paper is also based on the condition that the class of effective models we consider satisfy a list of assumptions, it will be interesting to understand to what extent the formalism needs to be generalized if some of the assumptions are dropped. From the perspective of full LQG, it will be particularly interesting to see how a polymerized spatial diffeomorphism constraint can be incorporated into the analysis. We expect this to be more difficult to accomplish, as it is beyond the scope of this paper and we leave it to future work. Another direction in which assumptions can be relaxed is the kind of polymerizations that we considered in this analysis. For instance polymerizations of the form considered in the model in \cite{Alonso-Bardaji:2022ear} are not included in our work here. One reason for this is that these kind of polymerization cannot be linked to a mimetic gravity model, but there might be more general scalar-tensor theories that could be related to such a polymerization. We also leave the generalization as far as the choice of polymerizations are concerned as possible next steps. Another part that can be extended in future work is the detailed study of the dynamical aspects of the different classes of effective LTB models. A starting point for this is the computation of the induced Dirac brackets of the reduced system for given polymerized constraints and gauge fixings. From this one can then, after deriving the physical Hamiltonian, study the equation of motions of the system. 

\section*{Acknowledgements}
This work is supported by the DFG-NSF grants PHY-1912274 and 425333893 and NSF grant PHY-2110207.

\bibliographystyle{jhep.bst}
\bibliography{refs}

\appendix
\section{Computation of Dirac brackets of the classical non-marginal LTB system}\label{appendix:Diracbrackets_classical}
We want to start by computing the Poisson brackets of the constraints. The two main second class pairs are
\begin{align}\label{eq:1stPB}
       \poissonbracket{G_T(x)}{C^\mathrm{tot}(y)} &= \delta(x,y)\,\sqrt{1 + \frac{\lvert E^x \rvert}{(E^\phi)^2} (T')^2}\approx\delta(x,y)
 \end{align}
 and
 \begin{align}
 \label{eq:appendixpoissonGxdiffeo}    
        \poissonbracket{G_x(x)}{C_x^\mathrm{tot}(y)} &= \poissonbracket{\frac{{E^x}'}{2E^\phi}(x) - \ltbf(x)}{\frac{1}{G}(K_\phi' E^\phi - K_x {E^x}')(y)}  = \nonumber \\
        &= \frac{{E^x}'(x)E^\phi(y)}{2 E^\phi(x)^2}\,\frac{\mathrm{d}}{\mathrm{d}y}\delta(x,y) + \frac{  {E^x}'(y)}{2E^\phi(x)}\,\frac{\mathrm{d}}{\mathrm{d}x}\delta(x,y)\approx\nonumber \\
        &\approx \frac{{E^\phi}(y)}{{E^\phi}(x)}\qty(\ltbf(x) - \ltbf(y))\,\frac{\mathrm{d}}{\mathrm{d}y}\delta(x,y) = \ltbf'\,\delta(x,y)\,.
\end{align}
Note that $C^\mathrm{tot}$ and $C_x^\mathrm{tot}$ fulfill the standard hypersurface deformation algebra. The Poisson bracket
\begin{equation}\label{eq:appendixpoissonGxC}
    \poissonbracket{G_x(x)}{C^\mathrm{tot}\qty[N]} \approx N \frac{\sqrt{E^x}G}{{E^\phi}^2}  C_x + N' K_\phi \frac{\sqrt{E^x}}{ E^\phi} \eqqcolon \zeta (x)
\end{equation}
  is not vanishing as well. This will not be a problem since we are not interested in Dirac brackets of quantities involving the momentum of the clock field. We can see this even better when considering the Dirac matrix. The Dirac matrix $\mathcal{D}$ of the constraint vector $C^J := (G_T, G_x, C^\mathrm{tot}, C_x^\mathrm{tot})^{T}$ takes the form
\begin{align}
    \mathcal{D}^{I J}(x,y) = \poissonbracket{C^I(x)}{C^J(y)} \approx \delta(x,y) \Pmqty{ 0 & 0 & 1 & 0\\ 0 & 0 & \,\zeta & \,\,\,\ltbf'\,\,\, \\ -1 & -\zeta & 0 & 0 \\ 0 & -\ltbf' & 0 & 0}.
\end{align}
 Introducing the Dirac bracket
 \begin{equation}
     \poissonbracket{A(x)}{B(x)}_D \coloneqq \poissonbracket{A(x)}{B(x)} + \int \mathrm{d}z_1 \int \mathrm{d}z_2\, \poissonbracket{A(x)}{C^I(z_1)} \left(\mathcal{D}^{-1}\right)_{IJ}(z_1,z_2) \poissonbracket{C^J(z_2)}{B(y)},
 \end{equation}
we can finally compute the algebra of basic phase space variables
\begin{align*}
    \bullet \ \poissonbracket{K_x(x)}{E^x(y)}_D &= G \delta(x,y) + \int \mathrm{d}z_1 \int \mathrm{d}z_2\, \poissonbracket{K_x(x)}{G_x(z_1)} \left(\mathcal{D}^{-1}\right)_{24}(z_1,z_2) \poissonbracket{C^{\mathrm{tot}}_x(z_2)}{E^x(y)} \\
    &= G \qty(1 + \frac{\ltbf}{\ltbf'}(y)\pdv{y})\delta(x,y)\\
    \bullet \ \poissonbracket{K_\phi(x)}{E^\phi(y)}_D &= G \delta(x,y) + \int \mathrm{d}z_1 \int \mathrm{d}z_2\, \poissonbracket{K_\phi(x)}{G_x(z_1)} \left(\mathcal{D}^{-1}\right)_{24}(z_1,z_2) \poissonbracket{C^{\mathrm{tot}}_x(z_2)}{E^\phi(y)} \\
    &= G \qty(1 + \frac{\ltbf}{\ltbf'}(x)\pdv{x})\delta(x,y) \\
    \bullet \ \poissonbracket{K_x(x)}{K_\phi(y)}_D &= 0 + \int \mathrm{d}z_1 \int \mathrm{d}z_2\, \poissonbracket{K_x(x)}{G_x(z_1)} \left(\mathcal{D}^{-1}\right)_{24}(z_1,z_2) \poissonbracket{C^{\mathrm{tot}}_x(z_2)}{K_\phi(y)} \\
    &\quad \ \ \,   + \int \mathrm{d}z_1 \int \mathrm{d}z_2\, \poissonbracket{K_x(x)}{C^{\mathrm{tot}}_x(z_1)} \left(\mathcal{D}^{-1}\right)_{42}(z_1,z_2) \poissonbracket{G_x(z_2)}{K_\phi(y)}=\\
    &= G \qty(\frac{{K_\phi}'}{2} -  K_x \ltbf) \frac{1}{{E^\phi}\ltbf'}\pdv{y}\delta(x,y)\approx G^2 \frac{C_x}{2 {E^\phi}^2 \ltbf'}\pdv{y}\delta(x,y)\\
    \bullet \ \poissonbracket{E^x(x)}{E^\phi(y)}_D &= 0 \\
    \bullet \ \poissonbracket{K_x(x)}{E^\phi(y)}_D &= \int \mathrm{d}z_1 \int \mathrm{d}z_2\, \poissonbracket{K_x(x)}{G_x(z_1)} \left(\mathcal{D}^{-1}\right)_{24}(z_1,z_2) \poissonbracket{C^{\mathrm{tot}}_x(z_2)}{E^\phi(y)} \\
    &= \frac{G}{2}\pdv{x}\qty( \frac{1}{\ltbf'(x)}\pdv{x}\delta(x,y))\\
   \bullet \  \poissonbracket{K_\phi(x)}{E^x(y)}_D &= \int \mathrm{d}z_1 \int \mathrm{d}z_2\, \poissonbracket{K_\phi(x)}{G_x(z_1)} \left(\mathcal{D}^{-1}\right)_{24}(z_1,z_2) \poissonbracket{C^{\mathrm{tot}}_x(z_2)}{E^x(y)} \\
    &= -2 G \,\frac{\ltbf^2}{\ltbf'}(x)\delta(x,y) \,.
\end{align*}

\section{Details of \texorpdfstring{$\{ C^{\Delta}[N_1], C^{\Delta} [N_2]\} $}{scalar density bracket} }\label{app:bracketpolyHamiltonconstraint}
The bracket of polymerized gravitational contribution to the scalar density with itself can be written as

\be
\{ C^{\Delta}[N_1], C^{\Delta} [N_2]\} = (A {{E^x}}' + B K_x' + C K_{\phi}'+ D {{E^{\phi}}}')[N_2 N_1'-N_1 N_2'] \;,
\ee 
where we defined
\begin{eqnarray}
    A &\coloneqq& \frac{1}{{8 {E^x} {E^{\phi}}} } \Big( 4 {E^x} (1+f) (K_{\phi}-h_1 K_{\phi}+2 {E^x} Z)+K_{\phi} \big(-8 {E^x}^2 \partial_{E^x}f +2 {E^x} (K_{\phi}+4 {E^x} \widetilde{K}_x) \partial_{K_{\phi}}f \nonumber\\
    &&-((h_1-2) K_{\phi}+4 {E^x} h_1 Z) \partial_{\widetilde{K}_x}f-2 {E^x} (K_{\phi}+4 {E^x} \widetilde{K}_x) \partial_{\widetilde{K}_x} \partial_{E^x}f\big) \Big) \\
    B &=& - D \coloneqq -\frac{h_2 K_{\phi} \left(8 {E^x} \partial_{\widetilde{K}_x}f+(K_{\phi}+4 {E^x} \widetilde{K}_x) \partial_{\widetilde{K}_x}^2f\right)}{4 {E^{\phi}}^2} \\
    C&\coloneqq& -\frac{1}{4 {E^{\phi}}}\left(4 {E^x} \left(1+f+K_{\phi} \partial_{K_{\phi}}f\right)+2 (K_{\phi}+2 {E^x} Z) \partial_{\widetilde{K}_x}f+K_{\phi} (K_{\phi}+4 {E^x} \widetilde{K}_x) \partial_{\widetilde{K}_x} \partial_{K_{\phi}}f\right)\,.
\end{eqnarray}
To be in accordance with the hypersurface deformation algebra, above Poisson bracket has to be proportional to the spatial diffeomorphism constraint
\be
\poissonbracket{C^{\Delta}[N_1]}{C^{\Delta} [N_2]}\propto   C_x[N_2 N_1'-N_1 N_2'] \,.
\ee 
We can express $K^\prime_\phi\approx -\widetilde{K}_xE^{x\prime}$ with $\widetilde{K}_x:=\frac{K_x}{E^\phi}$ for solutions of diffeomorphism constraint $C_x$ which lets us write
\be\label{condition_diffeo_equal}
(A {{E^x}}' + B K_x' + C K_{\phi}'+ D {{E^{\phi}}}') \big|_{C_x = 0} = (A  - C  \widetilde{K}_x) {{E^x}}'+ B K_x' + D {{E^{\phi}}}'= 0 \,.
\ee
Due to the fact that no spatial derivatives appear in factor functions $A,B,C,D$, we can conclude
\be\label{new_condi}
B =D =0 \quad\text{and}\quad A - C \widetilde{K}_x = 0\,.
\ee 
The first condition $B =D =0$ in \eqref{new_condi} constrains the polymerization function $f(\widetilde{K}_x,K_{\phi},{{E^x}})$ to the following form
\begin{eqnarray}\label{eq:no_kx_f1_app}
    f = \frac{{f}^{(1)}(K_{\phi},E^x)- {f}^{(2)}(K_{\phi},E^x) K_{\phi}}{(K_{\phi}+4 {E^x} \widetilde{K}_x) K_{\phi}} + \frac{{f}^{(2)}(K_{\phi},E^x)}{ K_{\phi}} -1 \,.
\end{eqnarray}
Now further requiring $A - C \widetilde{K}_x =0$ together with \eqref{eq:no_kx_f1_app} leads to the following condition for the inverse triad corrections encoded in $h_1$ and $h_2$
\be\label{final_con_clo_app}
\frac{h_1 - 2 E^x \partial_{E^x} h_2}{h_2}= \frac{-4 {E^x} \partial_{E^x}{f}^{(2)}+  \partial_{K_{\phi}}{f}^{(1)}}{2 {f}^{(2)}} :=\text{Con}_{f}\,.
\ee 
If we go back to the generic ansatz for the polymerization function $f$ in \eqref{poly_f}, equation \eqref{eq:no_kx_f1_app} restricts polymerization to the form
\begin{eqnarray}
    (1+f)\qty(\frac{4 K_x K_{\phi}}{E^{\phi}} + \frac{K_{\phi}^2}{E^x}) \to \left( \frac{4 K_x {f}^{(2)}}{E^{\phi}} + \frac{{f}^{(1)}}{E^x} \right)\,.
\end{eqnarray}
As a result, the polymerization of $K_{x}$ is completely removed. Reinserting the solution for the polymerization function $f$ into $C^{\Delta}$ defined in \ref{eq:defpolyhamiltonianconstraint} and using \ref{final_con_clo_app}, the algebra now becomes
\be
\poissonbracket{C^{\Delta}[N_1]}{C^{\Delta}[N_2]} = \left(\big( \partial_{K_{\phi}} {f}^{(2)}\big)\frac{E^x }{(E^{\phi})^2}\,C_x\right) [N_1 N_2' - N_2 N_1']\,.
\ee 
This is identical to the classical result up to the deformation function $\partial_{K_{\phi}} {f}^{(2)}$ that depends on the chosen form of the polymerization function still encoded in the choice of ${f}^{(2)}$.

\section{Equations of motion}\label{app:eom}
The equations of motion generated by primary Hamiltonian $H_P^{\Delta}[N^x = 0]$ in \eqref{eq:defprimarypolyHamiltonian}, where the gravitational contribution to the Hamiltonian constraint is given by \eqref{eq:defpolyhamiltonianconstraint} and the polymerization function $f$ fulfills equation \eqref{eq:f_weight}, assume the form
\begin{eqnarray}\label{EoMeff}
    \partial_t E^x &=& \frac{K_{\phi} \left(4 {E^x} (1+f)+(4 {E^x} \widetilde{K}_x + K_{\phi}) \partial_{\widetilde{K}_x}f\right)}{2 \sqrt{{E^x}}} \\
    \partial_t E^{\phi} &=&\frac{E^{\phi} \left( 2 (1+f) (2 {E^x} \widetilde{K}_x+ K_{\phi})+K_{\phi} (4 {E^x} \widetilde{K}_x+ K_{\phi}) \partial_{K_{\phi}}f \right)}{2 \sqrt{{E^x}}}\\
    \partial_t K_x &=&\frac{1}{16 {E^x}^{3/2} {E^{\phi}}^2}\Big( 4 {E^x} {E^x}' {E^{\phi}}' \left(h_1-2 %
    \right)+4 {E^{\phi}}^3 \big( h_1-4 {E^x} (1+f) \widetilde{K}_x K_{\phi}\\
    &&+ (1+f) K_{\phi}^2-2 {E^x} K_{\phi} (4 {E^x} \widetilde{K}_x + K_{\phi}) \partial_{E^x}f-2 {E^x} \partial_{E^x} h_1 \big) \nonumber\\
    &&+4 {E^{\phi}} {E^x} {E^x}'' \left(-h_1+2 %
    \right)+{E^{\phi}} \left({E^x}'\right)^2 \left(h_1-2 {E^x} \partial_{E^x} h_1-2%
    \right) \Big) \nonumber\\
    \partial_t K_{\phi}&=&\frac{-4 {E^{\phi}}^2 \left(h_1+(1+f) K_{\phi}^2+\widetilde{K}_x K_{\phi} (4 {E^x} \widetilde{K}_x +K_{\phi}) \partial_{\widetilde{K}_x}f\right)+\left({E^x}'\right)^2 \left(2-h_1 %
    \right)}{8 \sqrt{{E^x}} {E^{\phi}}^2}\,.
\end{eqnarray}
We can use the relations in \eqref{eq:Ep_Kx_ltb} to rewrite the above equations as
\begin{eqnarray}
    \partial_t K_{\phi} = \frac{{g}^2(2-h_1) -h_1}{2 \sqrt{E^x}} + \mathcal{F}_{K_\phi}(K_{\phi}, E^x, \widetilde{K}_x)\ , \qquad
    \partial_t E^x = \mathcal{F}_{E^x}(K_{\phi}, E^x, \widetilde{K}_x)
\end{eqnarray}
where we defined
\begin{eqnarray}
     \mathcal{F}_{K_\phi}(K_{\phi}, E^x, \widetilde{K}_x)  \coloneqq- \frac{ (1+f) K_{\phi}^2+\widetilde{K}_x K_{\phi} (4 {E^x} \widetilde{K}_x +K_{\phi}) \partial_{\widetilde{K}_x}f%
    }{2 \sqrt{{E^x}}} \\
    \mathcal{F}_{E^x}(K_{\phi}, E^x, \widetilde{K}_x) \coloneqq  \frac{K_{\phi} \left(4 {E^x} (1+f)+(4 {E^x} \widetilde{K}_x + K_{\phi}) \partial_{\widetilde{K}_x}f\right)}{2 \sqrt{{E^x}}}\,.
\end{eqnarray}
Finally we can derive a differential equation for ${g} = \frac{  {E^x}'}{2 E^{\phi}}$ using above equations of motion
\begin{eqnarray}\label{eqapp:g1_eom}
        \partial_t {g} =&& {{g}} \left( \frac{ ( \partial_t {E^x})'}{{E^x}'} -\frac{ \partial_t E^{\phi}}{E^{\phi}} \right) 
        = {g} \mathcal{F}_{{g}}(K_{\phi}, E^x, \partial_x K_{\phi}, \partial_x {E^x}, \partial_x^2 K_{\phi}, \partial_x^2 {E^x})
\end{eqnarray}
with
\begin{eqnarray}
           \mathcal{F}_{{g}}=&&\frac{{g}}{4 {E^x}' {E^x}^{3/2}} \Bigg[-{E^x}' K_{\phi}^2 \partial_{\widetilde{K}_x}f+2 {E^x} K_{\phi} \bigg(4 {E^x}' \widetilde{K}_x \partial_{\widetilde{K}_x}f+{\widetilde{K}_x}' K_{\phi} \partial_{\widetilde{K}_x}^2f\\
        &&+{E^x}' K_{\phi} \left(-\partial_{K_{\phi}}f+\partial_{\widetilde{K}_x} \partial_{E^x}f+\widetilde{K}_x \partial_{\widetilde{K}_x} \partial_{K_{\phi}}f\right)\bigg)+8 {E^x}^2 \bigg({E^x}' K_{\phi} \partial_{E^x}f \nonumber \\
        &&+\left(2 {\widetilde{K}_x}' K_{\phi}+{E^x}' \widetilde{K}_x^2\right) \partial_{\widetilde{K}_x}f +K_{\phi} \widetilde{K}_x \left({E^x}' \partial_{\widetilde{K}_x} \partial_{E^x}f+{E^x}' \widetilde{K}_x \partial_{\widetilde{K}_x} \partial_{K_{\phi}}f+{\widetilde{K}_x}' \partial_{\widetilde{K}_x}^2f\right)\bigg)\Bigg]\,.  \nonumber
\end{eqnarray}

\section{LTB condition under general polymerization}\label{app:ltb_general}
We want to start with the most general ansatz ${g} = {g}({K_{\phi}}',{E^x}', K_{\phi},E^x,\Xi)$ for a compatible LTB condition according to Lemma \ref{lemma:g1_first}. (In this lemma it is shown why the higher derivatives of $K_\phi$ and $E^x$ drop out). Note that already at this level ${g} = {g}(\widetilde{K}_x, K_{\phi},E^x,\Xi)$ is the only possible possible ansatz to make ${g}$ a scalar density with weight zero. The conditions for $n=1$ are given by
\begin{eqnarray}
\label{eqapp:fc1}
   F_{C_1} = &&-\frac{1}{2 ({E^x}')^2 \sqrt{{E^x}}} \bigg( K_{\phi} \left(8 {E^x} \partial_{\widetilde{K}_x}f+(K_{\phi}+4 {E^x} \widetilde{K}_x) \partial_{\widetilde{K}_x}^2f\right) \left({g}- {E^x}' \partial_{{E^x}'} {g} \right)\\
    &&- {E^x}' \left( K_{\phi} \widetilde{K}_x \left(8 {E^x} \partial_{\widetilde{K}_x}f+(K_{\phi}+4 {E^x} \widetilde{K}_x) \partial_{\widetilde{K}_x}^2f\right) - 2  ({E^x}')^2 (2 h_2 + 4 E^x \partial_{E^x} h_2 - h_1) {g}  \partial_{{K_{\phi}}'} {g} \right) \partial_{{K_{\phi}}'} {g}\bigg) \nonumber \\
   \label{eqapp:fd1}
    F_{D_1} = &&\frac{1}{2 ({E^x}')^2 \sqrt{{E^x}}} \bigg(  K_{\phi} \widetilde{K}_x \left(8 {E^x} \partial_{\widetilde{K}_x}f+(K_{\phi}+4 {E^x} \widetilde{K}_x) \partial_{\widetilde{K}_x}^2f\right) \left( {g}-{E^x}' \partial_{{E^x}'} {g}\right) \\
    &&-{E^x}' \left(K_{\phi} (\widetilde{K}_x)^2 \left(8 {E^x} \partial_{\widetilde{K}_x}f+(K_{\phi}+4 {E^x} \widetilde{K}_x) \partial_{\widetilde{K}_x}^2f\right) - 2 {E^x}' (2 h_2 + 4 E^x \partial_{E^x} h_2 -h_1)  {g} \partial_{{E^x}'} {g}\right) \partial_{{K_{\phi}}'} {g}  \bigg) \,.\nonumber
\end{eqnarray} 
We can add them together to get the relation
\begin{eqnarray}
       0 = F_{D_1} +  \widetilde{K}_x F_{C_1} =&&\frac{(2 h_2 + 4 E^x \partial_{E^x} h_2 -h_1)  {g} \partial_{{K_{\phi}}'} {g} \left(\partial_{{E^x}'} {g}+ \widetilde{K}_x \partial_{{K_{\phi}}'} {g}\right)}{\sqrt{{E^x}}}\,,
\end{eqnarray}
which implies that ${g}$ can be written as
\begin{eqnarray}
    {g} = {g}_{\Delta}({E^x}', K_{\phi},E^x,\Xi) \, \quad \text{or} \, \quad {g} = {g}_{\Delta}(\widetilde{K}_{x}, K_{\phi},E^x,\Xi)\,. 
\end{eqnarray}
Plugging the first type of solutions for ${g}$ back in equation \eqref{eqapp:fc1}, we can deduce due to $\partial_{{K'_{\phi}}} {g}= 0 $ that
\begin{eqnarray}
    {g} = {g}_{\Delta}( K_{\phi},E^x,\Xi){E^x}'
\end{eqnarray}
which can not have the correct classical limit. We note the special case when 
\begin{eqnarray}
    8 {E^x} \partial_{\widetilde{K}_x}f+(K_{\phi}+4 {E^x} \widetilde{K}_x) \partial_{\widetilde{K}_x}^2f =0\,,
\end{eqnarray}
since then ${g} = {g}_{\Delta}({E^x}', K_{\phi},E^x,\ltbf)$ seems to be allowed. However, the equation of motion $\partial_t {g} = {{g}} \mathcal{F}_{{g}}$ in such case shows that $\partial_{{E^x}'} {g}_{\Delta}({E^x}')=0$ and thus completely removes the dependence on ${E^x}'$. The reason behind this is that the equations of motion transform covariantly under $C_x$, which requires ${g}_{\Delta}$ to have density weight zero. In summary, ${g}$ can only take the form $ {g} = {g}_{\Delta}(\tilde{K}_{x}, K_{\phi},E^x,\ltbf) $ and $F_{C_1}$ further implies that
\begin{eqnarray}
    2 \left(2 h_2 + 4 E^x \partial_{E^x} h_2 - h_1 \right) \left(\partial_{\widetilde{K}_x}{g}_{\Delta}\right)^2 = 8 {E^x} K_{\phi} \partial_{\widetilde{K}_x}f + K_{\phi} (K_{\phi} + 4 {E^x}\widetilde{K}_x ) \partial_{\widetilde{K}_x}^2f\,.
\end{eqnarray}
Now substitute the ansatz $ {g} = {g}_{\Delta}(\tilde{K}_{x}, K_{\phi},E^x,\ltbf) $ back into \eqref{eqapp:g1_eom}, we obtain
\begin{eqnarray}\label{eqapp:LTB_gene}
    0=\frac{(2 h_2 + 4 E^x \partial_{E^x} h_2-h_1){g}}{\partial_x E^x} \partial_{\ltbf} {g} \partial_{\widetilde{K}_x} {g} \ltbf' + \frac{1}{4 (E^x)^{3/2}}\widetilde{F}_{C_0}(\widetilde{K}_x,K_{\phi},E^x,{g},\partial {g},f,\partial f)
\end{eqnarray}
with
\begin{equation}\label{eqapp:def_tfc0}
    \begin{split}
    \widetilde{F}_{C_0}=& \partial_{\widetilde{K}_x}f \Big[{g} \left(K_{\phi}^2-8 {E^x} K_{\phi} \widetilde{K}_x-8 {E^x}^2 \widetilde{K}_x^2\right)+2 {E^x} K_{\phi} (K_{\phi}+4 {E^x} \widetilde{K}_x) \left(\partial_{E^x}{g}+\widetilde{K}_x \partial_{K_{\phi}}{g}\right) \Big]\\
    &+ \Big[ {g}^2 (h_1-2 h_2-2 {E^x} h_1'+8 {E^x} \left(h_2'+{E^x} h_2''\right))
    +(1+f) \left(K_{\phi}^2-8 {E^x} K_{\phi} \widetilde{K}_x-8 {E^x}^2 \widetilde{K}_x^2\right)\\
    &-2 {E^x} \left(2 {g} (h_1-2 h_2) \left(\partial_{E^x}{g}+\widetilde{K}_x \partial_{K_{\phi}}{g}\right)+K_{\phi} (K_{\phi}+4 {E^x} \widetilde{K}_x) \left(\partial_{E^x}f+\widetilde{K}_x \partial_{K_{\phi}}f\right)\right)\\
    &+h_1-2 {E^x} h_1'+8 {E^x} {g} h_2' \left(2 {E^x} \left(\partial_{E^x}{g}+\widetilde{K}_x \partial_{K_{\phi}}{g}\right)\right)
    \Big]\partial_{\widetilde{K}_x}{g} \\
    &-2 {E^x} {g} K_{\phi} (K_{\phi}+4 {E^x} \widetilde{K}_x) \partial_{\widetilde{K}_x}\partial_{E_x}f -2 {E^x} {g} K_{\phi} \widetilde{K}_x (K_{\phi}+4 {E^x} \widetilde{K}_x)\partial_{\widetilde{K}_x}\partial_{K_{\phi}}f \\
    &+2 {E^x} \Big[{g} K_{\phi}( K_{\phi}\partial_{K_{\phi}}f - 4 E^x \partial_{E^x}f)-\left(h_1+{g}^2 (h_1-2 h_2 - 4 E^x h_2') +(1+f) K_{\phi}^2\right) \partial_{K_{\phi}}{g}\\
    &+4 {E^x} (1+f) K_{\phi} \partial_{E^x}{g}- {g} K_{\phi} \left(K_{\phi} + 4 {E^x}\right) \widetilde{K}_x \left(\partial_{\widetilde{K}_x} \partial_{E^x}f+\widetilde{K}_x \partial_{\widetilde{K}_x} \partial_{K_{\phi}}f\right)\Big]\,.
    \end{split}
\end{equation}
Note that when we require 
\begin{eqnarray}
    8 {E^x} \partial_{\widetilde{K}_x}f+(K_{\phi}+4 {E^x} \widetilde{K}_x) \partial_{\widetilde{K}_x}^2f = \partial_{\widetilde{K}_x} {g} =0 \, 
\end{eqnarray}
namely $f$ is given by \eqref{eq:no_kx_f1}, then above equation becomes
\begin{eqnarray}\label{eq:G_expan_ltb}
    \tilde{F}_{C_0} = \mathcal{G}_1+ \mathcal{G}_2
\end{eqnarray}
with
\begin{equation}\label{eqapp:G12_expan_explicit}
    \begin{split}
          \mathcal{G}_1=& 2 {f}^{(2)} \left( {g}_{\Delta} -2 E^x \partial_{E^x} {g}_{\Delta}\right) +{g}_{\Delta} \left(4 E^x \partial_{E^x} {f}^{(2)} - \partial_{K_{\phi}}{f}^{(1)}\right)
        -(h_1+ {f}^{(1)}) \partial_{K_{\phi}} {g}_{\Delta}\\
        \mathcal{G}_2 =&({g}_{\Delta})^2 \left(2 h_2 + 4 E^x \partial_{E^x} h_2 -h_1\right) \partial_{K_{\phi}} {g}_{\Delta}\,.
    \end{split}
\end{equation}

\end{document}